\newcommand*{\mailto}[1]{\href{mailto:#1}{\nolinkurl{#1}}}
\newtheorem{theorem}{Theorem}[section]
\newtheorem{lemma}[theorem]{Lemma}
\newtheorem{corollary}[theorem]{Corollary}
\newtheorem{remark}[theorem]{Remark}
\newtheorem{hypothesis}[theorem]{Hypothesis}
\newcommand{\R}{\mathbb{R}}
\newcommand{\C}{\mathbb{C}}
\newcommand{\nn}{\nonumber}
\newcommand{\be}{\begin{equation}}
\newcommand{\ee}{\end{equation}}
\newcommand{\bea}{\begin{eqnarray}}
\newcommand{\eea}{\end{eqnarray}}
\newcommand{\ol}{\overline}
\newcommand{\ti}{\tilde}
\newcommand{\wti}{\widetilde}
\newcommand{\wha}{\widehat}
\newcommand{\abs}[1]{\lvert#1 \rvert}
\newcommand{\id}{\mathbb{I}}
\newcommand{\I}{\mathrm{i}}
\newcommand{\E}{\mathrm{e}}
\newcommand{\ind}{\mathop{\mathrm{ind}}}
\newcommand{\re}{\mathop{\mathrm{Re}}}
\newcommand{\im}{\mathop{\mathrm{Im}}}
\newcommand{\sech}{\mathop{\mathrm{sech}}}
\DeclareMathOperator{\res}{Res}
\newcommand{\db}{\mathfrak{D}}
\def\Xint#1{\mathchoice
   {\XXint\displaystyle\textstyle{#1}}%
   {\XXint\textstyle\scriptstyle{#1}}%
   {\XXint\scriptstyle\scriptscriptstyle{#1}}%
   {\XXint\scriptscriptstyle\scriptscriptstyle{#1}}%
   \!\int}
\def\XXint#1#2#3{{\setbox0=\hbox{$#1{#2#3}{\int}$}
     \vcenter{\hbox{$#2#3$}}\kern-.5\wd0}}
\def\dashint{\Xint-}
\newcommand{\eps}{\varepsilon}
\newcommand{\gam}{\gamma}
\numberwithin{equation}{section}
\newcommand{\sigI}{\begin{pmatrix} 0 & 1 \\ 1 & 0 \end{pmatrix}}
\newcommand{\rI}{\begin{pmatrix}  1 & 1 \end{pmatrix}}
\newcommand{\rN}{\begin{pmatrix}  0 & 0 \end{pmatrix}}
\begin{document}

\title[Long-Time Asymptotics for the KdV Equation]{Long-Time Asymptotics for the Korteweg--de Vries Equation
via Nonlinear Steepest Descent}

\author[K. Grunert]{Katrin Grunert}
\address{Faculty of Mathematics\\ Nordbergstrasse 15\\ 1090 Wien\\ Austria}
\email{\mailto{katrin.grunert@univie.ac.at}}
\urladdr{\url{http://www.mat.univie.ac.at/~grunert/}}

\author[G. Teschl]{Gerald Teschl}
\address{Faculty of Mathematics\\
Nordbergstrasse 15\\ 1090 Wien\\ Austria\\ and International Erwin Schr\"odinger
Institute for Mathematical Physics\\ Boltzmanngasse 9\\ 1090 Wien\\ Austria}
\email{\mailto{Gerald.Teschl@univie.ac.at}}
\urladdr{\url{http://www.mat.univie.ac.at/~gerald/}}

\thanks{Research supported by the Austrian Science Fund (FWF) under Grant No.\ Y330.}
\thanks{Math. Phys. Anal. Geom. {\bf 12}, 287--324 (2009)}

\keywords{Riemann--Hilbert problem, KdV equation, solitons}
\subjclass[2000]{Primary 37K40, 35Q53; Secondary 37K45, 35Q15}

\begin{abstract}
We apply the method of nonlinear steepest descent to compute the long-time
asymptotics of the Korteweg--de Vries equation for decaying initial data in the soliton and similarity region.
This paper can be viewed as an expository introduction to this method.
\end{abstract}

\maketitle

\section{Introduction}

One of the most famous examples of completely integrable wave equations is the Korteweg--de Vries (KdV) equation
\be\label{kdv}
q_t(x,t)=6q(x,t)q_x(x,t)-q_{xxx}(x,t), \quad (x,t)\in\R\times\R,
\ee
where, as usual, the subscripts denote the differentiation with 
respect to the corresponding variables.

Following the seminal work of Gardner, Green, Kruskal, and Miura \cite{ggkm}, one can use the inverse scattering transform
to establish existence and uniqueness of (real-valued) classical solutions for the corresponding initial value problem
with rapidly decaying initial conditions. We refer to, for instance,
the monographs by Marchenko \cite{mar} or Eckhaus and Van Harten \cite{evh}. Our concern here
are the long-time asymptotics of such solutions. The classical result is that an arbitrary short-range solution of the
above type will eventually split into a number of solitons travelling to the right plus a decaying radiation part
travelling to the left, as illustrated in Figure~\ref{fig1}.
\begin{figure}
\includegraphics[width=8cm]{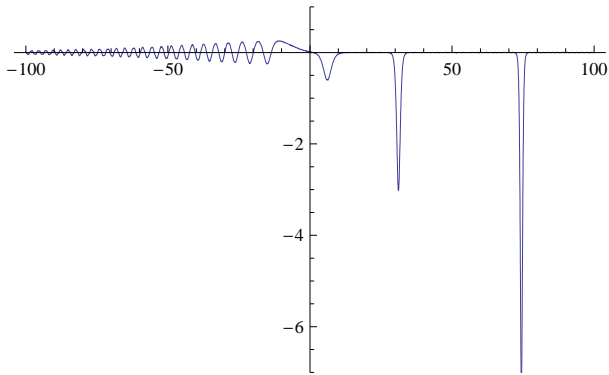}
\caption{Numerically computed solution $q(x,t)$ of the KdV equation at time $t=5$, with initial
condition $q(x,0)=\sech(x+3)-5\sech(x-1)$.} \label{fig1}
\end{figure}
The first numerical evidence for such a behaviour was found by Zabusky and Kruskal \cite{zakr}. The first mathematical
results were given by Ablowitz and Newell \cite{an}, Manakov \cite{ma}, and \v{S}abat \cite{sh}. First rigorous results for the
KdV equation were proved by \v{S}abat \cite{sh} and Tanaka \cite{ta2}
(see also Eckhaus and Schuur \cite{es}, where more detailed error bounds are given). Precise asymptotics for the radiation part were first formally
derived by Zakharov and Manakov \cite{zama}, by Ablowitz and Segur \cite{as}, \cite{as2}, by Buslaev \cite{bu} (see also \cite{bb}),
and later on rigorously justified and extended to all orders by Buslaev and Sukhanov \cite{bs}. A detailed rigorous proof
(not requiring any a priori information on the asymptotic form of the solution) was given by Deift and Zhou \cite{dz} based on earlier
work of Manakov \cite{ma} and Its \cite{its1} (see also \cite{its2}, \cite{its3}, \cite{ip}).
For further information on the history of this problem we refer to the survey by Deift, Its, and Zhou \cite{diz}.

To describe the asymptotics in more detail, we recall the well-known fact (see e.g.\ \cite{dt}, \cite{mar}) that $q(x,t)$ is
uniquely determined by the (right) scattering data of the associated Schr\"odinger operator 
\be
H(t)=-\frac{d^2}{dx^2}+q(x,t).
\ee
The scattering data consist of the (right) reflection coefficient $R(k,t)$, a finite number of ($t$ independent) eigenvalues
$-\kappa_j^2$ with $0<\kappa_1<\kappa_2<\dots<\kappa_N$, and norming constants $\gamma_j(t)$.
We will write $R(k)=R(k,0)$ and $\gamma_j=\gamma_j(0)$ for the scattering data of the initial condition.
Then the long-time asymptotics can be described by distinguishing the following main regions:

(i).\ 
The soliton region, $x/t>C$ for some $C>0$, in which the solution is asymptotically given by a sum of
one-soliton solutions
\be
q(x,t)\sim -2 \sum_{j=1}^N\frac{\kappa_j^2}{\cosh^2(\kappa_j x -4\kappa_j^3 t -p_j)},
\ee
where the phase shifts are given by
\be
p_j=\frac{1}{2}\log\left(\frac{\gamma_j^2}{2\kappa_j}\prod_{l=j+1}^{N}
\left(\frac{\kappa_l-\kappa_j}{\kappa_l+\kappa_j}\right)^2\right).
\ee
In the case of a pure $N$-soliton solution (i.e., $R(k,t)=0$) this was first established independently by Hirota \cite{hi},
Tanaka \cite{ta}, and Wadati and Toda \cite{wt}.
The general case was first established by \v{S}abat \cite{sh} and by Tanaka \cite{ta2} (see also \cite{es} and \cite{suln}).

(ii).\ 
The self-similar region, $|x/(3t)^{1/3}| \leq C$ for some $C>0$, in which the solution is connected with the Painl\'eve II transcendent.
This was first established by Segur and Ablowitz \cite{as2}.

(iii).\ 
The collisionless shock region, $x<0$ and $C^{-1}<\frac{-x}{(3t)^{1/3}(\log(t))^{2/3}}<C$, for some $C>1$, which
only occurs in the generic case (i.e., when $R(0)=-1$). Here the asymptotics can be given in terms of elliptic functions
as was pointed out by Segur and Ablowitz \cite{as2} with further extensions in Deift, Venakides, and Zhou \cite{dvz}.

(vi).\ 
The similarity region, $x/t<-C$ for some $C>0$, where
\be
q(x,t) \sim \left(\frac{4\nu(k_0) k_0}{3t}\right)^{1/2}\sin(16tk^3_0-\nu(k_0)\log(192tk^3_0)+\delta(k_0)),
\ee 
with
\begin{align*}
\nu(k_0)= &-\frac{1}{2\pi}\log(1-\left\vert R(k_0) \right\vert^2),\\
\delta(k_0) =& \frac{\pi}{4}- \arg(R(k_0))+\arg(\Gamma(\I\nu(k_0))) + 4 \sum_{j=1}^N \arctan\big(\frac{\kappa_j}{k_0}\big)\\ & +
\frac{1}{\pi}\int_{-k_0}^{k_0}\log(\left\vert \zeta-k_0 \right\vert) d\log(1-\left\vert R(\zeta) \right\vert^2).
\end{align*}
Here $k_0=\sqrt{-\frac{x}{12t}}$ denotes the stationary phase point, $R(k)=R(k,t=0)$ the reflection coefficient,
and $\Gamma$ the Gamma function.

Again this was found by Zakharov and Manakov \cite{zama} and (without a precise expression for $\delta(k_0)$
and assuming absence of solitons) by Ablowitz and Segur \cite{as} with further extensions by Buslaev and Sukhanov \cite{bs}
as discussed before.

Our aim here is to use the nonlinear steepest descent method for oscillatory Riemann--Hilbert problems from Deift and Zhou \cite{dz}
and apply it to rigorously establish the  long-time asymptotics in the soliton and similarity regions (Theorem~\ref{thm:asym},
respectively, \ref{thm:asym2}, below). In fact, our main goal is to give a complete and expository introduction to this method.
In addition to providing a streamlined and simplified approach, the following items will be different in comparison with \cite{dz}.
First of all, in the mKdV case considered in \cite{dz} there were no solitons present.
We will add them using the ideas from Deift, Kamvissis, Kriecherbauer, and Zhou \cite{dkkz} following Kr\"uger and Teschl \cite{kt}.
However, in the presence of solitons there is a subtle nonuniqueness issue for the involved Riemann--Hilbert problems
(see e.g.\ \cite[Chap.~38]{bdt}). We will rectify this by imposing an additional symmetry condition and prove that this indeed
restores uniqueness. Secondly, in the mKdV case the reflection coefficient $R(k)$ has always modulus strictly less than one. In the KdV case
this is generically not true and hence terms of the form $R(k)/(1-|R(k)|^2)$ will become singular and cannot be approximated by
analytic functions in the sup norm. We will show how to avoid these terms by using left and right (instead of just right) scattering
data on different parts of the jump contour. Consequently it will be sufficient to approximate the left and right reflection coefficients.
Details can be found in Section~\ref{sec:analapprox}. Moreover,
we obtain precise relations between the error terms and the decay of the initial conditions improving the error estimates obtained
in Schuur \cite{suln} (which are stated in terms of smoothness and decay properties of $R(k)$ and its derivatives).

Overall we closely follow the recent review article \cite{kt2}, where Kr\"uger and Teschl applied these methods to 
compute the long-time asymptotics for the Toda lattice.

For a general result which applies in the case where $R(k)$ has modulus strictly less than one and no solitons are present
we refer to Varzugin \cite{va} and for another recent generalization of the nonlinear steepest descent method to McLaughlin
and Miller \cite{mm}. An alternate approach based on the asymptotic theory of pseudodifferential operators was given
by Budylin and Buslaev \cite{bb}.

Finally, note that if $q(x,t)$ solves the KdV equation, then so does $q(-x,-t)$. Therefore it suffices 
to investigate the case $t\to\infty$.

\section{The Inverse scattering transform and the Riemann--Hilbert problem}
\label{sec:istrhp}

In this section we want to derive the Riemann--Hilbert problem for the KdV equation from
scattering theory. This is essentially classical (compare, e.g., \cite{bdt}) except for two points.
The eigenvalues will be added by appropriate pole conditions which are then
turned into jumps following Deift, Kamvissis, Kriecherbauer, and Zhou \cite{dkkz}. We will
impose an additional symmetry conditions to ensure uniqueness later on following
Kr\"uger and Teschl \cite{kt}.

For the necessary results from scattering theory respectively the inverse
scattering transform for the KdV equation we refer to \cite{mar} (see also \cite{bdt} and \cite{dt}).
We consider real-valued classical solutions $q(x,t)$ of the KdV equation \eqref{kdv}, which
decay rapidly, that is 
\be\label{decay}
\max_{|t|\leq T} \int_{\R} (1+|x|) |q(x,t)| dx < \infty, \qquad \text{for all } T>0.
\ee
Existence of such solutions can for example be established via the inverse
scattering transform if one assumes (cf.\  \cite[Sect.~4.2]{mar}) that the initial condition satisfies
\be
\int_{\R} (1+|x|) \left(|q(x,0)|+|q_x(x,0)|+|q_{xx}(x,0)|+|q_{xxx}(x,0)|\right)dx < \infty.
\ee
Associated with $q(x,t)$ is a self-adjoint Schr{\"o}dinger operator
\begin{equation} \label{defjac}
H(t) = -\frac{d^2}{dx^2}+q(.,t), \qquad \db(H)=H^2(\R) \subset L^2(\R).
\end{equation}
Here $L^2(\R)$ denotes the Hilbert space of square integrable (complex-valued) functions
over $\R$ and $H^k(\R)$ the corresponding Sobolev spaces.

By our assumption \eqref{decay} the spectrum of $H$ consists of an absolutely
continuous part $[0,\infty)$ plus a finite number of eigenvalues  $-\kappa_j^2\in(-\infty,0)$,
$1\le j \le N$. In addition, there exist two Jost solutions $\psi_\pm(k,x,t)$
which solve the differential equation
\be 
H(t) \psi_\pm(k,x,t) = k^2 \psi_\pm(k,x,t), \qquad \im (k)> 0,
\ee
and asymptotically look like the free solutions
\be
\lim_{x \to \pm \infty} \E^{\mp  \I kx} \psi_{\pm}(k,x,t) =1.
\ee
Both $\psi_\pm(k,x,t)$ are analytic for $\im (k) > 0$ and continuous
for $\im (k)\geq 0$.

The asymptotics of the two Jost solutions are
\be\label{eq:psiasym}
\psi_\pm(k,x,t) = \E^{\pm \I kx} \Big(1 + Q_{\pm}(x,t) \frac{1}{2\I k} + O\big(\frac{1}{k^2}\big) \Big),
\ee
as $k \to \infty$ with $\im(k)> 0$, where
\be \label{defQ}
\aligned
Q_+(x,t) &= -\int_x^{\infty} q(y,t) dy , \quad
Q_-(x,t)= -\int_{-\infty}^x q(y,t) dy.
\endaligned
\ee

Furthermore, one has the scattering relations
\be \label{relscat}
T(k) \psi_\mp(k,x,t) =  \ol{\psi_\pm(k,x,t)} +
R_\pm(k,t) \psi_\pm(k,x,t),  \qquad k \in \R,
\ee
where $T(k)$, $R_\pm(k,t)$ are the transmission respectively reflection coefficients.
They have the following well-known properties:

\begin{lemma}
The transmission coefficient $T(k)$ is meromorphic for $\im (k) > 0$ 
with simple poles at $\I \kappa _1 , \dots, \I \kappa_N$ and is continuous up to the real line.
The residues of $T(k)$ are given by
\be\label{eq:resT}
\res_{\I \kappa_j} T(k) = \I  \mu _j(t) \gamma _{+,j}(t)^2 = \I \mu _j \gamma _{+,j}^2,
\ee
where
\be
\gam_{+,j}(t)^{-1} = \lVert \psi _+(\I \kappa_j,.,t)\rVert_2
\ee
and $\psi_+ (\I \kappa_j,x,t) = \mu_j(t) \psi_-(\I \kappa_j,x,t)$.

Moreover,
\be \label{reltrpm} 
T(k) \ol{R_+(k,t)} + \ol{T(k)} R_-(k,t)=0, \qquad |T(k)|^2 + |R_\pm(k,t)|^2=1.
\ee
\end{lemma}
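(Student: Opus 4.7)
The plan is to obtain everything from Wronskian identities for the two Jost solutions, treating $t$ as a spectator parameter (so I suppress it from the notation). First I introduce $W(k) = W(\psi_+(k,\cdot), \psi_-(k,\cdot))$, which is $x$-independent because both Jost functions solve the same Schrödinger equation. Using the scattering relations \eqref{relscat} and the Wronskian of $\psi_\pm$ with $\overline{\psi_\pm}$ (which evaluates to $\pm 2\I k$ from the asymptotics \eqref{eq:psiasym}), one extracts
\be
T(k) = \frac{2\I k}{W(k)}, \qquad R_\pm(k) = \mp \frac{W(\overline{\psi_\mp(k,\cdot)}, \psi_\pm(k,\cdot))}{W(k)}. \nn
\ee
Since $\psi_\pm(k,\cdot)$ are analytic in $\im k>0$ and continuous up to the real axis, so is $W(k)$, and therefore $T(k)$ is meromorphic in the upper half-plane and continuous up to $\R$ away from the zeros of $W$.

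Next I would locate the poles of $T$. A zero $k_0$ of $W(k)$ with $\im k_0 > 0$ forces $\psi_+(k_0,\cdot)$ and $\psi_-(k_0,\cdot)$ to be proportional; since both decay exponentially at $\pm\infty$, this produces an $L^2$-eigenfunction of $H$ at eigenvalue $k_0^2$. As the eigenvalues of $H$ are precisely $-\kappa_j^2$ with $\kappa_j>0$, the zeros of $W$ in the upper half-plane are exactly $\I\kappa_1,\dots,\I\kappa_N$, and at each such point $\psi_+(\I\kappa_j,x) = \mu_j \psi_-(\I\kappa_j,x)$. To show the poles are simple, I differentiate the Schrödinger equation in $k$ and integrate by parts: a standard computation (sometimes called the Jost--Pais identity) yields
\be
\dot{W}(\I\kappa_j) = -2\I\kappa_j \,\mu_j \int_\R \psi_+(\I\kappa_j,x)\psi_-(\I\kappa_j,x)\,dx = -2\I\kappa_j\,\mu_j \|\psi_+(\I\kappa_j,\cdot)\|_2^2 / \mu_j \nn
\ee
(using $\psi_- = \mu_j^{-1}\psi_+$ at $k=\I\kappa_j$). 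This rearranges to $\dot W(\I\kappa_j) = -2\I\kappa_j \gam_{+,j}^{-2}$, which is nonzero, proving simplicity. The residue formula then follows from
\be
\res_{\I\kappa_j} T(k) = \frac{2\I(\I\kappa_j)}{\dot W(\I\kappa_j)} = \frac{-2\kappa_j}{-2\I\kappa_j \gam_{+,j}^{-2}/\mu_j} = \I\mu_j \gam_{+,j}^2, \nn
\ee
after carefully tracking the factor $\mu_j$ in the integrand above (this careful accounting, and getting the right sign, is the step I expect to be the most error-prone).

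For the algebraic relations \eqref{reltrpm} I would again use the Wronskian. Evaluating $W(\psi_\pm, \overline{\psi_\pm})$ at $\pm\infty$ using \eqref{eq:psiasym} gives $\pm 2\I k$; evaluating the same Wronskians after substituting the scattering relations \eqref{relscat} for $T\psi_\mp$ on one side, and conjugating and comparing, yields both the modulus-square identity $|T|^2 + |R_\pm|^2 = 1$ and the cross-relation $T\overline{R_+} + \overline{T}R_- = 0$. The only mild subtlety here is bookkeeping of complex conjugates, since for $k\in\R$ one has $\overline{\psi_\pm(k,x)} = \psi_\pm(-k,x)$, but this is purely formal manipulation once the Wronskian framework is in place.
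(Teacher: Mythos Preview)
The paper does not prove this lemma at all: it is introduced with ``They have the following well-known properties'' and the reader is tacitly referred to the scattering-theory references cited at the start of Section~\ref{sec:istrhp} (Marchenko \cite{mar}, Deift--Trubowitz \cite{dt}, Beals--Deift--Tomei \cite{bdt}). What you have written is essentially the standard Wronskian proof one finds in those sources, so in that sense your approach agrees with the paper's---the paper simply outsources it.

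A couple of bookkeeping comments on your sketch. With the convention $W(f,g)=fg'-f'g$ used later in the paper (proof of Lemma~\ref{lem:resonant}), one actually gets $T(k)=-2\I k/W(k)$ rather than $+2\I k/W(k)$; and the formula quoted there is $\dot W(k_1)=-2k_1\int\psi_+\psi_-\,dx$, with no extra factor of $\mu_j$. Tracking this cleanly gives
\[
\dot W(\I\kappa_j)=-2\I\kappa_j\int\psi_+\psi_-\,dx
=-2\I\kappa_j\,\mu_j^{-1}\|\psi_+(\I\kappa_j,\cdot)\|_2^2
=-2\I\kappa_j\,\mu_j^{-1}\gamma_{+,j}^{-2},
\]
and then
\[
\res_{\I\kappa_j}T=\frac{-2\I(\I\kappa_j)}{\dot W(\I\kappa_j)}
=\frac{2\kappa_j}{-2\I\kappa_j\,\mu_j^{-1}\gamma_{+,j}^{-2}}
=\I\mu_j\gamma_{+,j}^2,
\]
as claimed. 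Your displayed intermediate expressions for $\dot W$ are not mutually consistent (the $\mu_j$ migrates from numerator to denominator between lines), though you did flag this step and your final answer is correct. The derivation of \eqref{reltrpm} via Wronskians is exactly right.
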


In particular one reflection coefficient, say $R(k,t)=R_+(k,t)$, and one set of
norming constants, say $\gam_j(t)= \gam_{+,j}(t)$, suffices. Moreover,
the time dependence is given by:

\begin{lemma}
The time evolutions of the quantities $R(k,t)$ and $\gam_j(t)$ are given by
\begin{align}
R(k,t) &= R(k) \E^{8 \I k^3 t},\\
\gam_j(t) &= \gam_j \E^{4 \kappa_j^3 t},
\end{align}
where $R(k)=R(k,0)$ and $\gam_j=\gam_j(0)$.
\end{lemma}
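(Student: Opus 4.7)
The plan is to derive both evolutions from the Lax pair formulation of \eqref{kdv}. Setting $P(t)=-4\partial_x^3+6q\partial_x+3q_x$, a short direct computation shows that \eqref{kdv} is equivalent to the operator identity $H_t=[P,H]$. Consequently $\partial_t-P$ commutes with $H-k^2$, so whenever $(H-k^2)\psi=0$ one also has $(H-k^2)(\partial_t-P)\psi=0$. The first step is to apply this to $\psi_\pm(k,x,t)$ and to identify $(\partial_t-P)\psi_\pm$ as a scalar multiple of $\psi_\pm$ by matching the prescribed behaviour at $\pm\infty$. Since $q(x,t)\to 0$ there, $P$ reduces to $-4\partial_x^3$, which acts on $\E^{\pm\I kx}$ by $\mp 4\I k^3$; together with \eqref{eq:psiasym} this yields
\[
\partial_t\psi_\pm(k,x,t)=P\psi_\pm(k,x,t)\mp 4\I k^3\psi_\pm(k,x,t).
\]

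Next, I would differentiate the scattering relation \eqref{relscat} in $t$ (for real $k$) and substitute. Because $P$ has real coefficients, $\overline{P\psi_+}=P\overline{\psi_+}$, and because $T$, $R_\pm$ depend only on $(k,t)$ they commute with $P$; all $P$-terms then combine via \eqref{relscat} and cancel, leaving
\[
T_t\,\psi_-=R_t\,\psi_+-8\I k^3 R\,\psi_+.
\]
Re-expanding the left-hand side by $T\psi_-=\overline{\psi_+}+R\psi_+$ gives $(T_t/T)\overline{\psi_+}=(R_t-8\I k^3 R - R\,T_t/T)\psi_+$, and the linear independence of $\psi_+$ and $\overline{\psi_+}$ for $k\in\R\setminus\{0\}$ forces $T_t\equiv 0$ and $R_t=8\I k^3 R$. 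Integrating yields $R(k,t)=R(k)\E^{8\I k^3 t}$.

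For the norming constants, I would evaluate the evolutions at $k=\I\kappa_j$, where $\mp 4\I k^3=\mp 4\kappa_j^3$. Differentiating $\psi_+(\I\kappa_j,x,t)=\mu_j(t)\psi_-(\I\kappa_j,x,t)$ in $t$ and substituting, the $\mu_j P\psi_-$ and $P\psi_+$ contributions cancel and one obtains $\mu_j'(t)=-8\kappa_j^3\mu_j(t)$. Since $T(k)$ has just been shown to be time-independent, its residue \eqref{eq:resT}, namely $\I\mu_j(t)\gamma_j(t)^2$, is also time-independent. Combining the two relations gives $\gamma_j(t)^2=\gamma_j^2\,\E^{8\kappa_j^3 t}$, and positivity of $\gamma_j(t)$ yields $\gamma_j(t)=\gamma_j\,\E^{4\kappa_j^3 t}$.

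The hard part is the identification step: showing that $(\partial_t-P)\psi_\pm$ really coincides with $\mp 4\I k^3\psi_\pm$ (and not some other solution of $(H-k^2)\phi=0$) requires a uniqueness argument based on the precise decay of $\psi_\pm$ at $\pm\infty$ given by \eqref{eq:psiasym}, which in turn relies on the decay hypothesis \eqref{decay}. Once this identification is in place, everything else reduces to linear algebra on $2\times 2$ systems and the integration of scalar linear ODEs.
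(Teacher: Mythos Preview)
Your argument is correct and is exactly the standard Lax-pair derivation of the time evolution of the scattering data. The paper, however, does not supply a proof of this lemma at all: it is quoted as a well-known fact from inverse scattering theory, with the background references to Marchenko \cite{mar} (and \cite{bdt}, \cite{dt}) given at the start of Section~\ref{sec:istrhp}. So there is nothing to compare against in the paper beyond noting that your proof fills in what the authors took for granted.

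One minor remark on presentation: the identification step you flag as ``the hard part'' is indeed the only place requiring care, but it does not need the full expansion \eqref{eq:psiasym}; it suffices that $\psi_\pm(k,\cdot,t)\E^{\mp\I kx}\to 1$ together with a bound showing $(\partial_t-P)\psi_\pm$ cannot contain a component proportional to the exponentially growing solution $\E^{\mp\I kx}$ at $\pm\infty$ (for $\im k>0$), which is exactly what the decay hypothesis \eqref{decay} guarantees via the standard Volterra estimates for the Jost solutions. Everything else in your write-up is clean.
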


We will set up a Riemann--Hilbert problem as follows:
\be\label{defm}
m(k,x,t)= \left\{\begin{array}{c@{\quad}l}
\begin{pmatrix} T(k) \psi_-(k,x,t) \E^{\I kx}  & \psi_+(k,x,t) \E^{-\I kx} \end{pmatrix},
& \im (k) > 0,\\
\begin{pmatrix} \psi_+(-k,x,t) \E^{\I kx} & T(-k) \psi_-(-k,x,t) \E^{-\I kx} \end{pmatrix}, 
& \im(k) < 0.
\end{array}\right.
\ee
We are interested in the jump condition of $m(k,x,t)$ on the real axis $\R$ (oriented
from negative to positive).
To formulate our jump condition we use the following convention:
When representing functions on $\R$, the lower subscript denotes
the non-tangential limit from different sides.
By $m_+(k)$ we denote the limit from above and by $m_-(k)$ the one from below.
Using the notation above implicitly assumes that these limits exist in the sense that
$m(k)$ extends to a continuous function on the real axis.
In general, for an oriented contour $\Sigma$, $m_+(k)$ (resp.\ $m_-(k)$) will denote the limit
of $m(\kappa)$ as $\kappa\to k$ from the positive (resp.\ negative) side of $\Sigma$. Here
the positive (resp.\ negative) side is the one which lies to the left (resp.\ right) as one traverses the contour in the
direction of the orientation.

\begin{theorem}\label{thm:vecrhp}
Let $\mathcal{S}_+(H(0))=\{ R(k),\; k\geq 0; \: (\kappa_j, \gam_j), \: 1\le j \le N \}$ be
the right scattering data of the operator $H(0)$. Then $m(k)=m(k,x,t)$ defined in \eqref{defm}
is a solution of the following vector Riemann--Hilbert problem.

Find a function $m(k)$ which is meromorphic away from the real axis with simple poles at
$\pm\I\kappa_j$ and satisfies:
\begin{enumerate}
\item The jump condition
\be \label{eq:jumpcond}
m_+(k)=m_-(k) v(k), \qquad
v(k)=\begin{pmatrix}
1-|R(k)|^2 & - \ol{R(k)} \E^{-t\Phi(k)} \\ 
R(k) \E^{t\Phi(k)} & 1
\end{pmatrix},
\ee
for $k \in \R$,
\item
the pole conditions
\be\label{eq:polecond}
\aligned
\res_{\I\kappa_j} m(k) &= \lim_{k\to\I\kappa_j} m(k)
\begin{pmatrix} 0 & 0\\ \I \gam_j^2 \E^{t\Phi(\I \kappa_j)}  & 0 \end{pmatrix},\\
\res_{-\I\kappa_j} m(k) &= \lim_{k\to -\I\kappa_j} m(k)
\begin{pmatrix} 0 & - \I \gam_j^2 \E^{t\Phi(\I \kappa_j)} \\ 0 & 0 \end{pmatrix},
\endaligned
\ee
\item
the symmetry condition
\be \label{eq:symcond}
m(-k) = m(k) \sigI,
\ee
\item
and the normalization
\be\label{eq:normcond}
\lim_{\kappa\to\infty} m(\I\kappa) = (1\quad 1).
\ee
\end{enumerate}
Here the phase is given by
\begin{equation}
\Phi(k)= 8 \I k^3+2\I k \frac {x}{t}.
\end{equation}
\end{theorem}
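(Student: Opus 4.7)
The plan is to verify conditions (i)--(iv) one at a time, drawing on the scattering-theoretic input provided by the two preceding lemmas: the meromorphic structure and residue formula for $T(k)$, the scattering relations \eqref{relscat}, and the time evolutions $R(k,t) = R(k)\E^{8\I k^3 t}$, $\gam_j(t)^2 = \gam_j^2\E^{8\kappa_j^3 t}$. I would first dispatch the three easier items. The meromorphic structure with simple poles at $\pm\I\kappa_j$ is immediate from \eqref{defm} together with the corresponding statement for $T(k)$. The symmetry \eqref{eq:symcond} is essentially built into \eqref{defm}: substituting $k\mapsto -k$ in the upper half-plane definition reproduces the lower half-plane row but with its entries swapped, so $m(-k) = m(k)\sigI$ follows without any use of scattering data. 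The normalization is then read off from the asymptotics \eqref{eq:psiasym} together with $T(\I\kappa)\to 1$ as $\kappa\to\infty$.

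The main computational step is the jump relation. Taking the boundary values from \eqref{defm} on $\R$, I would use the Schwarz reflection identities $\psi_\pm(-k,x,t) = \ol{\psi_\pm(k,x,t)}$ and $T(-k) = \ol{T(k)}$ (valid for real $k$ because $q$ is real) to express $m_-(k)$ in terms of $\psi_+(k,x,t)$, $\ol{\psi_+(k,x,t)}$, $T(k)$ and $R_+(k,t)$. The scattering relation \eqref{relscat}, written as $T(k)\psi_-(k,x,t) = \ol{\psi_+(k,x,t)} + R_+(k,t)\psi_+(k,x,t)$, does the same for $m_+(k)$. After absorbing the plane-wave factors into $\E^{t\Phi(k)}$ via the key identity $R(k)\E^{t\Phi(k)} = R_+(k,t)\E^{2\I kx}$, the verification of $m_+ = m_- v$ reduces to a two-by-two matrix product where the identity $|T(k)|^2 + |R(k)|^2 = 1$ does the remaining work. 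This careful bookkeeping of conjugations and exponential factors is the main obstacle in the proof; once the substitutions are lined up correctly it is essentially routine.

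For the pole conditions I would note that only the first component of $m(k)$ is singular at $k = \I\kappa_j$, so the zero entries in the matrix of \eqref{eq:polecond} are automatic. The residue formula \eqref{eq:resT} gives the residue of this component in terms of $\psi_-(\I\kappa_j,x,t)$; converting to $\psi_+(\I\kappa_j,x,t)$ via $\psi_+ = \mu_j\psi_-$ cancels the factor $\mu_j$, and combining $\gam_j(t)^2 = \gam_j^2\E^{8\kappa_j^3 t}$ with the remaining $\E^{-2\kappa_j x}$ produces exactly $\I\gam_j^2\E^{t\Phi(\I\kappa_j)}$ times the second component of $m(\I\kappa_j)$, since $t\Phi(\I\kappa_j) = 8\kappa_j^3 t - 2\kappa_j x$. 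The residue condition at $-\I\kappa_j$ then follows directly from the symmetry already verified.
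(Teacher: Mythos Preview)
Your proof is correct and follows essentially the same route as the paper's own (very terse) argument: the jump from the scattering relations \eqref{relscat} together with \eqref{reltrpm}, the pole conditions from the residue formula \eqref{eq:resT}, the symmetry by construction, and the normalization from the asymptotics of $\psi_\pm$ (which the paper packages as Lemma~\ref{lem:asymp}). Your write-up simply supplies the bookkeeping details that the paper suppresses, including the observation that the pole condition at $-\I\kappa_j$ follows from symmetry, which the paper also notes immediately after the proof.
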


\begin{proof}
The jump condition \eqref{eq:jumpcond} is a simple calculation using the scattering relations
\eqref{relscat} plus \eqref{reltrpm}. The pole conditions follow since $T(k)$ is meromorphic for $\im (k) > 0$
with simple poles at $\I \kappa_j$ and residues given by \eqref{eq:resT}.
The symmetry condition holds by construction and the normalization \eqref{eq:normcond}
is immediate from the following lemma below.
\end{proof}

Observe that the pole condition at $\I \kappa_j$ is sufficient since the one at $-\I \kappa_j$ follows
by symmetry.

Moreover, using
\be\label{eq:asygf}
T(k)\psi_-(k,x,t)\psi_+(k,x,t)=1+\frac{q(x,t)}{2k^2} + O(\frac{1}{k^4})
\ee
as $k \to \infty$ with $\im(k)> 0$ (observe that the right-hand side is just the diagonal Green's functions of $H(t)$ divided by
the free one) we obtain from \eqref{eq:psiasym}

\begin{lemma}\label{lem:asymp}
The function $m(k,x,t)$ defined in \eqref{defm} satisfies
\be\label{eq:asym}
m(k,x,t)=\begin{pmatrix} 1 & 1 \end{pmatrix}+
Q(x,t)\frac{1}{2\I k} \begin{pmatrix} -1 & 1 \end{pmatrix} +O\left(\frac{1}{k^2}\right).
\ee
Here $Q(x,t)=Q_+(x,t)$ is defined in \eqref{defQ}.
\end{lemma}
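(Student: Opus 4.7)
The plan is to read off the asymptotics directly from the ingredients \eqref{eq:psiasym} and \eqref{eq:asygf} already assembled, and then verify that the symmetry \eqref{eq:symcond} makes the two branches of the definition \eqref{defm} compatible.

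First I would work in the upper half-plane, $\im(k)>0$. For the second component of $m$, the expansion \eqref{eq:psiasym} gives directly
\begin{equation*}
\psi_+(k,x,t)\E^{-\I kx}=1+Q_+(x,t)\frac{1}{2\I k}+O\!\left(\frac{1}{k^{2}}\right).
\end{equation*}
For the first component, I would divide the product asymptotics \eqref{eq:asygf} by this expansion:
\begin{equation*}
T(k)\psi_-(k,x,t)\E^{\I kx}=\frac{T(k)\psi_-(k,x,t)\psi_+(k,x,t)}{\psi_+(k,x,t)\E^{-\I kx}}=\frac{1+O(k^{-2})}{1+Q_+(x,t)/(2\I k)+O(k^{-2})},
\end{equation*}
and expand the denominator geometrically to obtain $1-Q_+(x,t)/(2\I k)+O(k^{-2})$. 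Setting $Q(x,t)=Q_+(x,t)$ this gives the claimed formula for $\im(k)>0$.

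Next I would deal with the lower half-plane. The cleanest route is to invoke the symmetry condition \eqref{eq:symcond}: if $k$ lies in the lower half-plane and $|k|\to\infty$, then $-k$ lies in the upper half-plane, so
\begin{equation*}
m(k,x,t)=m(-k,x,t)\sigI =\Big(\begin{pmatrix} 1 & 1 \end{pmatrix}+Q(x,t)\frac{1}{-2\I k}\begin{pmatrix} -1 & 1 \end{pmatrix}+O(k^{-2})\Big)\sigI,
\end{equation*}
and multiplying by $\sigI$ swaps the two entries, which together with the sign change in $-2\I k$ reproduces exactly the same expansion $(1,1)+Q(x,t)(2\I k)^{-1}(-1,1)+O(k^{-2})$. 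Alternatively, one can repeat the upper-half-plane argument with $k$ replaced by $-k$ in \eqref{defm}, using that $\im(-k)>0$, and the identical expression drops out.

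The only mildly subtle point is that the $O(k^{-2})$ error terms in \eqref{eq:psiasym} and \eqref{eq:asygf} are differentiable and uniform in $x$ on compact sets (this is built into the standard asymptotic expansions of the Jost solutions for decaying $q$); this uniformity is what licenses the geometric-series step used to invert $\psi_+\E^{-\I kx}$. No genuine obstacle arises — the lemma is essentially a bookkeeping consequence of the already-cited asymptotics together with the symmetry \eqref{eq:symcond}.
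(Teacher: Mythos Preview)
Your argument is correct and is exactly the one the paper intends: the paper does not give a separate proof but simply prefaces the lemma with ``using \eqref{eq:asygf} \dots\ we obtain from \eqref{eq:psiasym}'', which is precisely your computation of dividing \eqref{eq:asygf} by the expansion of $\psi_+\E^{-\I kx}$ for the first component and reading off the second component directly. Your use of the symmetry \eqref{eq:symcond} to cover the lower half-plane is the natural completion of that one-line sketch.
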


For our further analysis it will be convenient to rewrite the pole condition as a jump
condition and hence turn our meromorphic Riemann--Hilbert problem into a holomorphic Riemann--Hilbert problem following \cite{dkkz}.
Choose $\eps$ so small that the discs $\left\vert k- \I \kappa_j \right\vert<\eps$ lie inside the upper half plane and
do not intersect. Then redefine $m(k)$ in a neighborhood of $\I \kappa_j$ respectively $- \I \kappa_j$ according to
\be\label{eq:redefm}
m(k) = \begin{cases} m(k) \begin{pmatrix} 1 & 0 \\
-\frac{\I \gamma_j^2 \E^{t\Phi(\I \kappa_j)} }{k- \I \kappa_j} & 1 \end{pmatrix},  &
|k- \I \kappa_j|< \eps,\\
m(k) \begin{pmatrix} 1 & \frac{\I \gamma_j^2 \E^{t\Phi(\I \kappa_j)} }{k+ \I \kappa_j} \\
0 & 1 \end{pmatrix},  &
|k+ \I \kappa_j|< \eps,\\
m(k), & \text{else}.\end{cases}
\ee
 Note that for $\im (k) <0$ we redefined $m(k)$ such that it respects our symmetry \eqref{eq:symcond}. Then a straightforward calculation using
$\res_{\I \kappa} m(k) = \lim_{k\to\I\kappa} (k-\I \kappa)m(k)$ shows:

\begin{lemma}\label{lem:holrhp}
Suppose $m(k)$ is redefined as in \eqref{eq:redefm}. Then $m(k)$ is holomorphic away from
the real axis and the small circles around $\I \kappa_j$ and $-\I\kappa_j$. Furthermore it satisfies \eqref{eq:jumpcond}, \eqref{eq:symcond}, \eqref{eq:normcond}
and the pole condition is replaced by the jump condition
\be \label{eq:jumpcond2}
\aligned
m_+(k) &= m_-(k) \begin{pmatrix} 1 & 0 \\
-\frac{\I \gamma_j^2 \E^{t\Phi(\I\kappa_j)}}{k-\I \kappa_j} & 1 \end{pmatrix},\quad |k-\I \kappa_j|=\eps,\\
m_+(k) &= m_-(k) \begin{pmatrix} 1 & -\frac{\I \gamma_j^2 \E^{t\Phi(\I \kappa_j)}}{k+ \I \kappa_j} \\
0 & 1 \end{pmatrix},\quad |k+ \I \kappa_j|=\eps,
\endaligned
\ee
where the small circle around $\I \kappa_j$ is oriented counterclockwise and the one around $-\I \kappa_j$ is oriented clockwise.
\end{lemma}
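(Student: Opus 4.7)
The plan is to verify the four asserted properties of the redefined $m(k)$ separately. The bulk of the work is to show that the simple poles of the original meromorphic $m(k)$ at $\pm\I\kappa_j$ become removable after multiplication by the singular factors in \eqref{eq:redefm}; once this is done, reading off the jumps on the small circles is just a matter of orientation bookkeeping.

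For holomorphy at $\I\kappa_j$, I would write the row vector as $m(k)=(m_1(k),m_2(k))$ near $\I\kappa_j$. The pole condition \eqref{eq:polecond} at $\I\kappa_j$ forces the residue of $m(k)$ to have a vanishing second component, so $m_2$ is already holomorphic there, and it identifies the residue of $m_1$ with $\I\gam_j^2 \E^{t\Phi(\I\kappa_j)} m_2(\I\kappa_j)$. A direct computation of the product with the triangular factor in the first branch of \eqref{eq:redefm} leaves the second component unchanged and replaces the first by $m_1(k) - \frac{\I\gam_j^2 \E^{t\Phi(\I\kappa_j)}}{k-\I\kappa_j} m_2(k)$, whose principal part at $\I\kappa_j$ cancels identically by the residue formula. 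The analogous statement at $-\I\kappa_j$ follows from the symmetric pole condition there (which is itself a consequence of \eqref{eq:symcond} applied to the original $m$). Thus the redefined $m(k)$ has no poles anywhere away from $\R$ and the two families of circles.

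For the circle jumps, I would just apply the orientation convention from the excerpt. On $|k-\I\kappa_j|=\eps$ oriented counterclockwise, the interior is the positive side, so $m_+(k)$ equals the redefined value and $m_-(k)$ equals the old (exterior) value; this gives exactly the first line of \eqref{eq:jumpcond2}. On $|k+\I\kappa_j|=\eps$ oriented clockwise the situation is reversed: the interior is the negative side, so $m_+$ is the old value while $m_-$ is the redefined value, yielding $m_+ = m_-\, M^{-1}$, where $M$ is the triangular factor in the second branch of \eqref{eq:redefm}. Inverting $M$ merely flips the sign of its off-diagonal entry, reproducing the second line of \eqref{eq:jumpcond2}.

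The remaining items are immediate. The jump \eqref{eq:jumpcond} on $\R$ is untouched because $\eps$ was chosen so that the disks lie strictly off the real axis. The normalization \eqref{eq:normcond} persists since $m(\I\kappa)$ is unchanged for large $\kappa$, the redefinition being confined to bounded disks. Finally, the symmetry \eqref{eq:symcond} is preserved: a direct $2\times 2$ computation shows that the two triangular factors in \eqref{eq:redefm} are intertwined by $\sigI$ under $k\mapsto -k$, and combining this with the symmetry of the old $m$ gives the symmetry of the new one. The only real subtlety is the orientation bookkeeping on the clockwise circle, which is what accounts for the apparent sign difference between the off-diagonal entry in \eqref{eq:redefm} and in the second line of \eqref{eq:jumpcond2}; no deep analysis is required anywhere in the proof.
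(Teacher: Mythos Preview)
Your proof is correct and is precisely the ``straightforward calculation'' the paper alludes to but does not write out; the paper's entire argument for this lemma is the single sentence preceding it, invoking $\res_{\I\kappa} m(k)=\lim_{k\to\I\kappa}(k-\I\kappa)m(k)$. Your detailed verification of holomorphy via the residue cancellation, of the jump matrices via the orientation convention (including the inversion on the clockwise circle), and of the preservation of \eqref{eq:jumpcond}, \eqref{eq:symcond}, \eqref{eq:normcond} is exactly what that sentence is gesturing at.
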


Next we turn to uniqueness of the solution of this vector Riemann--Hilbert problem. This will also explain the
reason for our symmetry condition. We begin by observing that if
there is a point $k_1\in\C$, such that $m(k_1)=\rN$, then $n(k)=\frac{1}{k-k_1} m(k)$
satisfies the same jump and pole conditions as $m(k)$. However, it will clearly
violate the symmetry condition! Hence, without the symmetry condition, the solution
of our vector Riemann--Hilbert problem will not be unique in such a situation. Moreover, a look at the
one-soliton solution verifies that this case indeed can happen. 

\begin{lemma}[One-soliton solution]\label{lem:singlesoliton}
Suppose there is only one eigenvalue and that the reflection coefficient vanishes, that is,
$\mathcal{S}_+(H(t))=\{ R(k,t)\equiv 0,\; k\in\R; \: (\kappa, \gam (t)), \kappa>0, \gamma>0 \}$.
Then the unique solution of the Riemann--Hilbert problem \eqref{eq:jumpcond}--\eqref{eq:normcond}
is given by
\begin{align}\label{eq:oss}
m_0(k) &= \begin{pmatrix} f(k) & f(-k) \end{pmatrix} \\
\nn f(k) &= \frac{1}{1+(2\kappa)^{-1} \gamma^2 \E^{t\Phi(\I\kappa)}}
\left(1+\frac{k+\I\kappa}{k-\I\kappa} (2\kappa)^{-1} \gamma^2 \E^{t\Phi(\I\kappa)}\right).
\end{align}
Furthermore, the zero solution is the only solution of the corresponding vanishing problem where
the normalization is replaced by $\lim_{k\to\infty} m(\I k) = (0\quad 0)$.

In particular,
\be
Q(x,t)=\frac{2 \gamma^2 \E^{t\Phi(\I \kappa)}}{1+(2\kappa)^{-1} \gamma^2 \E^{t\Phi(\I\kappa)}}.
\ee
\end{lemma}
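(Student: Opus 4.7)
The plan is to use $R\equiv 0$ to kill the jump across $\R$, then invoke the symmetry \eqref{eq:symcond} to reduce the vector Riemann--Hilbert problem to a scalar rational problem with a single pole, which essentially solves itself.

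With $R\equiv 0$ the jump matrix $v(k)$ in \eqref{eq:jumpcond} is the identity, so $m(k)$ extends continuously across $\R$ and is therefore meromorphic on all of $\C$ with simple poles only at $\pm\I\kappa$. Writing $m(k)=(m_1(k),m_2(k))$, the symmetry \eqref{eq:symcond} yields $m_2(k)=m_1(-k)$, so everything is encoded in the scalar $f:=m_1$. The off-diagonal structure of the residue condition \eqref{eq:polecond} at $\I\kappa$ gives $\res_{\I\kappa} m_2=0$, hence $m_2$ is regular at $\I\kappa$; equivalently, by symmetry, $f$ is regular at $-\I\kappa$. Combined with boundedness at infinity and the normalization \eqref{eq:normcond}, this forces
\[
f(k)=1+\frac{c}{k-\I\kappa}
\]
for a single unknown constant $c\in\C$. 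Inserting this ansatz into the residue relation $c=\I\gamma^2\E^{t\Phi(\I\kappa)}\,f(-\I\kappa)$ and using $f(-\I\kappa)=1+c/(-2\I\kappa)$ produces a linear equation for $c$; a direct simplification shows that the resulting $f(k)$ matches the expression in \eqref{eq:oss}. Uniqueness is built in, since every step above is forced.

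For the vanishing problem the same reduction gives $f(k)=c/(k-\I\kappa)$, and the residue condition collapses to $c=-\alpha c$, where $\alpha:=(2\kappa)^{-1}\gamma^2\E^{t\Phi(\I\kappa)}>0$. Hence $c=0$, so $m\equiv 0$. Finally, expanding $f(k)=1+c/k+O(k^{-2})$ and comparing with Lemma~\ref{lem:asymp} identifies $Q(x,t)=-2\I c$; substituting the computed value of $c$ yields the stated closed form.

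The only delicate step is the reduction in paragraph two, and this is precisely where the symmetry \eqref{eq:symcond} earns its keep: without it the two residue conditions do not decouple the poles between the components, so $f$ could carry a simple pole at $-\I\kappa$ as well, leaving an unconstrained free parameter and the genuine nonuniqueness flagged in the discussion preceding the lemma.
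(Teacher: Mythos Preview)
Your argument is correct and follows essentially the same route as the paper: kill the jump on $\R$, use the symmetry to write $m_0=(f(k),f(-k))$, observe that $f$ is rational with a single simple pole at $\I\kappa$, and determine the remaining constant(s) from the pole condition and normalization. The only cosmetic difference is the ansatz: the paper parameterizes $f(k)=C+D\tfrac{k+\I\kappa}{k-\I\kappa}$ and solves for $C,D$, while you normalize first and write $f(k)=1+\tfrac{c}{k-\I\kappa}$; your version has the minor advantage of making the vanishing-problem computation and the extraction of $Q=-2\I c$ completely transparent, which the paper leaves implicit.
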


\begin{proof}
By assumption the reflection coefficient vanishes and so the jump along the real axis 
disappears. Therefore and by the symmetry condition, we know that the solution is of the form 
$m_0(k) = \begin{pmatrix} f(k) & f(-k) \end{pmatrix}$ where $f(k)$ is meromorphic. Furthermore the function 
$f(k)$ has only a simple pole at $\I\kappa$, so that we can make the ansatz 
$f(k)=C+D\frac{k+\I\kappa}{k-\I\kappa}$. Then the constants $C$ and $D$ are uniquely determined by the 
pole conditions and the normalization.
\end{proof}

In fact, observe $f(k_1)=f(-k_1)=0$ if and only if $k_1=0$ and $2\kappa=\gamma^2 \E^{t\Phi (\I\kappa)} $.
Furthermore, even in the general case $m(k_1)=\rN$ can only occur at $k_1=0$ as the
following lemma shows.

\begin{lemma}\label{lem:resonant}
If $m(k_1) = \rN$ for $m$ defined as in \eqref{defm}, then $k_1  = 0$. Moreover,
the zero of at least one component is simple in this case.
\end{lemma}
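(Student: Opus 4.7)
My plan is to use the explicit formula \eqref{defm} for $m(k,x,t)$ together with the constancy in $x$ of the Wronskian of the Jost solutions. By the symmetry \eqref{eq:symcond} it suffices to rule out $k_1$ with $\im(k_1)>0$ or $k_1\in\R\setminus\{0\}$.

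Suppose first $\im(k_1)>0$. Since $T(k)$ has simple poles exactly at the $\I\kappa_j$ in the upper half-plane and no zeros there, the first component of $m$ in \eqref{defm} has simple poles at the $\I\kappa_j$, so $m(k_1)=\rN$ forces $k_1\notin\{\I\kappa_1,\dots,\I\kappa_N\}$, and $T(k_1)\neq 0$. Since the exponentials $\E^{\pm\I k_1 x}$ are nonvanishing, \eqref{defm} together with $m(k_1)=\rN$ yields $\psi_+(k_1,x,t)=\psi_-(k_1,x,t)=0$. Hence the Wronskian $W(\psi_-(k_1,\cdot,t),\psi_+(k_1,\cdot,t))$ vanishes at $x$ and, being $x$-independent, vanishes identically; so $\psi_\pm(k_1,\cdot,t)$ are linearly dependent, which only happens at eigenvalues---contradicting $k_1\notin\{\I\kappa_j\}$. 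The case $k_1\in\R\setminus\{0\}$ is analogous: under \eqref{decay}, $T(k_1)\neq 0$ on $\R\setminus\{0\}$, the same argument yields $\psi_\pm(k_1,x,t)=0$, and the Jost solutions being linearly independent there (their Wronskian is a nonzero multiple of $k_1/T(k_1)$) produces a contradiction. Therefore $k_1=0$.

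For the simplicity assertion, I focus on the second component $m_2(k)=\psi_+(k,x,t)\E^{-\I kx}$. We have $m_2(0)=\psi_+(0,x,t)=0$ and $m_2'(0)=\partial_k\psi_+(0,x,t)$. Differentiating the Schr\"odinger equation $-\psi_+''+q\psi_+=k^2\psi_+$ in $k$ at $k=0$ shows that $\partial_k\psi_+(0,\cdot,t)$ is also a solution of $-f''+qf=0$. From the Volterra integral representation of $\psi_+$ one reads off the asymptotics $\psi_+(0,x,t)\to 1$ and $\partial_k\psi_+(0,x,t)\sim\I x$ as $x\to+\infty$, so these two solutions of the same second-order ODE have different growth at $+\infty$ and are therefore linearly independent. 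Their constant-in-$x$ Wronskian is then nonzero, so they cannot vanish simultaneously at any $x$. Since $\psi_+(0,x,t)=0$ by hypothesis, $\partial_k\psi_+(0,x,t)\neq 0$, so $m_2$ has a simple zero at $k_1=0$.

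The main obstacle is the simplicity step: it requires the small-$k$ asymptotic $\partial_k\psi_+(0,x,t)\sim \I x$ as $x\to+\infty$, which goes beyond the large-$k$ expansion \eqref{eq:psiasym} and needs to be extracted from the Volterra equation defining $\psi_+(k,x,t)$.
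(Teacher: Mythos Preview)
Your argument for the second assertion (simplicity at $k_1=0$) is correct and is in fact a pleasant alternative to the paper's route. The paper computes $\dot W(0)=-\I(c+c^{-1})\ne 0$ directly, where $\psi_+(0,\cdot)=c\,\psi_-(0,\cdot)$, and then notes that a double zero of both $\psi_\pm(\cdot,x)$ at $k=0$ would force $\dot W(0)=0$. You instead observe that $\partial_k\psi_+(0,\cdot,t)$ solves the same zero-energy equation as $\psi_+(0,\cdot,t)$ but with asymptotics $\sim\I x$ versus $\sim 1$ at $+\infty$, so their (constant in $x$) Wronskian is nonzero and they cannot vanish simultaneously. This gives directly that $m_2$ has a simple zero; the asymptotic $\partial_k\psi_+(0,x,t)\sim\I x$ is standard from the Volterra equation, so your stated obstacle is minor.

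However, there is a genuine gap in the first part. Your exclusion of $k_1=\I\kappa_j$ rests on the claim that the first component $m_1(k)=T(k)\psi_-(k,x,t)\E^{\I kx}$ has an honest simple pole at $\I\kappa_j$. This fails precisely when the fixed $x$ happens to be a zero of the eigenfunction $\psi_-(\I\kappa_j,\cdot,t)$ (equivalently of $\psi_+(\I\kappa_j,\cdot,t)$, since they are proportional): then the residue condition \eqref{eq:polecond} gives $\res_{\I\kappa_j}m_1=0$, the pole is removable, and one must still rule out $m_1(\I\kappa_j)=0$. That requires bounding the order of the zero of $k\mapsto\psi_-(k,x,t)$ at $\I\kappa_j$ by one, so that the simple pole of $T$ wins. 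The paper supplies exactly this missing ingredient via the Herglotz property of the diagonal Green's function $g(z,x)=W(k)^{-1}\psi_+(k,x)\psi_-(k,x)$: being Herglotz in $z=-k^2$, it has at most a simple zero there, and since $W$ has a simple zero at $\I\kappa_j$ this forces each $\psi_\pm(\cdot,x)$ to vanish at most simply, whence $T(k)\psi_-(k,x)$ cannot vanish at $\I\kappa_j$. Without this (or an equivalent device) your step ``$m_1$ has simple poles at the $\I\kappa_j$, hence $k_1\notin\{\I\kappa_j\}$'' does not go through for all $(x,t)$.
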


\begin{proof}
By \eqref{defm} the condition $m(k_1) = \rN$ implies that the Jost solutions $\psi_-(k,x)$ and
$\psi_+(k,x)$ are linearly dependent or that the transmission coefficient $T(k_1)=0$. This can only happen, at the band edge, $k_1 = 0$
or at an eigenvalue $k_1=\I\kappa_j$.

We begin with the case $k_1=\I\kappa_j$. In this case the derivative of the Wronskian
$W(k)= \psi_+(k,x)\psi_-'(k,x)-\psi_+'(k,x)\psi_-(k,x)$ does not vanish by the well-known formula
$\frac{d}{dk} W(k) |_{k=k_1} = - 2k_1\int_\R \psi_+(k_1,x)  \psi_-(k_1,x) dx \ne 0$. Moreover,
the diagonal Green's function $g(z,x)= W(k)^{-1} \psi_+(k,x) \psi_-(k,x)$ is
Herglotz as a function of $z=-k^2$ and hence can have at most a simple zero at $z=-k_1^2$. 
Since $z\to-k^2$ is conformal away from $z=0$ the same is true as a function of $k$. Hence, if
$\psi_+(\I\kappa_j,x) = \psi_-(\I\kappa_j,x) =0$, both can have at most a simple zero at $k=\I\kappa_j$.
But $T(k)$ has a simple pole at $\I\kappa_j$ and hence $T(k) \psi_-(k,x)$ cannot
vanish at $k=\I\kappa_j$, a contradiction.

It remains to show that one zero is simple in the case $k_1=0$. In fact, 
one can show that $\frac{d}{dk} W(k) |_{k=k_1} \ne 0$ in this case as follows:
First of all note that $\dot{\psi}_\pm(k)$ (where the dot denotes the derivative with respect to
$k$) again solves $H\dot{\psi}_\pm(k_1) = -k_1^2 \dot{\psi}_\pm(k_1)$ if $k_1=0$. Moreover, by
$W(k_1)=0$ we have $\psi_+(k_1) = c\, \psi_-(k_1)$ for some constant $c$ (independent of $x$).
Thus we can compute
\begin{align*}
\dot{W}(k_1) &= W(\dot{\psi}_+(k_1),\psi_-(k_1)) + W(\psi_+(k_1),\dot{\psi}_-(k_1))\\
&= c^{-1} W(\dot{\psi}_+(k_1),\psi_+(k_1)) + c W(\psi_-(k_1),\dot{\psi}_-(k_1))
\end{align*}
by letting $x\to+\infty$ for the first and $x\to-\infty$ for the second Wronskian (in which case we can
replace $\psi_\pm(k)$ by $\E^{\pm \I k x}$),
which gives
\[
\dot{W}(k_1) = -\I(c+c^{-1}).
\]
Hence the Wronskian has a simple zero. But if both functions had more than
simple zeros, so would the Wronskian, a contradiction.
\end{proof}

\section{A uniqueness result for symmetric vector Riemann--Hilbert problems}
\label{sec:uni}

In this section we want to investigate uniqueness for the holomorphic vector Riemann--Hilbert problem 
\begin{align}\nn
& m_+(k) = m_-(k) v(k), \qquad k\in \Sigma,\\ \label{eq:rhp4m}
& m(-k) = m(k) \sigI,\\ \nn
& \lim_{\kappa\to\infty} m(\I\kappa) = \begin{pmatrix} 1 & 1\end{pmatrix},
\end{align}
where we assume

\begin{hypothesis}\label{hyp:sym}
Let $\Sigma$ consist of a finite number of smooth oriented curves in $\C$ such that the distance
between $\Sigma$ and $\{ \I y | y\ge y_0\}$ is positive for some $y_0>0$.
Assume that the contour $\Sigma$ is invariant under $k\mapsto -k$ and $v(k)$ is symmetric
\be
v(-k) = \sigI v(k)^{-1} \sigI,\quad k\in\Sigma.
\ee
Moreover, suppose $\det(v(k))=1$.
\end{hypothesis}

Now we are ready to show that the symmetry condition in fact guarantees uniqueness.

\begin{theorem}
Assume Hypothesis~\ref{hyp:sym}.
Suppose there exists a solution $m(k)$ of the Riemann--Hilbert problem \eqref{eq:rhp4m} for which
$m(k)=\begin{pmatrix} 0 & 0\end{pmatrix}$ can happen at most for $k=0$ in which case
$\limsup_{k\to 0} \frac{k}{m_j(k)}$ is bounded from any direction for $j=1$ or $j=2$.

Then the Riemann--Hilbert problem \eqref{eq:rhp4m} with norming condition replaced by
\be\label{eq:rhp4ma}
\lim_{\kappa\to\infty} m(\I\kappa) = \begin{pmatrix} \alpha & \alpha \end{pmatrix}
\ee
for given $\alpha\in\C$, has a unique solution $m_\alpha(k) = \alpha\, m(k)$.
\end{theorem}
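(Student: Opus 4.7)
The plan is to reduce the claim to a vanishing-problem uniqueness statement by linearity, and then establish the vanishing statement by constructing a scalar from the two solutions, showing it is entire, and applying Liouville's theorem. Given any solution $m_\alpha$ of \eqref{eq:rhp4m} with normalization \eqref{eq:rhp4ma}, set $n(k) = m_\alpha(k) - \alpha\, m(k)$. By linearity, $n$ satisfies the same jump $n_+ = n_-v$ and the same symmetry $n(-k) = n(k)\sigI$, but with the vanishing normalization $\lim_{\kappa\to\infty} n(\I\kappa) = \rN$. It suffices to prove $n\equiv 0$.

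The main device is the scalar
\[
D(k) = \det\begin{pmatrix} n_1(k) & n_2(k) \\ m_1(k) & m_2(k) \end{pmatrix} = n_1(k)m_2(k) - n_2(k)m_1(k).
\]
Stacking $n$ and $m$ as rows of a $2\times 2$ matrix $M(k)$, the jump $M_+ = M_- v$ together with $\det v = 1$ gives $D_+ = D_-$ on $\Sigma$, so $D$ extends analytically across every smooth arc of the contour. The symmetry $M(-k) = M(k)\sigI$ together with $\det\sigI = -1$ makes $D$ odd, and the normalizations $n(\I\kappa)\to\rN$, $m(\I\kappa)\to\begin{pmatrix} 1 & 1\end{pmatrix}$ give $D(\I\kappa)\to 0$. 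The only candidate singularity is $k=0$, where $n$ and $m$ are bounded, so this singularity is removable. Hence $D$ is entire, odd, and (using the boundedness at infinity built into an admissible RHP solution) bounded; Liouville then forces $D$ to be constant, and oddness forces $D\equiv 0$.

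From $D\equiv 0$ the rows $n(k)$ and $m(k)$ are pointwise linearly dependent, so there is a scalar function $\phi$ with $n(k) = \phi(k)\, m(k)$ on the set where $m\neq\rN$. The jump relations for $n$ and $m$ yield $\phi_+ = \phi_-$ on $\Sigma$, and the two symmetry relations force $\phi(-k) = \phi(k)$, so $\phi$ is even. Away from $k=0$ the function $\phi$ is analytic; at $k=0$, the hypothesis that $m$ has at most a simple zero there means $\phi$ can have at worst a simple pole, but an even function cannot carry a simple pole, so $\phi$ extends to an entire function. Since $\phi(\I\kappa) = n_j(\I\kappa)/m_j(\I\kappa) \to 0$ for whichever component has $m_j(\I\kappa)\to 1$, a second application of Liouville gives $\phi\equiv 0$, so $n\equiv 0$ and $m_\alpha = \alpha\, m$.

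The delicate step is the behavior at the origin: if $m$ were allowed to have a zero of order two or more at $k=0$, then $\phi$ could carry a pole of order $\ge 2$, which evenness alone does not rule out, and the scalar argument would break down. The hypothesis that $\limsup_{k\to 0} k/m_j(k)$ is bounded for some $j$ is exactly what blocks this scenario. A secondary technical point used in both Liouville steps is the tacit boundedness at infinity of admissible RHP solutions, which upgrades decay along $\I\R$ to full decay in $\C$.
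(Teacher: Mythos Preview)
Your proof is correct and follows essentially the same route as the paper's: stack the two row-vector solutions into a $2\times 2$ matrix, show its determinant is entire, odd, and bounded (hence identically zero), deduce scalar proportionality, and then use evenness of the scalar to rule out the possible simple pole at the origin before invoking Liouville. The only cosmetic difference is that you subtract $\alpha m$ first and work with $n=m_\alpha-\alpha m$ (so your scalar $\phi$ must vanish), whereas the paper works with $m_\alpha$ directly (so its scalar $\delta$ must equal $\alpha$); the underlying argument---including the tacit boundedness-at-infinity assumption you correctly flag---is identical.
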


\begin{proof}
Let $m_\alpha(k)$ be a solution of \eqref{eq:rhp4m} normalized according to
\eqref{eq:rhp4ma}. Then we can construct a matrix valued solution via $M=(m, m_\alpha)$ and
there are two possible cases: Either $\det M(k)$ is nonzero for some $k$ or it vanishes
identically.

We start with the first case. Since the determinant of our Riemann--Hilbert problem has no jump
and is bounded at infinity, it is constant. But taking determinants in
\[
M(-k) = M(k) \sigI.
\]
gives a contradiction.

It remains to investigate the case where $\det(M)\equiv 0$. In this case
we have $m_\alpha(k) = \delta(k) m(k)$ with a scalar function $\delta$. Moreover,
$\delta(k)$ must be holomorphic for $k\in\C\backslash\Sigma$ and continuous 
for $k\in\Sigma$ except possibly at the points where $m(k_1) = \rN$. Since it has
no jump across $\Sigma$,
\[
\delta_+(k) m_+(k) = m_{\alpha,+}(k) = m_{\alpha,-}(k) v(k) = \delta_-(k) m_-(k) v(k)
= \delta_-(k) m_+(k),
\]
it is even holomorphic in $\C\backslash\{0\}$ with at most
a simple pole at $k=0$. Hence it must be of the form
\[
\delta(k) = A + \frac{B}{k}.
\]
Since $\delta$ has to be symmetric, $\delta(k) = \delta(-k)$, we obtain $B = 0$. Now, by
the normalization we obtain $\delta(k) = A = \alpha$. This finishes the proof.
\end{proof}

Furthermore, the requirements cannot be relaxed to allow (e.g.) second order
zeros in stead of simple zeros. In fact, if $m(k)$ is a solution for which both components
vanish of second order at, say, $k=0$, then $\ti{m}(k)=\frac{1}{k^2} m(k)$ is a
nontrivial symmetric solution of the vanishing problem (i.e.\ for $\alpha=0$).

By Lemma~\ref{lem:resonant} we have

\begin{corollary}\label{cor:unique}
The solution $m(k)=m(k,x,t)$ found in Theorem~\ref{thm:vecrhp} is the only solution of the
vector Riemann--Hilbert problem \eqref{eq:jumpcond}--\eqref{eq:normcond}.
\end{corollary}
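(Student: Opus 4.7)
The plan is to deduce Corollary~\ref{cor:unique} by applying the uniqueness theorem of Section~\ref{sec:uni} to the holomorphic Riemann--Hilbert problem obtained from Lemma~\ref{lem:holrhp}, using Lemma~\ref{lem:resonant} to verify the hypothesis on zeros of $m$.

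First, I would pass from the meromorphic vector Riemann--Hilbert problem \eqref{eq:jumpcond}--\eqref{eq:normcond} to the equivalent holomorphic problem supplied by Lemma~\ref{lem:holrhp}. This replaces the pole conditions at $\pm\I\kappa_j$ by the extra circular jumps \eqref{eq:jumpcond2} on the contour $\Sigma=\R\cup\bigcup_{j=1}^N\{|k-\I\kappa_j|=\eps\}\cup\bigcup_{j=1}^N\{|k+\I\kappa_j|=\eps\}$, with $\eps$ small enough that the small circles lie in $\C\setminus\R$ and are disjoint from each other and from the ray $\{\I y:y\ge y_0\}$ for some $y_0>0$. Because the redefinition \eqref{eq:redefm} was performed symmetrically on both half planes, the resulting problem still satisfies the symmetry \eqref{eq:symcond} and normalization \eqref{eq:normcond}, so any solution of the original problem corresponds bijectively to a solution of this holomorphic problem and vice versa.

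Next I verify Hypothesis~\ref{hyp:sym}. The contour $\Sigma$ is manifestly invariant under $k\mapsto-k$. For the jump on the real axis, a direct computation from \eqref{eq:jumpcond} using $\ol{\Phi(k)}=-\Phi(k)$ on $\R$ gives $v(-k)=\sigI\, v(k)^{-1}\,\sigI$; for the small circles, the two conjugate matrices appearing in \eqref{eq:jumpcond2} are precisely interchanged under $k\mapsto -k$ after conjugation by $\sigI$, and $\det v=1$ in all cases. Having checked these properties, the theorem of Section~\ref{sec:uni} applies provided one additionally knows that any solution can have $m(k)=\rN$ at most at $k=0$, and there with a simple zero of at least one component.

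This latter requirement is exactly the content of Lemma~\ref{lem:resonant}: our specific $m(k)$ defined in \eqref{defm} satisfies it. However the uniqueness theorem requires this for \emph{any} candidate solution, so I need to invoke it only for the distinguished solution $m$ of Theorem~\ref{thm:vecrhp}. Reading the proof of the Section~\ref{sec:uni} theorem, one sees that it suffices that at least one of the two solutions compared has the stated property on its zero set, since the argument proceeds by forming $M=(m,m_\alpha)$ with $m$ fixed. Thus with $m$ the explicit scattering-theoretic solution and $m_\alpha$ an arbitrary competitor having the same normalization $\alpha=1$, the theorem yields $m_\alpha=m$.

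The main obstacle in this plan is the subtlety in the last step: namely, ensuring that the hypothesis on the zeros of the solution needs to be checked only on the distinguished solution rather than on the unknown competitor. Once this is seen to be the case (because Lemma~\ref{lem:resonant} provides the required control on $m$ itself, and the proof of the uniqueness theorem uses this control only through the ratio $\delta(k)=m_\alpha(k)/m(k)$, whose singularities are governed by the zeros of the denominator), the corollary follows immediately.
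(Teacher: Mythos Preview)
Your argument is correct and follows the same route as the paper, which simply records that the corollary follows from Lemma~\ref{lem:resonant} (in combination with the uniqueness theorem of Section~\ref{sec:uni}); you have merely spelled out the intermediate steps---passage to the holomorphic problem, verification of Hypothesis~\ref{hyp:sym}, and the observation that the zero hypothesis in Theorem~3.2 need only be checked for the distinguished solution---that the paper leaves implicit.
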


Observe that there is nothing special about $k \to\infty$ where we normalize, any
other point would do as well. However, observe that for the one-soliton solution \eqref{eq:oss},
$f(k)$ vanishes at
\[
k =\I\kappa \frac{1-(2\kappa)^2 \gamma^2 \E^{t\Phi(\I\kappa)}}{1+(2\kappa)^2 \gamma^2 \E^{t\Phi(\I\kappa)}}
\]
and hence the Riemann--Hilbert problem normalized at this point has a nontrivial solution for $\alpha=0$ and
hence, by our uniqueness result, no solution for $\alpha=1$. This shows that
uniqueness and existence are connected, a fact which is not surprising since our
Riemann--Hilbert problem is equivalent to a singular integral equation which is Fredholm of index
zero (see Appendix~\ref{sec:sieq}).

\section{Conjugation and Deformation}
\label{sec:condef}

This section demonstrates how to conjugate our Riemann--Hilbert problem and how to deform our jump 
contour, such that the jumps will be exponentially close to the identity away from the stationary 
phase points. Throughout this and the following section, we will assume that the $R(k)$ has an analytic
extension to a small neighborhood of the real axis. This is for example the case if we assume that our
solution is exponentially decaying. In Section~\ref{sec:analapprox} we will show how to remove this assumption.

For easy reference we note the following result:

\begin{lemma}[Conjugation]\label{lem:conjug}
Assume that $\wti{\Sigma}\subseteq\Sigma$. Let $D$ be a matrix of the form
\be
D(k) = \begin{pmatrix} d(k)^{-1} & 0 \\ 0 & d(k) \end{pmatrix},
\ee
where $d: \C\backslash\wti{\Sigma}\to\C$ is a sectionally analytic function. Set
\be
\ti{m}(k) = m(k) D(k),
\ee
then the jump matrix transforms according to
\be
\ti{v}(k) = D_-(k)^{-1} v(k) D_+(k).
\ee
If $d$ satisfies $d(-k) = d(k)^{-1}$ and $\lim_{\kappa\to\infty} d(\I\kappa)=1$, then the transformation $\ti{m}(k) = m(k) D(k)$
respects our symmetry, that is, $\ti{m}(k)$ satisfies \eqref{eq:symcond} if and only if $m(k)$ does, and our normalization condition.
\end{lemma}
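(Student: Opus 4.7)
The proof is a direct calculation in three parts, so the plan is to verify each of the three claimed properties (the jump transformation, preservation of symmetry, preservation of normalization) in turn, using nothing more than the definition of $\tilde m$ and the algebraic identities satisfied by $D$.

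For the jump transformation, I would start from $m_+(k) = m_-(k) v(k)$ on $\Sigma$. Since $d$ is sectionally analytic on $\C\backslash\wti\Sigma$ and $\wti\Sigma\subseteq\Sigma$, the matrix $D$ has (possibly trivial) boundary values $D_\pm(k)$ at every point of $\Sigma$, and we compute
\be
\ti m_+(k) = m_+(k) D_+(k) = m_-(k) v(k) D_+(k) = \ti m_-(k) D_-(k)^{-1} v(k) D_+(k),
\ee
which yields $\ti v(k) = D_-(k)^{-1} v(k) D_+(k)$.

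For the symmetry, the key observation is that the condition $d(-k) = d(k)^{-1}$ translates to the conjugation identity
\be
D(-k) = \begin{pmatrix} d(k) & 0 \\ 0 & d(k)^{-1} \end{pmatrix} = \sigI \, D(k) \, \sigI,
\ee
using $\sigma_1^2 = \id$. Hence, assuming $m(-k) = m(k)\sigI$,
\be
\ti m(-k) = m(-k) D(-k) = m(k)\sigI \cdot \sigI D(k) \sigI = m(k) D(k) \sigI = \ti m(k)\sigI,
\ee
and the converse follows identically since the argument is reversible (multiply on the right by $\sigI$ and then by $D(-k)^{-1}$). For the normalization, the condition $\lim_{\kappa\to\infty} d(\I\kappa) = 1$ forces $\lim_{\kappa\to\infty} D(\I\kappa) = \id$, so
\be
\lim_{\kappa\to\infty} \ti m(\I\kappa) = \lim_{\kappa\to\infty} m(\I\kappa) \cdot \lim_{\kappa\to\infty} D(\I\kappa) = \begin{pmatrix} 1 & 1 \end{pmatrix},
\ee
giving the desired normalization (and again the argument is symmetric in $m$ and $\ti m$).

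There is no real obstacle in this lemma; the only point requiring a moment of care is the $\sigma_1$-conjugation identity for $D(-k)$, which is exactly why the hypotheses on $d$ are stated the way they are. No appeal to the preceding uniqueness or scattering theory material is needed.
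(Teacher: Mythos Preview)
Your argument is correct and complete. The paper actually states this lemma without proof (it is introduced with the phrase ``For easy reference we note the following result''), so there is no proof in the paper to compare against; your direct verification of the jump relation, the $\sigI$-conjugation identity $D(-k)=\sigI D(k)\sigI$, and the normalization is exactly what is intended.
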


In particular, we obtain
\be
\ti{v} = \begin{pmatrix} v_{11} & v_{12} d^{2} \\ v_{21} d^{-2}  & v_{22} \end{pmatrix},
\qquad k\in\Sigma\backslash\wti{\Sigma},
\ee
respectively
\be
\ti{v} = \begin{pmatrix} \frac{d_-}{d_+} v_{11} & v_{12} d_+ d_- \\
v_{21} d_+^{-1} d_-^{-1}  & \frac{d_+}{d_-} v_{22} \end{pmatrix},
\qquad k\in\wti{\Sigma}.
\ee

In order to remove the poles there are two cases to distinguish. If $\re(\Phi(\I \kappa_j))<0$,
then the corresponding jump is exponentially close to the identity as $t\to\infty$ and there is nothing to do. 
Otherwise we use conjugation to turn the jumps into one with exponentially decaying
off-diagonal entries, again following Deift, Kamvissis, Kriecherbauer, and Zhou \cite{dkkz}.
It turns out that we will have to handle the poles at $\I \kappa_j$ and $-\I\kappa_j$
in one step in order to preserve symmetry and in order to not add additional poles
elsewhere.

\begin{lemma}\label{lem:twopolesinc}
Assume that the Riemann--Hilbert problem for $m$ has jump conditions near $\I\kappa$ and
$-\I\kappa$ given by
\be
\aligned
m_+(k)&=m_-(k)\begin{pmatrix}1&0\\ -\frac{\I\gamma^2}{k-\I\kappa}&1\end{pmatrix}, && 
\left\vert k-\I\kappa \right\vert=\eps, \\
m_+(k)&=m_-(k)\begin{pmatrix}1& -\frac{\I\gamma^2}{k+\I\kappa}\\0&1\end{pmatrix}, && 
\left\vert k+\I\kappa \right\vert=\eps.
\endaligned
\ee
Then this Riemann--Hilbert problem is equivalent to a Riemann--Hilbert problem for $\ti{m}$ which has jump conditions near $\I\kappa$ and
$-\I\kappa$ given by
\begin{align*}
\ti{m}_+(k)&= \ti{m}_-(k)\begin{pmatrix}1& -\frac{(k+\I\kappa)^2}{\I\gamma^2(k-\I\kappa)}\\ 0 &1\end{pmatrix},
&& \left\vert k-\I\kappa \right\vert=\eps, \\
\ti{m}_+(k)&= \ti{m}_-(k)\begin{pmatrix}1& 0 \\ -\frac{(k-\I\kappa)^2}{\I\gamma^2(k+\I\kappa)}&1\end{pmatrix},
&& \left\vert k+\I\kappa \right\vert=\eps,
\end{align*}
and all remaining data conjugated (as in Lemma~\ref{lem:conjug}) by
\be
D(k) = \begin{pmatrix} \frac{k-\I\kappa}{k+\I\kappa} & 0 \\ 0 & \frac{k+\I\kappa}{k-\I\kappa} \end{pmatrix}.
\ee
\end{lemma}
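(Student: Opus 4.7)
The strategy is to recognise that $D(k)$ interchanges the pole/zero structure of the two columns at $\pm\I\kappa$: near $\I\kappa$ its $(1,1)$ entry has a simple zero while its $(2,2)$ entry has a simple pole. Multiplying $m$ by $D$ therefore moves any pole from the first column (where the existing jump $P(k) = \begin{pmatrix} 1 & 0 \\ -\I\gamma^2/(k-\I\kappa) & 1 \end{pmatrix}$ lives) to the second column, so the resolving factor on the other side must itself change from lower to upper triangular. Accordingly I plan to set $\ti{m}(k) = m(k) D(k)$ outside both small discs, and inside $\abs{k-\I\kappa}<\eps$ to define
\[
\ti{m}(k) = m(k)\,P(k)^{-1} D(k) Q(k), \qquad
Q(k) = \begin{pmatrix} 1 & -\frac{(k+\I\kappa)^2}{\I\gamma^2 (k-\I\kappa)} \\ 0 & 1 \end{pmatrix},
\]
where $Q$ is the new jump claimed by the lemma.

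With this definition, the jump on the circle $\abs{k-\I\kappa}=\eps$ telescopes automatically: since $m_+ = m_- P$ there by hypothesis,
\[
\ti{m}_+ = m_+ P^{-1} D Q = (m_- P) P^{-1} D Q = m_- D Q = \ti{m}_- Q,
\]
so the new jump equals $Q$, as required. What remains is to check that $\ti{m}$ is holomorphic inside the disc, i.e.\ that $P^{-1} D Q$ has no pole at $\I\kappa$. A direct matrix multiplication gives
\[
P^{-1}(k) D(k) Q(k) = \begin{pmatrix} \frac{k-\I\kappa}{k+\I\kappa} & -\frac{k+\I\kappa}{\I\gamma^2} \\ \frac{\I\gamma^2}{k+\I\kappa} & 0 \end{pmatrix},
\]
which is manifestly holomorphic at $\I\kappa$; the $(2,2)$ entry collapses to zero by a precise cancellation, and this is exactly what forces the specific form of $Q$.

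The disc around $-\I\kappa$ is handled by the mirror argument, either by repeating the computation with the roles of the two columns interchanged (yielding the lower-triangular jump stated in the lemma), or by invoking $D(-k) = \sigI D(k) \sigI$ together with the symmetry \eqref{eq:symcond} to deduce the second jump from the first. Away from the two discs, $D$ is holomorphic and Lemma~\ref{lem:conjug} applies verbatim, giving the conjugation of all remaining jumps by $D$. Finally, $D(k)\to I$ as $k\to\infty$ together with the symmetry property of $D$ ensures that $\ti{m}$ inherits both the normalisation and the symmetry of $m$.

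The only real obstacle is the matrix identity verifying the holomorphy of $P^{-1} D Q$ at $\I\kappa$: once the correct ansatz for $Q$ is in hand (the top-right entry tailored precisely so as to cancel both the pole of $D_{22}$ and its own pole against the simple zero of $D_{11}$), the verification reduces to a short computation and everything else follows from bookkeeping together with Lemma~\ref{lem:conjug}.
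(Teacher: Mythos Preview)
Your proposal is correct and essentially identical to the paper's proof: the paper also sets $\ti m = m D$ with $D$ defined piecewise, and its choice of $D$ inside $|k-\I\kappa|<\eps$ is exactly the matrix $P^{-1}D_0Q$ you computed (where $D_0$ is the diagonal Blaschke factor). The only difference is presentational---the paper writes down the piecewise $D$ directly and leaves the verification of analyticity and jumps as a straightforward calculation, while you derive the inside factor from the requirement that the new jump be $Q$ and then check holomorphy.
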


\begin{proof}
To turn $\gam^2$ into $\gam^{-2}$, introduce $D$ by
\[
D(k) = \begin{cases}
\begin{pmatrix} 1 & -\frac{k-\I\kappa}{\I\gamma^2} \\  \frac{\I\gamma^2}{k-\I\kappa} & 0 \end{pmatrix}
\begin{pmatrix} \frac{k-\I\kappa}{k+\I\kappa} & 0 \\ 0 & \frac{k+\I\kappa}{k-\I\kappa} \end{pmatrix}, &  \left\vert k-\I\kappa \right\vert <\eps, \\
\begin{pmatrix} 0 & -\frac{\I\gamma^2}{k+\I\kappa} \\ \frac{k+\I\kappa}{\I\gamma^2} & 1 \end{pmatrix} 
\begin{pmatrix} \frac{k-\I\kappa}{k+\I\kappa} & 0 \\ 0 & \frac{k+\I\kappa}{k-\I\kappa} \end{pmatrix}, & \left\vert k+\I\kappa \right\vert <\eps, \\ 
\begin{pmatrix} \frac{k-\I\kappa}{k+\I\kappa} & 0 \\ 0 & \frac{k+\I\kappa}{k-\I\kappa} \end{pmatrix}, & \text{else},
\end{cases}
\]
and note that $D(k)$ is analytic away from the two circles. Now set $\ti{m}(k) = m(k) D(k)$and note that $D(k)$ is also symmetric. Therefore the jump conditions can be verified by straightforward calculations and Lemma \ref{lem:conjug}.
\end{proof}

The jump along the real axis is of oscillatory type and our aim is to apply 
a contour deformation following \cite{dz} such that all jumps will be moved into regions where the oscillatory terms 
will decay exponentially. Since the jump matrix $v$ contains both $\exp(t\Phi)$ and 
$\exp(-t\Phi)$ we need to separate them in order to be able to move them to different regions 
of the complex plane. 

We recall that the phase of the associated Riemann--Hilbert problem is given by  
\begin{equation} \label{eq:Phi}
\Phi(k)=8\I k^3+2\I k\frac{x}{t}
\end{equation}
and the stationary phase points, $\Phi'(k)=0$, are denoted by $\pm k_0$, where
\be
k_0= \sqrt{-\frac{x}{12t}}.
\ee
For $\frac{x}{t}>0$ we have $k_0\in\I\R$, and for $\frac{x}{t}<0$
we have $k_0\in\R$. For $\frac{x}{t}>0$ we will also need the value
$\kappa_0$ defined via $\re(\Phi(\I\kappa_0))=0$, that is,
\be
 \kappa_0 = \sqrt{\frac{x}{4 t}}>0.
\ee
We will set $\kappa_0=0$ if $\frac{x}{t}<0$ for notational convenience.
A simple analysis shows that for $\frac{x}{t}>0$ we have $0<k_0/\I<\kappa_0$.

As mentioned above we will need the following factorizations of the jump condition \eqref{eq:jumpcond}:
\be
v(k)=b_-(k)^{-1}b_+(k),
\ee
where 
\be
b_-(k)=\begin{pmatrix} 1 & \ol{R(k)}\E^{-t\Phi(k)} \\ 0 & 1 \end{pmatrix}, \qquad
b_+(k)=\begin{pmatrix} 1 & 0 \\ R(k)\E^{t\Phi(k)} & 1\end{pmatrix}.
\ee
for $|k|>\re(k_0)$ and 
\be
v(k)=B_-(k)^{-1}
\begin{pmatrix} 1-\left\vert R(k) \right\vert^2 & 0 \\ 
	0 & \frac{1}{1-\left\vert R(k) \right\vert^2}\end{pmatrix}
B_+(k),
\ee
where
\be
B_-(k)=\begin{pmatrix} 1 & 0\\
	-\frac{R(k)\E^{t\Phi(k)}}{1-\left\vert R(k) \right\vert^2} & 1 \end{pmatrix}, \qquad
B_+(k)=\begin{pmatrix} 1 & -\frac{\ol{R(k)}\E^{-t\Phi(k)}}{1-\left\vert R(k) \right\vert^2}\\
	0 & 1 \end{pmatrix}.
\ee
for $|k|< \re(k_0)$.

To get rid of the diagonal part in the factorization corresponding to 
$\left\vert k \right\vert < \re(k_0)$ and to conjugate the jumps near the eigenvalues we need
the partial transmission coefficient $T(k,k_0)$.

We define the partial transmission coefficient with respect to $k_0$ by
\begin{align}\label{def:Tkk0}
T(k,k_0) &=
\begin{cases}
\prod\limits_{\kappa_j\in(\kappa_0,\infty)}  \frac{k+\I\kappa_j}{k-\I\kappa_j}, & k_0 \in \I\R^+, \\
\prod\limits_{j=1}^{N} \frac{k+\I\kappa_j}{k-\I\kappa_j}
\exp\left(\frac{1}{2\pi\I}\int\limits_{-k_0}^{k_0}\frac{\log(|T(\zeta)|^2)}{\zeta-k}
{d\zeta}\right), & k_0 \in \R^+,
\end{cases}
\end{align}
for $k\in\C\backslash\Sigma(k_0)$, where  $\Sigma(k_0) = [-\re(k_0),\re(k_0)]$ (oriented from left to right). Thus $T(k,k_0)$
is meromorphic for $k\in\C\backslash\Sigma(k_0)$. Note that $T(k,k_0)$ can be computed in terms
of the scattering data since $|T(k)|^2= 1- |R_+(k,t)|^2$. Moreover, we set
\begin{align}
T_1(k_0)
&= \begin{cases}
\sum\limits_{\kappa_j \in (\kappa_0,\infty)} 2\kappa_j, & k_0 \in \I\R^+,\\
\sum\limits_{j=1}^N 2\kappa_j + \frac{1}{2\pi} \int_{-k_0}^{k_0} \log(\left\vert T(\zeta) \right\vert^2)d\zeta , & k_0 \in \R^+. 
\end{cases}
\end{align}
Thus
\begin{align}\label{eq:asyt}
T(k,k_0) = 1+T_1(k_0)\frac{\I}{k}+O\left(\frac{1}{k^2}\right), \quad \text{ as } k\to\infty.
\end{align}

\begin{theorem}\label{thm:part}
The partial transmission coefficient $T(k,k_0)$ is meromorphic in $\C\backslash\Sigma(k_0)$, where
\be
\Sigma(k_0)=[-\re(k_0),\re(k_0)],
\ee
with simple poles at  $\I\kappa_j$ and simple zeros at $-\I\kappa_j$ for all j with $\kappa_0 < \kappa_j$,
and satisfies the jump condition
\be\label{eq:jumpt}
T_+(k,k_0)=T_-(k,k_0)(1-\left\vert R(k) \right\vert^2), \qquad \textnormal{ for } k\in\Sigma(k_0). 
\ee 
Moreover,
\begin{enumerate}
\item
$T(-k,k_0)=T(k,k_0)^{-1}$, $k\in\C\backslash\Sigma(k_0)$, 
\item
$T(-k,k_0)=\ol{T(\ol{k},k_0)}$, $k\in\C$, in particular $T(k,k_0)$ is
real for $k\in\I\R$, and 
\item
if $k_0\in\R^+$ the behaviour near $k=0$ is given by $T(k,k_0) = T(k) (C+ o(1))$ with $C\ne 0$ for $\im(k)\ge 0$.
\end{enumerate}
\end{theorem}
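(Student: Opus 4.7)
My plan is to verify the four itemized claims in order, treating the cases $k_0\in \I\R^+$ and $k_0\in\R^+$ in parallel, since the first is essentially a special case (a pure Blaschke product with no jump contour).

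First I would address the analytic structure and the pole/zero locations. Each factor $\frac{k+\I\kappa_j}{k-\I\kappa_j}$ is a Möbius map, so the product (whether over $\kappa_j>\kappa_0$ or all $j$) contributes precisely the claimed simple poles and zeros and is otherwise nonzero. In the case $k_0\in\R^+$, the remaining Cauchy-type factor $\exp\bigl(\frac{1}{2\pi\I}\int_{-k_0}^{k_0}\tfrac{\log|T(\zeta)|^2}{\zeta-k}d\zeta\bigr)$ is analytic and nonvanishing on $\C\setminus\Sigma(k_0)$ by standard properties of Cauchy integrals (noting that $\log|T(\zeta)|^2$ is locally integrable on $[-k_0,k_0]$ even at $\zeta=0$ in the generic case $|T(0)|=0$, since $|T(\zeta)|^2=O(\zeta^2)$ there). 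The jump relation \eqref{eq:jumpt} is then immediate: the product has no jump across $\Sigma(k_0)\subset\R$, so $T_+/T_-$ equals the exponential of the jump of the Cauchy integral, which by Plemelj--Sokhotski is $\exp(\log|T(k)|^2)=|T(k)|^2=1-|R(k)|^2$.

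For the symmetry (i), I would verify it factor by factor. Each Möbius piece satisfies $\frac{-k+\I\kappa_j}{-k-\I\kappa_j}=\bigl(\frac{k+\I\kappa_j}{k-\I\kappa_j}\bigr)^{-1}$, and in the Cauchy integral the substitution $\zeta\mapsto-\zeta$, combined with $|T(-\zeta)|^2=|T(\zeta)|^2$ (inherited from $T(-k)=\overline{T(k)}$), flips the sign of the integral, again giving the reciprocal. For (ii), since $\log|T(\zeta)|^2$ is real and the contour is real and symmetric, conjugating the integrand with $k\mapsto\bar k$ produces $\overline{T(\bar k,k_0)}=T(k,k_0)^{-1}=T(-k,k_0)$. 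Reality on $\I\R$ follows by specializing (ii) to $\bar k=-k$.

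The main obstacle is (iii). My approach is to compare $T(k,k_0)$ with the classical dispersion representation of the full transmission coefficient,
\begin{equation*}
T(k)=\prod_{j=1}^N \frac{k+\I\kappa_j}{k-\I\kappa_j}\,\exp\Bigl(\frac{1}{2\pi\I}\int_{\R}\frac{\log|T(\zeta)|^2}{\zeta-k}\,d\zeta\Bigr),\qquad \im k>0,
\end{equation*}
which holds since $T(k)\prod\frac{k-\I\kappa_j}{k+\I\kappa_j}$ is analytic and nonvanishing in the upper half plane with boundary modulus $|T|$. Subtracting the two representations one obtains
\begin{equation*}
T(k,k_0)=T(k)\exp\Bigl(-\frac{1}{2\pi\I}\int_{|\zeta|>k_0}\frac{\log|T(\zeta)|^2}{\zeta-k}\,d\zeta\Bigr).
\end{equation*}
The remaining integral is over a set bounded away from $k=0$, hence its integrand is smooth in $k$ near $0$ (for $\im k\ge 0$), so the exponential tends to a finite nonzero constant $C$ as $k\to 0$, proving $T(k,k_0)=T(k)(C+o(1))$. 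The delicate point is that in the generic case $\log|T(\zeta)|^2$ has a logarithmic singularity at $\zeta=0$, so one must be careful to put that singularity inside the excluded interval $[-k_0,k_0]$; this is exactly why the decomposition works and why $C\ne 0$.
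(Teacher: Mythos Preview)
Your proposal is correct and supplies exactly the details the paper suppresses; the paper's own proof merely states that the pole/zero structure is obvious from the Blaschke factors, the jump follows from Plemelj, and (i)--(iii) are ``straightforward to check.'' Your verification of (iii) via the standard Poisson--Jensen representation of the full transmission coefficient $T(k)$ and subtraction of the complementary integral over $|\zeta|>k_0$ is the natural way to make that claim precise, and your observation that the logarithmic singularity of $\log|T(\zeta)|^2$ at $\zeta=0$ lands inside $[-k_0,k_0]$ (hence is excised from the remaining integral) is exactly the point that makes the constant $C$ finite and nonzero.
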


\begin{proof}
That $\I\kappa_j$ are simple poles and $-\I\kappa_j$ are simple zeros is obvious from the 
Blaschke factors and that $T(k,k_0)$ has the given jump follows from Plemelj's formulas. 
(i), (ii), and (iii) are straightforward to check.
\end{proof}

Now we are ready to perform our conjugation step. Introduce
\be
D(k) = \begin{cases}
\begin{pmatrix} 1 & -\frac{k-\I\kappa_j}{\I\gamma_j^2 \E^{t\Phi (\I\kappa_j)}}\\
\frac{\I\gamma_j^2 \E^{t\Phi(\I\kappa_j)}}{k-\I\kappa_j} & 0 \end{pmatrix}
D_0(k), &  |k-\I\kappa_j|<\eps, \: \kappa_0 < \kappa_j,\\
\begin{pmatrix} 0 & -\frac{\I\gamma_j^2 \E^{t\Phi (\I\kappa_j)}}{k+\I\kappa_j} \\
\frac{k+\I\kappa_j}{\I\gamma_j^2 \E^{t\Phi(\I\kappa_j)}} & 1 \end{pmatrix}
D_0(k), & |k+\I\kappa_j|<\eps, \: \kappa_0 < \kappa_j,\\ 
D_0(k), & \text{else},
\end{cases}
\ee
where 
\[
D_0(k) = \begin{pmatrix} T(k,k_0)^{-1} & 0 \\ 0 & T(k,k_0) \end{pmatrix}.
\]
Observe that $D(k)$ respects our symmetry,
\[
D(-k)= \sigI D(k) \sigI.
\]
Now we conjugate our problem using $D(k)$ and set
\be\label{def:mti}
\ti{m}(k)=m(k) D(k).
\ee
Note that even though $D(k)$ might be singular at $k=0$ (if $k_0>0$ and $R(0)=-1$),
$\ti{m}(k)$ is nonsingular since the possible singular behaviour of $T(k,k_0)^{-1}$ from $D_0(k)$
cancels with $T(k)$ in $m(k)$ by virtue of Theorem~\ref{thm:part} (iii).

Then using Lemma~\ref{lem:conjug} and Lemma~\ref{lem:twopolesinc} the jump
corresponding to $\kappa_0 < \kappa_j$ (if any) is given by
\be
\aligned
\ti{v}(k) &= \begin{pmatrix}1& -\frac{k-\I\kappa_j}
{\I\gamma_j^2 \E^{t\Phi (\I\kappa_j)}T(k,k_0)^{-2}}\\ 0 &1\end{pmatrix},
\qquad |k-\I\kappa_j|=\eps, \\
\ti{v}(k) &= \begin{pmatrix}1& 0 \\ -\frac{k+\I\kappa_j}
{\I\gamma_j^2 \E^{t\Phi(\I\kappa_j)} T(k,k_0)^2}&1\end{pmatrix},
\qquad |k+\I\kappa_j|=\eps,
\endaligned
\ee
and corresponding to $\kappa_0>\kappa_j$ (if any) by
\be
\aligned
\ti{v}(k) &= \begin{pmatrix} 1 & 0 \\ -\frac{\I\gamma_j^2 \E^{t\Phi(\I\kappa_j)} T(k,k_0)^{-2}}{k-\I\kappa_j}
 & 1 \end{pmatrix},
\qquad |k-\I\kappa_j|=\eps, \\
\ti{v}(k) &= \begin{pmatrix} 1 & -\frac{\I\gamma_j^2 \E^{t\Phi(\I\kappa_j)} T(k,k_0)^2}{k+\I\kappa_j} \\
0 & 1 \end{pmatrix},
\qquad |k+\I\kappa_j|=\eps.
\endaligned
\ee
In particular, all jumps corresponding to poles, except for possibly one if
$\kappa_j=\kappa_0$, are exponentially close to the identity for $t\to\infty$. In the latter case we will keep the
pole condition for $\kappa_j=\kappa_0$ which now reads
\be
\aligned
\res_{\I\kappa_j} \ti{m}(k) &= \lim_{k\to\I\kappa_j} \ti{m}(k)
\begin{pmatrix} 0 & 0\\ \I\gamma_j^2 \E^{t\Phi(\I\kappa_j)} T(\I\kappa_j,k_0)^{-2}  & 0 \end{pmatrix},\\
\res_{-\I\kappa_j} \ti{m}(k) &= \lim_{k\to -\I\kappa_j} \ti{m}(k)
\begin{pmatrix} 0 & -\I\gamma_j^2 \E^{t\Phi(\I\kappa_j)} T(\I\kappa_j,k_0)^{-2} \\ 0 & 0 \end{pmatrix}.
\endaligned
\ee
Furthermore, the jump along $\R$ is given by
\be
\ti{v}(k) = \begin{cases}
\ti{b}_-(k)^{-1} \ti{b}_+(k), & k\in\R\backslash\Sigma(k_0),\\
\ti{B}_-(k)^{-1} \ti{B}_+(k), & k\in\Sigma(k_0),\\
\end{cases}
\ee
where
\be \label{eq:deftib}
\ti{b}_-(k) = \begin{pmatrix} 1 & \frac{R(-k) \E^{-t\Phi(k)}}{T(-k,k_0)^2} \\ 0 & 1 \end{pmatrix}, \quad
\ti{b}_+(k) = \begin{pmatrix} 1 & 0 \\ \frac{R(k) \E^{t\Phi(k)}}{T(k,k_0)^2} & 1 \end{pmatrix},
\ee
and
\begin{align*}
\ti{B}_-(k) 
=\begin{pmatrix} 1 & 0 \\ -\frac{T_-(k,k_0)^{-2}}{1-\left\vert R(k) \right\vert^2}R(k)\E^{t\Phi(k)} & 1\end{pmatrix}
=\begin{pmatrix} 1 & 0 \\ - \frac{T_-(-k,k_0)}{T_-(k,k_0)} R(k) \E^{t\Phi(k)} & 1 \end{pmatrix}, 
\end{align*}
\begin{align*}
\ti{B}_+(k) 
=\begin{pmatrix} 1 & -\frac{T_+(k,k_0)^2}{1-\left\vert R(k)\right\vert^2}R(-k)\E^{-t\Phi(k)} \\ 0 & 1 \end{pmatrix} 
=\begin{pmatrix} 1 & - \frac{T_+(k,k_0)}{T_+(-k,k_0)} R(-k) \E^{-t\Phi(k)} \\ 0 & 1 \end{pmatrix}.
\end{align*}
Here we have used
\[
R(-k)=\ol{R(k)}, \quad k\in\R, \qquad T_\pm(-k,k_0)=T_\mp(k,k_0)^{-1}, \quad k\in\Sigma(k_0),
\]
and the jump condition \eqref{eq:jumpt} for the partial transmission coefficient $T(k,k_0)$
along $\Sigma(k_0)$ in the last step. This also shows that the matrix entries are bounded for
$k\in\R$ near $k=0$ since $T_\pm(-k,k_0)=\ol{T_\pm(k,k_0)}$.

Since we have assumed that $R(k)$ has an analytic continuation to a neighborhood of the real axis,
we can now deform the jump along $\R$ to move the oscillatory terms into regions where they are
decaying. According to Figure~\ref{fig:signRePhi} there are two cases to distinguish:

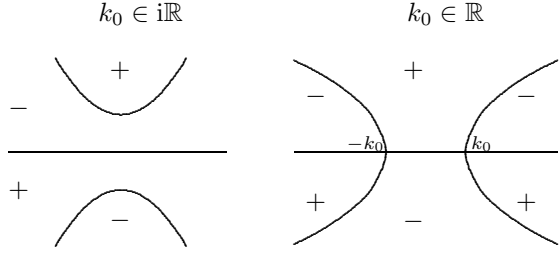
\begin{figure}\centering
\begin{picture}(3.75,3)
\put(0.3,1.25){\line(1,0){2.9}}

\put(1.5,3){$k_0\in\I\R$}

\put(0.3,1.75){$-$}
\put(0.3,0.65){$+$}
\put(1.65,2.25){$+$}
\put(1.65,0.25){$-$}

\curve(0.934,2.5, 1.05,2.3125, 1.3,2., 1.55,1.8125, 1.8,1.75, 2.05,1.8125, 2.3,2., 2.55,2.3125, 2.666,2.5)

\curve(0.934,0., 1.05,0.1875, 1.3,0.5, 1.55,0.6875, 1.8,0.75, 2.05,0.6875, 2.3,0.5, 2.55,0.1875, 2.666,0.)
\end{picture}\quad
\begin{picture}(3.75,3)
\put(0,1.25){\line(1,0){3.5}}

\put(1.5,3){$k_0\in\R$}

\put(0.15,0.5){$+$}
\put(0.15,1.9){$-$}
\put(2.95,0.5){$+$}
\put(2.95,1.9){$-$}
\put(1.45,2.25){$+$}
\put(1.45,0.25){$-$}

\put(0.7,1.3){$\scriptstyle -k_0$}
\put(2.35,1.3){$\scriptstyle k_0$}

\curve(0.,0.025, 0.425,0.25, 0.775,0.5, 1.025,0.75, 1.225,1.25, 1.025,1.75, 0.775,2., 0.425,2.25, 0.,2.47)

\curve(3.5,0.025, 3.075,0.25, 2.725,0.5, 2.475,0.75, 2.275,1.25, 2.475,1.75, 2.725,2., 3.075,2.25, 3.5,2.47)
\end{picture}
\caption{Sign of $\re(\Phi(k))$ for different values of $k_0$}\label{fig:signRePhi}

\end{figure}

\vspace{3mm}\noindent
{\bf Case 1}: $k_0\in\I\R$, $k_0\not=0$:\\[0mm]

\begin{figure}\centering
\begin{picture}(7,5.2)
\put(0.6,2.5){\line(1,0){5.8}}
\put(2,2.5){\vector(1,0){0.4}}
\put(5,2.5){\vector(1,0){0.4}}

\put(6,2.2){$\R$}

\put(0.6,2){\line(1,0){5.8}}
\put(2,2){\vector(1,0){0.4}}
\put(5,2){\vector(1,0){0.4}}

\put(6,1.6){$\Sigma_-$}

\put(0.6,3){\line(1,0){5.8}}
\put(2,3){\vector(1,0){0.4}}
\put(5,3){\vector(1,0){0.4}}

\put(6,3.2){$\Sigma_+$}

\put(0.6,3.5){$\re(\Phi)<0$}
\put(0.6,1.3){$\re(\Phi)>0$}
\put(2.9,4.5){$\re(\Phi)>0$}
\put(2.9,0.5){$\re(\Phi)<0$}

\curvedashes{0.05,0.05}

\curve(1.868,5., 2.1,4.625, 2.6,4., 3.1,3.625, 3.6,3.5, 4.1,3.625, 4.6,4., 5.1,4.625, 5.332,5.)

\curve(1.868,0., 2.1,0.375, 2.6,1., 3.1,1.375, 3.6,1.5, 4.1,1.375, 4.6,1., 5.1,0.375, 5.332,0.)
\end{picture}
\caption{Deformed contour for $k_0\in\I\R^+$}
\label{figure:solreg}
\end{figure}

We set $\Sigma_\pm = \{ k\in\C \vert \im(k) = \pm \varepsilon\}$ for some small $\varepsilon$
such that $\Sigma_\pm$ lies in the region with $\pm \re(\Phi(k)) < 0$ and such that the circles around $\pm\I\kappa_j$
lie outside the region in between $\Sigma_-$ and $\Sigma_+$ (see Figure~\ref{figure:solreg}).
Then we can split our jump by redefining $\ti{m}(k)$ according to 
\be
\wha{m}(k) = \begin{cases} \ti{m}(k)\ti{b}_+(k)^{-1} ,  &
0<\im(k)<\varepsilon,\\
\ti{m}(k) \ti{b}_-(k)^{-1},  &
-\varepsilon < \im(k) < 0,\\
\ti{m}(k), & \text{else}.\end{cases}
\ee
Thus the jump along the real axis disappears and the jump along $\Sigma_\pm$ is given by 
\be\label{eq:jumpsolreg}
\wha{v}(k) = \begin{cases} \ti{b}_+(k) , &  k\in\Sigma_+ \\
\ti{b}_-(k)^{-1} , & k\in\Sigma_-.\end{cases}
\ee
All other jumps are unchanged. Note that the resulting Riemann--Hilbert problem still satisfies our symmetry
condition \eqref{eq:symcond}, since we have
\be
\ti{b}_\pm(-k)= \sigI \ti{b}_\mp(k)\sigI.
\ee
By construction the jump along $\Sigma_{\pm}$ is exponentially close to the identity as $t\to\infty$.

\vspace{3mm}\noindent
{\bf Case 2}: $k_0\in\R$, $k_0\not=0$:\\[0mm]

\begin{figure}\centering
\begin{picture}(7,5.2)
\put(0,2.5){\line(1,0){7.0}}
\put(2,2.5){\vector(1,0){0.4}}
\put(5,2.5){\vector(1,0){0.4}}

\put(6.6,2.2){$\R$}

\put(0,2){\line(1,0){2.0}}
\put(2.9,3){\line(1,0){1.2}}
\put(5,2){\line(1,0){2.0}}
\put(1.1,2){\vector(1,0){0.4}}
\put(5.5,2){\vector(1,0){0.4}}
\put(3.3,3){\vector(1,0){0.4}}

\put(1.3,1.5){$\Sigma_-^1$}
\put(5.7,1.5){$\Sigma_-^1$}
\put(3.5,3.3){$\Sigma_+^2$}

\curve(2.,2., 2.2,2.1, 2.4,2.4, 2.45,2.5, 2.5,2.6, 2.7,2.9, 2.9,3.)

\curve(4.1,3., 4.3,2.9, 4.5,2.6, 4.55,2.5, 4.6,2.4, 4.8,2.1, 5.,2.)

\put(0,3){\line(1,0){2.0}}
\put(2.9,2){\line(1,0){1.2}}
\put(5,3){\line(1,0){2.0}}
\put(1.1,3){\vector(1,0){0.4}}
\put(5.5,3){\vector(1,0){0.4}}
\put(3.3,2){\vector(1,0){0.4}}

\curve(2.,3., 2.2,2.9, 2.4,2.6, 2.45,2.5, 2.5,2.4, 2.7,2.1, 2.9,2.)

\curve(4.1,2., 4.3,2.1, 4.5,2.4, 4.55,2.5, 4.6,2.6, 4.8,2.9, 5.,3.)

\put(1.3,3.3){$\Sigma_+^1$}
\put(5.7,3,3){$\Sigma_+^1$}
\put(3.5,1.5){$\Sigma_-^2$}

\put(0.3,1.0){$\scriptstyle\re(\Phi)>0$}
\put(0.3,3.8){$\scriptstyle\re(\Phi)<0$}
\put(5.9,1.0){$\scriptstyle\re(\Phi)>0$}
\put(5.9,4.0){$\scriptstyle\re(\Phi)<0$}
\put(2.9,4.5){$\scriptstyle\re(\Phi)>0$}
\put(2.9,0.5){$\scriptstyle\re(\Phi)<0$}

\put(2.6,2.6){$\scriptstyle -k_0$}
\put(4.7,2.6){$\scriptstyle k_0$}

\curvedashes{0.05,0.05}

\curve(0.,0.05, 0.85,0.5, 1.55,1., 2.05,1.5, 2.45,2.5, 2.05,3.5, 1.55,4., 0.85,4.5, 0.,4.94)

\curve(7.,0.05, 6.15,0.5, 5.45,1., 4.95,1.5, 4.55,2.5, 4.95,3.5, 5.45,4., 6.15,4.5, 7.,4.94)
\end{picture}
\caption{Deformed contour for $k_0\in\R^+$}
\label{figure:simreg}
\end{figure}

We set $\Sigma_\pm=\Sigma_\pm^1 \cup\Sigma_\pm^2$ according to Figure~\ref{figure:simreg} chosen such
that the circles around $\pm\I\kappa_j$ lie outside the region in between $\Sigma_-$ and $\Sigma_+$. 
Again note that $\Sigma_\pm^1$ respectively $\Sigma_\pm^2$ lie in the region with $\pm\re(\Phi(k))<0$.
Then we can split our jump by redefining $\ti{m}(k)$ according to
\be
\wha{m}(k)= \begin{cases} \ti{m}(k)\ti{b}_+(k)^{-1} , &  k\textnormal{ between } \R \textnormal{ and } \Sigma_+^1, \\
\ti{m}(k)\ti{b}_-(k)^{-1} , & k \textnormal{ between } \R \textnormal{ and } \Sigma_-^1,\\
\ti{m}(k)\ti{B}_+(k)^{-1} , & k \textnormal{ between } \R \textnormal{ and } \Sigma_+^2,\\
\ti{m}(k)\ti{B}_-(k)^{-1} , & k \textnormal{ between } \R \textnormal{ and } \Sigma_-^2,\\
\ti{m}(k) , & \textnormal{else}.
 \end{cases}
\ee

One checks that the jump along $\R$ disappears and the jump along $\Sigma_{\pm}$ is given by 
\be
\wha{v}(k)=\begin{cases} \ti{b}_+(k) , & k\in\Sigma_+^1,\\
\ti{b}_-(k)^{-1} , & k\in\Sigma_-^1,\\
\ti{B}_+(k) , & k\in\Sigma_+^2,\\
\ti{B}_-(k)^{-1} , &  k\in\Sigma_-^2. \end{cases}
\ee

All other jumps are unchanged. Again the resulting Riemann--Hilbert problem still satisfies our symmetry 
condition \eqref{eq:symcond} and the jump along $\Sigma_{\pm}\backslash\{k_0,-k_0\}$ is exponentially decreasing as $t\to\infty$

\begin{theorem}\label{thm:asym}
Assume
\be\label{decayl}
\int_{\R} (1+|x|)^{1+l} |q(x,0)| dx < \infty
\ee
for some integer $l\ge 1$ and abbreviate by $c_j= 4 \kappa_j^2$
the velocity of the $j$'th soliton determined by $\re(\Phi(\I \kappa_j))=0$.
Then the asymptotics in the soliton region, $x/t \geq C $ for some
$C>0$, are as follows:

Let $\eps > 0$ sufficiently small such that the intervals
$[c_j-\eps,c_j+\eps]$, $1\le j \le N$, are disjoint and lie inside $\R^+$. 

If $|\frac{x}{t} - c_j|<\eps$ for some $j$, one has
\begin{align}
\int_x^{\infty} q(y,t) dy &= -4 \sum_{i=j+1}^N \kappa_i -
\frac{2\gamma_j^2(x,t)}{1+ (2\kappa_j)^{-1} \gamma_j^2(x,t)} + O(t^{-l}),
\end{align}
respectively
\begin{align}
q(x,t)& = \frac{-4\kappa_j\gamma_j^2(x,t)}{(1+(2\kappa_j)^{-1}\gamma_j^2(x,t))^2} +O(t^{-l}),
\end{align}
where
\be
\gam_j^2(x,t) = \gamma_j^2 \E^{-2\kappa_j x + 8 \kappa_j^3 t} \prod_{i=j+1}^N \left(\frac{\kappa_i-\kappa_j}{\kappa_i+\kappa_j}\right)^2.
\ee

If $|\frac{x}{t} -c_j| \geq \eps$, for all $j$, one has
\begin{align}
\int_x^{\infty} q(y,t) dy &= -4 \sum_{\kappa_i \in (\kappa_0,\infty)} \kappa_i + O(t^{-l}), \qquad \kappa_0=\sqrt{\frac{x}{4t}},
\end{align}
respectively
\begin{align}
q(x,t)& = O(t^{-l}).
\end{align}
\end{theorem}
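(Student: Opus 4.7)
The plan is to apply asymptotic analysis to the deformed Riemann--Hilbert problem for $\wha m$ constructed in Section~\ref{sec:condef}, specialized to the soliton region $k_0\in\I\R^+$. The key observation is that, modulo the analytic approximation scheme to be developed in Section~\ref{sec:analapprox}, every remaining jump of $\wha m$ is either small (converging to the identity uniformly as $t\to\infty$) or is the retained pole condition at $\pm\I\kappa_j$ arising precisely when $x/t$ sits near the soliton velocity $c_j=4\kappa_j^2$. Once this dichotomy is in place, standard small-norm Riemann--Hilbert theory (cf.\ Appendix~\ref{sec:sieq}) reduces the problem to explicitly solvable model problems whose Laurent expansion at $k=\infty$ decodes $Q(x,t)$ and $q(x,t)$ via Lemma~\ref{lem:asymp}.

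First I would verify the sizes of the jumps. On $\Sigma_\pm$ the phase $\mp\re\Phi$ is strictly negative by construction, so $\E^{\pm t\Phi}$ is exponentially decaying in $t$; combined with the analytic approximation of $R$ (which converts this exponential decay into an $O(t^{-l})$ estimate tied directly to the decay hypothesis \eqref{decayl}) one obtains $\wha v(k)-I=O(t^{-l})$ in the relevant $L^2\cap L^\infty$ norms. On each circle $|k\mp\I\kappa_i|=\eps$ with $\kappa_i\neq\kappa_0$, the conjugation by $D$ (via Lemma~\ref{lem:twopolesinc} when $\kappa_0<\kappa_i$) converts the growing factor $\E^{t\Phi(\I\kappa_i)}$ into its reciprocal, while for $\kappa_i<\kappa_0$ that factor was already decaying; in either case the circle jump is exponentially small. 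The only surviving nontrivial datum is the pole at $\pm\I\kappa_j$ for the unique index $j$ (if any) with $\kappa_0\approx\kappa_j$.

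For the off-soliton case $|x/t-c_j|\geq\eps$ for every $j$, every jump of $\wha m$ is $O(t^{-l})$, so the small-norm theorem gives $\wha m(k)=(1,1)+O(t^{-l})$ uniformly for $k$ on compact subsets of the resolvent set. Undoing the conjugation via $m(k)=\wha m(k)D(k)^{-1}$ and comparing the $1/k$ coefficient of the resulting expansion with Lemma~\ref{lem:asymp} using \eqref{eq:asyt} yields $Q(x,t)=2T_1(k_0)+O(t^{-l})=4\sum_{\kappa_i>\kappa_0}\kappa_i+O(t^{-l})$, which proves the stated formula for $\int_x^\infty q$. The claim $q(x,t)=O(t^{-l})$ follows from the $1/k^2$ coefficient: the trivial model $(1,1)$ contributes nothing at that order, so the residual error controls $q$.

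In the on-soliton case $|x/t-c_j|<\eps$, the pole at $\pm\I\kappa_j$ is retained and (choosing $\eps$ small enough that $\kappa_0$ separates $\kappa_j$ from the other eigenvalues) the partial transmission coefficient evaluates to $T(\I\kappa_j,k_0)=\prod_{i>j}(\kappa_i+\kappa_j)/(\kappa_i-\kappa_j)$, so that the modified residue contains exactly the factor $\gamma_j^2(x,t)$ of the theorem. The leading-order Riemann--Hilbert problem thus has a single pole at $\pm\I\kappa_j$ with effective norming constant $\gamma_j^2(x,t)$, and is solved explicitly by Lemma~\ref{lem:singlesoliton} after the obvious substitution. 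Writing $\wha m=m_0(I+E)$ with $E$ governed by a small-norm Riemann--Hilbert problem whose jumps are $O(t^{-l})$, and reading off Laurent coefficients after undoing the conjugation, gives both asserted formulas. The main obstacle I expect is the uniformity of the small-norm estimate as $x/t$ ranges over $[c_j-\eps,c_j+\eps]$: one must rule out degeneracy of the associated singular integral operator uniformly in the parameters, which ultimately rests on the uniqueness result of Corollary~\ref{cor:unique} applied to the vanishing problem for the one-soliton model, ensuring that the model solution stays nondegenerate and the perturbation argument closes.
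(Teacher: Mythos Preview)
Your proposal is correct and follows essentially the same route as the paper: after the conjugation and deformation of Section~\ref{sec:condef}, all jumps of $\wha m$ are $O(t^{-l})$ except possibly the retained pole at $\pm\I\kappa_j$, and one compares with either the trivial solution $(1\;1)$ or the one-soliton solution of Lemma~\ref{lem:singlesoliton} via the small-norm theory of Appendix~\ref{sec:sieq} (specifically Theorem~\ref{thm:remcontour}), then reads off $Q$ and $q$ from the Laurent coefficients using \eqref{eq:asyhm} and \eqref{eq:asygf}. One small point of care: your multiplicative ansatz $\wha m = m_0(I+E)$ is slightly informal since $m_0$ is a row vector; the paper instead packages the perturbation additively through the singular integral equation \eqref{eq:sing4mu} with $m_0$ as inhomogeneous term, which is what Theorem~\ref{thm:remcontour} actually provides and which sidesteps the need for a matrix model solution.
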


\begin{proof}
Since $\wha{m}(k) = \ti{m}(k)$ for $k$ sufficiently far away from $\R$ equations \eqref{eq:asym}, \eqref{def:mti}, and \eqref{eq:asyt} imply
the following asymptotics
\be\label{eq:asyhm}
\wha{m}(k) = \begin{pmatrix} 1 & 1 \end{pmatrix} +
\left(-2T_1(k_0) + Q(x,t)\right) \frac{1}{2\I k} \begin{pmatrix}-1 & 1 \end{pmatrix} + O\left(\frac{1}{k^2}\right).
\ee
By construction, the jump along $\Sigma_\pm$ is exponentially decreasing as $t\to\infty$. 
Hence we can apply Theorem~\ref{thm:remcontour} as follows:

If $|\frac{x}{t}-c_j|>\varepsilon$ (resp. $|\kappa_0^2-\kappa_j^2|>\varepsilon$) for all $j$ we can choose $\gamma^t=0$ and $w^t=\wha{w}$
in Theorem~\ref{thm:remcontour}. Since $\wha{w}$ is exponentially small as $t\to\infty$, the
solutions of the associated Riemann--Hilbert problems only differ by $O(t^{-l})$ for any $l\ge 1$. 
Comparing $m_0=  \begin{pmatrix} 1 & 1 \end{pmatrix}$ with the above asymptotics shows $Q_+(x,t)= 2T_1(k_0) +O(t^{-l})$.

If $|\frac{x}{t} -c_j| < \varepsilon$ (resp. $|\kappa_0^2-\kappa_j^2|<\varepsilon$) for some $j$, we choose 
$\gamma^t=\gamma_k(x,t)$ and $w^t=\wha{w}$ in Theorem~\ref{thm:remcontour}, where 
\[
\gamma_j^2(x,t)=\gamma_j^2 \E^{t\Phi(\I\kappa_j)}T(\I\kappa_j, \I\frac{\kappa_j}{\sqrt{3}})^{-2}
= \gamma_j^2 \E^{-2\kappa_j x + 8 \kappa_j^3 t} \prod_{i=j+1}^N \left(\frac{\kappa_i-\kappa_j}{\kappa_i+\kappa_j}\right)^2.
\]
As before we conclude that $\wha{w}$ is exponentially small
and so the associated solutions of the Riemann--Hilbert problems only differ by $O(t^{-l})$.
From Lemma \ref{lem:singlesoliton}, we have the one-soliton solution 
$m_0(k) = \begin{pmatrix} f(k) & f(-k) \end{pmatrix}$
with
\[
f(k)=  \frac{1}{1+(2\kappa_j)^{-1}\gamma_j^2(x,t)}\big(1+\frac{k+\I\kappa_j}{k-\I\kappa_j}(2\kappa_j)^{-1}
\gamma_j^2(x,t)\big).
\]
As before, comparing with the above asymptotics shows
\[
Q(x,t)= 2 T_1(k_0) + \frac{2\gamma_j^2(x,t)}{1+(2\kappa_j)^{-1}\gamma_j^2(x,t)}+O(t^{-l}).
\]
To see the second part just use \eqref{eq:asygf} in place of \eqref{eq:asym}.
This finishes the proof in the case where $R(k)$ has an analytic extensions. We will remove this assumption
in Section~\ref{sec:analapprox} thereby completing the proof.
\end{proof}

Since the one-soliton solution is exponentially decaying away from its minimum, we also obtain the form stated in the introduction:

\begin{corollary}
Assume \eqref{decayl}, then the asymptotic in the soliton region, 
$x/t \geq C$ for some $C>0$, is given by
\be
q(x,t)= -2\sum_{j=1}^{N}\frac{\kappa_j^2}{\cosh^2(\kappa_j x-4\kappa_j^3 t -p_j)} + O(t^{-l}),
\ee
where
\be
p_j=\frac{1}{2}\log\left(\frac{\gamma_j^2}{2\kappa_j}\prod_{i=j+1}^{N}
\left(\frac{\kappa_i-\kappa_j}{\kappa_i+\kappa_j}\right)^2\right).
\ee
\end{corollary}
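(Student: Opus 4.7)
The strategy is to patch together the two cases of Theorem~\ref{thm:asym}. Fix $\varepsilon>0$ small enough that the intervals $I_j=[c_j-\varepsilon,c_j+\varepsilon]$ with $c_j=4\kappa_j^2$ are disjoint and lie inside $(0,\infty)$. For every $(x,t)$ with $x/t\ge C$, either $x/t\in I_j$ for a unique $j$ or $x/t$ lies outside all of them. In the first case I will show that the rational expression for $q(x,t)$ given by Theorem~\ref{thm:asym} equals $-2\kappa_j^2/\cosh^2(\kappa_j x-4\kappa_j^3 t-p_j)$ exactly, and that the remaining $N-1$ terms of the target sum are exponentially small. In the second case both $q(x,t)$ and the full sum are $O(t^{-l})$.

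The central algebraic step is to rewrite $(2\kappa_j)^{-1}\gamma_j^2(x,t)$ in hyperbolic form. Setting
\[
\theta_j=\kappa_j x-4\kappa_j^3 t-p_j,
\qquad
\alpha_j=\frac{\gamma_j^2}{2\kappa_j}\prod_{i=j+1}^{N}\left(\frac{\kappa_i-\kappa_j}{\kappa_i+\kappa_j}\right)^{\!2},
\]
the definition of $p_j$ gives $\E^{2p_j}=\alpha_j$, and the explicit formula for $\gamma_j^2(x,t)$ in Theorem~\ref{thm:asym} yields
\[
(2\kappa_j)^{-1}\gamma_j^2(x,t)=\alpha_j\,\E^{-2\kappa_j x+8\kappa_j^3 t}=\E^{-2\theta_j}.
\]
Substituting this into the single-soliton formula from Theorem~\ref{thm:asym} and using the elementary identity $(1+\E^{-2\theta_j})^2=4\E^{-2\theta_j}\cosh^2\theta_j$, one computes
\[
\frac{-4\kappa_j\gamma_j^2(x,t)}{(1+(2\kappa_j)^{-1}\gamma_j^2(x,t))^2}
=\frac{-8\kappa_j^2\,\E^{-2\theta_j}}{4\E^{-2\theta_j}\cosh^2\theta_j}
=\frac{-2\kappa_j^2}{\cosh^2\theta_j},
\]
which is exactly the $j$-th term of the target sum.

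It remains to control the stray terms. Whenever $x/t\in I_j$ and $i\ne j$, the distinctness of the $c_i$'s forces $|x/t-c_i|\ge \min_{i\ne j}|c_i-c_j|-\varepsilon=:\delta>0$ for $\varepsilon$ small, so $|\kappa_i x-4\kappa_i^3 t-p_i|\ge \kappa_i\delta\,t+O(1)$ and the corresponding $\cosh^{-2}$ term is bounded by $\E^{-2\kappa_i\delta t}$, which is absorbed in the $O(t^{-l})$ error. The same exponential bound handles \emph{every} term of the sum in the second case, in which Theorem~\ref{thm:asym} already gives $q(x,t)=O(t^{-l})$. Combining these estimates across the two regimes yields the stated asymptotics uniformly in the soliton region.

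The only potential obstacle is bookkeeping the matching between the $I_j$-window formula and the away-from-solitons formula to ensure the $O(t^{-l})$ constant can be taken uniform in $x/t\ge C$; this is handled by the disjointness of the $I_j$ and the exponential decay of the $\cosh^{-2}$ factors outside their respective windows, so no genuine difficulty arises beyond the algebraic identity above.
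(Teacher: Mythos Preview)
Your proposal is correct and follows exactly the approach the paper intends: the paper's proof is the single sentence ``Since the one-soliton solution is exponentially decaying away from its minimum, we also obtain the form stated in the introduction,'' and your argument simply spells out this decay via the identity $(2\kappa_j)^{-1}\gamma_j^2(x,t)=\E^{-2\theta_j}$ together with the case split from Theorem~\ref{thm:asym}. There is nothing to correct.
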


\section{Reduction to a Riemann--Hilbert problem on a small cross}
\label{sec:simrhp}

In the previous section we have seen that for $k_0\in\R$ we can reduce everything to a 
Riemann--Hilbert problem for $\wha{m}(k)$ such that the jumps are exponentially close to the identity
except in small neighborhoods of the stationary phase points $k_0$ and $-k_0$. Hence we 
need to continue our investigation of this case in this section. 

Denote by $\Sigma^c(\pm k_0)$ the parts of $\Sigma_+\cup \Sigma_-$ inside a small neighborhood of $\pm k_0$. 
We will now show that solving the two problems on the small crosses $\Sigma^c(k_0)$ respectively $\Sigma^c(-k_0)$
will lead us to the solution of our original problem.

In fact, the solution of both crosses can be reduced to the following model problem:
Introduce the cross $\Sigma = \Sigma_1 \cup\dots\cup \Sigma_4$ (see Figure~\ref{fig:contourcross}) by
\begin{align}
\nn \Sigma_1 & = \{u \E^{-\I\pi/4},\,u\in [0,\infty)\} &
\Sigma_2 & = \{u \E^{\I\pi/4},     \,u\in [0,\infty)\} \\
\Sigma_3 & = \{u \E^{3\I\pi/4},    \,u\in [0,\infty)\} &
\Sigma_4 & = \{u \E^{-3\I\pi/4},   \,u\in [0,\infty)\}.
\end{align}
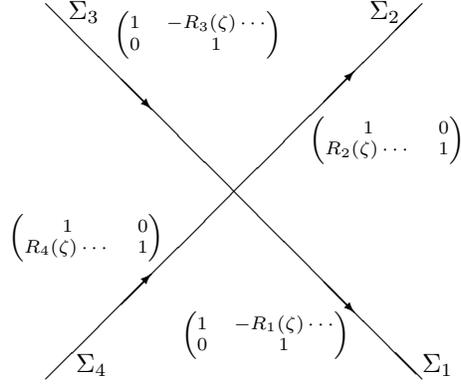
\begin{figure}
\begin{picture}(7,5.2)

\put(1,5){\line(1,-1){5}}
\put(2,4){\vector(1,-1){0.4}}
\put(4.7,1.3){\vector(1,-1){0.4}}
\put(1,0){\line(1,1){5}}
\put(2,1){\vector(1,1){0.4}}
\put(4.7,3.7){\vector(1,1){0.4}}

\put(6.0,0.1){$\Sigma_1$}
\put(5.3,4.8){$\Sigma_2$}
\put(1.3,4.8){$\Sigma_3$}
\put(1.4,0.1){$\Sigma_4$}

\put(2.8,0.5){$\scriptsize\begin{pmatrix} 1 & - R_1(\zeta) \cdots\\ 0 & 1 \end{pmatrix}$}
\put(4.5,3.1){$\scriptsize\begin{pmatrix} 1 & 0 \\ R_2(\zeta) \cdots & 1 \end{pmatrix}$}
\put(1.9,4.5){$\scriptsize\begin{pmatrix} 1 & - R_3(\zeta) \cdots \\ 0 & 1 \end{pmatrix}$}
\put(0.5,1.8){$\scriptsize\begin{pmatrix} 1 & 0 \\ R_4(\zeta) \cdots  & 1 \end{pmatrix}$}

\end{picture}
\caption{Contours of a cross}
\label{fig:contourcross}
\end{figure}%
Orient $\Sigma$ such that the real part of $k$ increases
in the positive direction. Denote by $\mathbb{D} = \{\zeta,\,|\zeta|<1\}$ the
open unit disc. Throughout this section $\zeta^{\I\nu}$ will denote
the function $\E^{\I \nu \log(\zeta)}$, where the branch cut of the logarithm is chosen along
the negative real axis $(-\infty,0)$.

Introduce the following jump matrices ($v_j$ for $\zeta\in\Sigma_j$)
\begin{align}
\nn v_1 &= \begin{pmatrix} 1 & - R_1(\zeta) \zeta^{2\I\nu} \E^{- t \Phi(\zeta)} \\ 0 & 1 \end{pmatrix}, &
v_2 &= \begin{pmatrix} 1 & 0 \\ R_2(\zeta) \zeta^{-2\I\nu} \E^{t \Phi(\zeta)} & 1 \end{pmatrix},  \\
v_3 &= \begin{pmatrix} 1 & - R_3(\zeta) \zeta^{2\I\nu} \E^{- t \Phi(\zeta)} \\ 0 & 1 \end{pmatrix}, &
v_4 &= \begin{pmatrix} 1 & 0 \\ R_4(\zeta) \zeta^{-2\I\nu} \E^{t \Phi(\zeta)}  & 1 \end{pmatrix}
\end{align}
and consider the RHP given by
\begin{align}\label{eq:rhpcross}
M_+(\zeta) &= M_-(\zeta) v_j(\zeta), && \zeta\in\Sigma_j,\quad j=1,2,3,4,\\ \nn
M(\zeta) &\to \id, && \zeta\to \infty.
\end{align}
The solution is given in the following theorem of Deift and Zhou \cite{dz} (for a proof of
the version stated below see Kr\"uger and Teschl \cite{kt2}).

\begin{theorem}[\cite{dz}]\label{thm:solcross}
Assume there is some $\rho_0>0$ such that $v_j(\zeta)=\id$ for $|\zeta|>\rho_0$. Moreover,
suppose that within $|\zeta|\le\rho_0$ the following estimates hold:
\begin{enumerate}
\item
The phase satisfies $\Phi(0)=\I\Phi_0\in\I\R$, $\Phi'(0) = 0$, $\Phi''(0) = \I$ and
\begin{align}\label{estPhi}
\pm \re\big(\Phi(\zeta)\big) &\geq \frac{1}{4} |\zeta|^2,\quad
\begin{cases} + & \mbox{for } \zeta\in\Sigma_1\cup\Sigma_3,\\ - &\mbox{else},\end{cases}\\ \label{estPhi2}
|\Phi(\zeta) - \Phi(0) - \frac{\I \zeta^2}{2}| &\leq C |\zeta|^3.
\end{align}
\item
There is some $r\in\mathbb{D}$ and constants $(\alpha, L) \in (0,1] \times (0,\infty)$
such that $R_j$, $j=1,\dots,4$, satisfy H\"older conditions of the form
\begin{align}\nn
\abs{R_1(\zeta) - \ol{r}} &\leq L |\zeta|^\alpha, &
\abs{R_2(\zeta) - r} &\leq L |\zeta|^\alpha, \\\label{holdcondrj}
\abs{R_3(\zeta) - \frac{\ol{r}}{1-\abs{r}^2}} &\leq L |\zeta|^\alpha, &
\abs{R_4(\zeta) - \frac{r}{1-\abs{r}^2}} &\leq L |\zeta|^\alpha.
\end{align}
\end{enumerate}
Then the solution of the RHP \eqref{eq:rhpcross} satisfies
\be
M(\zeta) = \id + \frac{1}{\zeta} \frac{\I}{t^{1/2}} \begin{pmatrix} 0 & -\beta \\ \ol{\beta} & 0 \end{pmatrix}
+ O(t^{- \frac{1 + \alpha}{2}}),
\ee
for $|\zeta|>\rho_0$, where
\be
\beta = \sqrt{\nu} \E^{\I(\pi/4-\arg(r)+\arg(\Gamma(\I\nu)))} \E^{-\I t \Phi_0} t^{-\I\nu},
\qquad \nu = - \frac{1}{2\pi} \log(1 - |r|^2).
\ee
Furthermore, if $R_j(\zeta)$ and $\Phi(\zeta)$ depend on some parameter, the error term is uniform
with respect to this parameter as long as $r$ remains within a compact subset of $\mathbb{D}$
and the constants in the above estimates can be chosen independent of the parameters.
\end{theorem}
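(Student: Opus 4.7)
The plan is to reduce the cross RHP \eqref{eq:rhpcross} to an explicitly solvable constant–coefficient model RHP by two changes of variables, solve the model RHP via parabolic cylinder functions, and finally control the error coming from the approximations using the small–norm theory of singular integral operators.

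First I would rescale, setting $z = t^{1/2}\zeta$ and $\widetilde M(z) = M(z/t^{1/2})$. Using (\ref{estPhi2}) we have
\begin{equation}
t\Phi(\zeta) = \I t\Phi_0 + \frac{\I z^{2}}{2} + O\bigl(t^{-1/2}|z|^{3}\bigr),
\end{equation}
on $|\zeta|\le\rho_0$, and the factor $\zeta^{\pm 2\I\nu}$ becomes $z^{\pm 2\I\nu} t^{\mp\I\nu}$. Absorbing the constants $t^{\mp\I\nu}\E^{\mp\I t\Phi_0}$ into a conjugation by a constant diagonal matrix, the rescaled RHP has jumps on the same cross $\Sigma$ with matrices of the form
\begin{equation}
\widetilde v_j(z) = \id + \bigl[r_j(z) - r_j^\infty\bigr]\,N_j(z) + r_j^\infty N_j(z),
\end{equation}
where $r_j^\infty\in\{r,\bar r, r/(1-|r|^2),\bar r/(1-|r|^2)\}$ are the limits occurring in (\ref{holdcondrj}) and $N_j(z)$ are the off–diagonal nilpotent matrices carrying $z^{\pm 2\I\nu}\E^{\mp\I z^2/2}$.

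Second, I would introduce the model RHP obtained by replacing $r_j(z)$ by $r_j^\infty$ and the phase by its quadratic part exactly. This pure model problem is the classical one of Its, treated in Deift–Zhou \cite{dz}. It is solved explicitly: one verifies that any solution $M^{\mathrm{mod}}(z)$ satisfies a first–order linear ODE in $z$ whose coefficient matrix is polynomial of degree one, with coefficients determined by the jumps; the components reduce to Weber's parabolic cylinder equation and the unique solution normalized by $M^{\mathrm{mod}}\to\id$ at infinity is expressible in terms of $D_{\pm\I\nu}(\pm z\E^{\mp 3\I\pi/4})$. Using the standard asymptotic expansion $D_a(w)\sim w^a \E^{-w^2/4}$ as $w\to\infty$ together with the Stokes relation connecting sectors, one reads off
\begin{equation}
M^{\mathrm{mod}}(z) = \id + \frac{1}{z}\begin{pmatrix} 0 & -\I\beta_0 \\ \I\overline{\beta_0} & 0 \end{pmatrix} + O(z^{-2}),
\end{equation}
with $\beta_0 = \sqrt{\nu}\,\E^{\I(\pi/4-\arg r + \arg\Gamma(\I\nu))}$; the arithmetic of the $\Gamma$-factor here is the only nontrivial piece and follows from the connection formulas for $D_a$. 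Undoing the constant conjugation inserts the factor $\E^{-\I t\Phi_0} t^{-\I\nu}$ to give the $\beta$ in the statement.

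Third, I would control the error between the true rescaled problem and the model problem. Writing $\widetilde M = M^{\mathrm{mod}}\, E$, the correction $E$ solves an RHP with jumps on $\Sigma$ whose deviation from the identity is, by (\ref{holdcondrj}), of size $O(|z/t^{1/2}|^{\alpha})$ from the Hölder remainder in $R_j$ and of size $O(t^{-1/2}|z|^3)$ from the cubic remainder in $\Phi$; the crucial decay comes from (\ref{estPhi}), which shows that the $\E^{\mp t\Phi(\zeta)}$ factors give Gaussian damping $\E^{-|z|^2/4}$ in the new variable, so both remainders, when paired with the Gaussian, yield $L^2\cap L^\infty$ norms of order $t^{-\alpha/2}$. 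Standard small–norm Riemann–Hilbert theory (the Beals–Coifman representation, cf.\ Appendix~\ref{sec:sieq}) then gives $E(z) = \id + O(t^{-(1+\alpha)/2}/(1+|z|))$ uniformly for $z$ outside a fixed neighborhood of $\Sigma$. Translating back via $\zeta = z/t^{1/2}$ and comparing residues at infinity produces the stated expansion. The uniformity claim is automatic because all bounds only enter through $r$ (which stays in a compact subset of $\mathbb D$, so $\nu$ stays bounded) and through the constants in (\ref{estPhi})–(\ref{holdcondrj}).

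The main obstacle is the explicit solution of the constant–coefficient model RHP: identifying the correct ODE, selecting the right branches of $D_a$ in the four sectors so that all four jump relations hold simultaneously, and keeping track of the $\Gamma(\I\nu)$ phase when reading off the large–$z$ expansion. Everything else is, once this is in place, either a rescaling or a standard small–norm argument.
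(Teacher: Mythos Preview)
The paper does not prove Theorem~\ref{thm:solcross}; it quotes the result from Deift--Zhou \cite{dz} and refers to Kr\"uger--Teschl \cite{kt2} for a proof of the precise version stated. So there is no in-paper argument to compare against.

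That said, your outline is exactly the classical Deift--Zhou scheme that those references carry out: rescale by $z=t^{1/2}\zeta$, conjugate away the constants $t^{\mp\I\nu}\E^{\mp\I t\Phi_0}$, solve the resulting constant-coefficient model problem on the infinite cross via parabolic cylinder functions (Its' ODE argument), and then show the ratio with the true problem has small-norm jumps using \eqref{estPhi}--\eqref{holdcondrj}. Two small points worth tightening if you flesh this out. First, in the error step you should write $E=\widetilde M\,(M^{\mathrm{mod}})^{-1}$ rather than $\widetilde M=M^{\mathrm{mod}}E$; with your ordering the jump of $E$ is conjugated by $M^{\mathrm{mod}}_\pm$ in a way that does not directly reduce to $\widetilde v(v^{\mathrm{mod}})^{-1}$ conjugated by a bounded matrix. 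Second, since the original jumps are the identity for $|\zeta|>\rho_0$ while the model problem lives on the full infinite cross, you also need to account for the discrepancy on $|z|>t^{1/2}\rho_0$; this is harmless because the Gaussian bound from \eqref{estPhi} makes that contribution exponentially small, but it should be mentioned. With these adjustments your sketch matches the argument in \cite{kt2}.
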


\begin{theorem}[Decoupling]\label{thm:decoup}
Consider the Riemann--Hilbert problem 
\begin{align}\nn
& m_+(k)=m_-(k)v(k), \qquad k\in\Sigma, \\
& \lim_{\kappa\to\infty} m(\I\kappa) = \begin{pmatrix} 1 & 1\end{pmatrix},
\end{align}
with $\det(v)\ne 0$ and let $0<\alpha<\beta \leq 2\alpha, \rho(t)\to\infty$ be given.

Suppose that for every sufficiently small $\varepsilon >0$ both the $L^2$ and the $L^{\infty}$
norms of $v$ are $O(t^{-\beta})$ away from some $\varepsilon$ neighborhoods of some points 
$k_j\in\Sigma$, $1\leq j\leq n$. Moreover, suppose that the solution of the matrix problem with jump 
$v(k)$ restricted to the $\varepsilon$ neighborhood of $k_j$ has a solution which satisfies 
\be
M_j(k)=\id +\frac{1}{\rho(t)^{\alpha}}\frac{M_j}{k-k_j}+O(\rho(t)^{-\beta}), \qquad 
\left\vert k-k_j \right\vert >\varepsilon.
\ee
Then the solution $m(k)$ is given by 
\be
m(k)=\begin{pmatrix} 1 & 1 \end{pmatrix} + 
\frac{1}{\rho(t)^{\alpha}}\begin{pmatrix} 1 & 1 \end{pmatrix} \sum_{j=1}^n \frac{M_j}{k-k_j} +
O(\rho(t)^{-\beta}),
\ee
where the error term depends on the distance of $k$ to $\Sigma$.
\end{theorem}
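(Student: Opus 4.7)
The plan is to follow the standard Deift--Zhou small--norm argument. I construct a global parametrix by patching the local models $M_j$:
\[
M^{\mathrm{par}}(k) = \begin{cases} M_j(k), & |k-k_j|<\varepsilon,\\
\id, & \text{elsewhere},\end{cases}
\qquad E(k) = m(k)\,M^{\mathrm{par}}(k)^{-1},
\]
where $m$ is treated as a $1\times 2$ row on which $M^{\mathrm{par}}$ acts from the right. Since $M_j$ has the same jump $v(k)$ as $m$ on $\Sigma\cap\{|k-k_j|<\varepsilon\}$, the function $E$ has no jump there; it acquires instead a new jump $v_E(k)=M_j(k)^{-1}$ across each circle $\{|k-k_j|=\varepsilon\}$ and retains the original jump $v(k)$ on $\Sigma_o:=\Sigma\setminus\bigcup_j B_\varepsilon(k_j)$. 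The normalization $E(\I\kappa)\to\rI$ as $\kappa\to\infty$ is inherited from $m$ because $M^{\mathrm{par}}=\id$ outside a compact set.

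Set $w_E:=v_E-\id$ on the new contour $\Sigma_E$. By hypothesis $\|w_E\|_{L^2\cap L^\infty(\Sigma_o)}=O(t^{-\beta})$, while the expansion of $M_j$ gives $w_E(k)=\mp\rho(t)^{-\alpha}M_j/(k-k_j)+O(\rho(t)^{-\beta})$ on $\{|k-k_j|=\varepsilon\}$ (the sign depending on orientation). Because $\|w_E\|_{L^\infty}\to 0$, standard bounds on the Cauchy projection give uniform invertibility of $1-C_{w_E}$ on $L^2(\Sigma_E)$ for large $t$, so the associated Beals--Coifman singular integral equation is uniquely solvable with $\mu=\rI+O(\rho(t)^{-\alpha})$, and
\[
E(k) = \rI + \frac{1}{2\pi\I}\int_{\Sigma_E}\frac{\mu(s)\,w_E(s)}{s-k}\,ds.
\]
Replacing $\mu$ by $\rI$ and evaluating each circle integral by a residue computation at $s=k_j$, the leading contribution of the circle around $k_j$ is $\rho(t)^{-\alpha}\,\rI\,M_j/(k-k_j)$ for $k$ outside the disc. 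Since $E(k)=m(k)$ off all the discs, this produces the asserted leading expression at once.

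The main technical point is to verify that the neglected contributions combine to $O(\rho(t)^{-\beta})$. Three sources of error must be controlled: the integral over $\Sigma_o$ is immediately $O(t^{-\beta})$; the $O(\rho(t)^{-\beta})$ remainder in the expansion of $w_E$ on the circles contributes at worst that order; and the term coming from $\mu-\rI$, estimated by Cauchy--Schwarz in $L^2$, is bounded by $\|\mu-\rI\|_{L^2}\|w_E\|_{L^2}=O(\rho(t)^{-2\alpha})$. The hypothesis $\beta\le 2\alpha$ is exactly what is needed to absorb this quadratic term into the final error, and this is the step that I expect to require the most careful bookkeeping. Collecting the three estimates then yields the claimed formula for $m(k)$.
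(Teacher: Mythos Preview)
Your proposal is correct and follows essentially the same route as the paper: build a piecewise parametrix from the local models $M_j$, pass to the ratio $E=m\,(M^{\mathrm{par}})^{-1}$, observe that the residual jump is $\id+O(\rho(t)^{-\alpha})$ on the circles and $\id+O(\rho(t)^{-\beta})$ elsewhere, invert $\id-C_{w_E}$ by a Neumann series, and extract the leading term by residues, with the $\beta\le 2\alpha$ hypothesis absorbing the quadratic $\|\mu-\rI\|_2\|w_E\|_2$ term. The only cosmetic difference is that the paper places the matching circles at radius $2\varepsilon$ rather than $\varepsilon$, which introduces an annular region carrying the conjugated jump $M_j v M_j^{-1}$; this buffer guarantees that the expansion of $M_j$ (stated for $|k-k_j|>\varepsilon$) is available on the circle, a point you should make explicit by taking your circle radius strictly larger than the $\varepsilon$ appearing in the hypothesis.
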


\begin{proof}
In this proof we will use the theory developed in Appendix \ref{sec:sieq} with $m_0(k)=\begin{pmatrix} 1 & 1 \end{pmatrix}$ 
and the usual Cauchy kernel $\Omega_\infty(s,k)=\id \frac{ds}{s-k}$. Moreover, since
symmetry is not important, we will consider $C_w$ on $L^2(\Sigma)$ rather than restricting
it to the symmetric subspace $L^2_s(\Sigma)$. Here $w_\pm= \pm(b_\pm-\id)$ correspond to some factorization $v=b_-^{-1}b_+$
of $v$ (e.g., $b_-=\id$ and $b_+=v$). Assume that $m(k)$ exists, then the same arguments
as in the appendix show that 
\[
m(k) =\begin{pmatrix} 1 & 1 \end{pmatrix} + \frac{1}{2\pi\I} \int_\Sigma \mu(s) (w_+(s) + w_-(s)) \Omega_\infty(s,k),
\]
where $\mu$ solves
\[
(\id - C_w) (\mu-\begin{pmatrix} 1 & 1 \end{pmatrix}) = C_w \begin{pmatrix} 1 & 1 \end{pmatrix}.
\]
Introduce $\hat{m}(k)$ by 
\be
\aligned
\hat{m}(k)=\begin{cases} m(k)M_j(k)^{-1}, & \left\vert k-k_j \right\vert \leq 2\varepsilon,\\
m(k), & \text{else}. \end{cases}
\endaligned
\ee
The Riemann--Hilbert problem for $\hat{m}(k)$ has jumps given by
\be
\hat{v}(k)=\begin{cases} M_j(k)^{-1}, & \left\vert k-k_j \right\vert = 2\varepsilon,\\
M_j(k)v(k)M_j(k)^{-1}, & k\in\Sigma, 
\varepsilon < \left\vert k-k_j \right\vert < 2\varepsilon,\\
\id, & k\in\Sigma, \left\vert k-k_j \right\vert < \varepsilon,\\
v(k), & \text{else}. \end{cases}
\ee
By assumption the jumps are $\id+O(\rho(t)^{-\alpha})$ on the circles $|k-k_j| = 2\varepsilon$ and even
$\id +O(\rho(t)^{-\beta})$ on the rest (both in the $L^2$ and $L^\infty$ norms). 
In particular, we infer that $(\id-C_{\hat{w}})^{-1}$ exists for sufficiently large $t$ and using the 
Neumann series to estimate $(\hat{\mu}-\begin{pmatrix} 1 & 1 \end{pmatrix})=(\id-C_{\hat{w}})^{-1}C_{\hat{w}}
\begin{pmatrix} 1 & 1 \end{pmatrix}$ (cf.\ the proof of Theorem~\ref{thm:remcontour}) we obtain
\be
\left\Vert \hat{\mu}-\begin{pmatrix} 1 & 1 \end{pmatrix} \right\Vert_2 \leq  \frac{c\Vert \hat{w} \Vert_2}{1-c\Vert \hat{w} \Vert_{\infty}}
= O (\rho(t)^{-\alpha}).
\ee
Thus we conclude
\be
\aligned
m(k) & = \begin{pmatrix} 1 & 1 \end{pmatrix} + 
\frac{1}{2\pi\I}\int_{\hat{\Sigma}} \hat{\mu}(s)\hat{w}(s) \frac{ds}{s-k}\\
& = \begin{pmatrix} 1 & 1\end{pmatrix} +
\frac{1}{2\pi\I}\sum_{j=1}^n \int_{|s-k_j| = 2\varepsilon}\hat{\mu}(s)(M_j(s)^{-1}-\id)\frac{ds}{s-k}+O(\rho(t)^{-\beta})\\
& = \begin{pmatrix} 1 & 1 \end{pmatrix}- \rho(t)^{-\alpha}\begin{pmatrix} 1 & 1 \end{pmatrix} \frac{1}{2\pi\I}
\sum_{j=1}^n M_j \int_{|s-k_j| = 2\varepsilon}\frac{1}{s-k_j}\frac{ds}{s-k}+O (\rho(t)^{-\beta})\\
& = \begin{pmatrix} 1 & 1 \end{pmatrix} + \rho(t)^{-\alpha} \begin{pmatrix} 1 & 1 \end{pmatrix}
\sum_{j=1}^n \frac{M_j}{k-k_j} + O(\rho(t)^{-\beta}),
\endaligned
\ee
and hence the claim is proven.
\end{proof}

Now let us turn to the solution of the problem on 
$\Sigma^c(k_0)=(\Sigma_+\cup\Sigma_-)\cap\{k \vert \left\vert k-k_0 \right\vert<\varepsilon\}$
for some small $\varepsilon > 0$. Without loss we can also deform our contour slightly such
that $\Sigma^c(k_0)$ consists of two straight lines.
Next, note
\[
\Phi(k_0)=-16 \I k_0^3, \qquad \Phi''(k_0)=48\I k_0.
\]

As a first step we make a change of coordinates
\be\label{eq:zeta}
\zeta=\sqrt{48 k_0}(k-k_0) , \qquad k=k_0+\frac{\zeta}{\sqrt{48 k_0}}
\ee
such that the phase reads $\Phi(k)=\Phi(k_0)+\frac{\I}{2}\zeta^2+O(\zeta^3)$. 

Next we need the behavior of our jump matrix near $k_0$, that is, the behavior of $T(k,k_0)$ near $k_0$.

\begin{lemma}
Let $k_0\in\R$, then
\be
T(k,k_0)=\left(\frac{k-k_0}{k+k_0}\right)^{\I\nu} \ti{T}(k,k_0),
\ee
where $\nu=-\frac{1}{\pi}\log(\left\vert T(k_0) \right\vert)>0$ and the branch cut
of the logarithm is chosen along the negative real axis.
Here
\be\label{tiT}
\ti{T}(k,k_0)=\prod_{j=1}^{N} \frac{k+\I\kappa_j}{k-\I\kappa_j} 
\exp\left(\frac{1}{2\pi\I}\int\limits_{-k_0}^{k_0} 
\log\left( \frac{\left\vert T(\zeta) \right\vert ^2}{\left\vert T(k_0) \right\vert ^2} \right)
\frac{1}{\zeta -k}{d\zeta}\right)
\ee
is H\"older continuous of any exponent less than $1$ at the stationary phase point $k=k_0$ and satisfies
$|\ti{T}(k_0,k_0)|=1$.
\end{lemma}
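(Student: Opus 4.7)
The plan is to verify the factorization by a direct manipulation of the integral defining $T(k,k_0)$ and then to deduce the regularity and normalization of $\ti T$ from classical Cauchy-integral estimates.

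Starting from \eqref{def:Tkk0} with $k_0\in\R^+$, I would split the density of the logarithmic integral as $\log(|T(\zeta)|^2) = \log(|T(k_0)|^2) + \log(|T(\zeta)|^2/|T(k_0)|^2)$ and treat the two pieces separately. For $k\in\C\setminus[-k_0,k_0]$, the elementary identity
\begin{equation*}
\int_{-k_0}^{k_0}\frac{d\zeta}{\zeta-k} = \log\frac{k-k_0}{k+k_0}
\end{equation*}
(with the principal branch of $\log$, whose cut for this ratio is the segment $[-k_0,k_0]$) combined with the factor $\log|T(k_0)|^2/(2\pi\I) = \I\nu$ exponentiates to exactly $((k-k_0)/(k+k_0))^{\I\nu}$, while the remaining contribution reassembles into $\ti T(k,k_0)$ as displayed in \eqref{tiT}.

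Next I would establish the H\"older property of $\ti T$ at $k_0$. Since the Blaschke product is analytic near the real interior point $k_0$, everything reduces to controlling the Cauchy-type integral
\begin{equation*}
I(k)=\frac{1}{2\pi\I}\int_{-k_0}^{k_0}\phi(\zeta)\,\frac{d\zeta}{\zeta-k},\qquad \phi(\zeta)=\log\frac{|T(\zeta)|^2}{|T(k_0)|^2}.
\end{equation*}
The decay hypothesis \eqref{decayl} with $l\ge 1$ makes $R$, and hence $\phi$, at least Lipschitz in a neighborhood of $k_0$, and by construction $\phi(k_0)=0$. Writing $\phi(\zeta)=(\zeta-k_0)\psi(\zeta)$ with $\psi$ bounded yields
\begin{equation*}
I(k) = \frac{1}{2\pi\I}\int_{-k_0}^{k_0}\psi(\zeta)\,d\zeta + \frac{k-k_0}{2\pi\I}\int_{-k_0}^{k_0}\frac{\psi(\zeta)}{\zeta-k}\,d\zeta,
\end{equation*}
and by Plemelj--Privalov/Muskhelishvili the second integral is a bounded function with at worst a $\log(k-k_0)$ singularity at the endpoint $k_0$; the prefactor $(k-k_0)$ then produces H\"older continuity of any exponent strictly less than $1$.

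Finally, $|\ti T(k_0,k_0)|=1$ is a direct verification: each Blaschke factor $(k_0+\I\kappa_j)/(k_0-\I\kappa_j)$ has unit modulus as the ratio of a complex number to its conjugate, while
\begin{equation*}
I(k_0)= -\frac{\I}{2\pi}\int_{-k_0}^{k_0}\frac{\phi(\zeta)}{\zeta-k_0}\,d\zeta
\end{equation*}
is purely imaginary, both $\phi$ and $(\zeta-k_0)^{-1}$ being real on the integration range and the integral being absolutely convergent due to the vanishing of $\phi$ at $\zeta=k_0$. The main obstacle is precisely the H\"older estimate: a Cauchy integral of a Lipschitz density on an interval is generically only Lipschitz in the interior and develops a logarithmic singularity at the endpoints. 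The vanishing of $\phi$ at the relevant endpoint $k_0$ trades this logarithm for an additional factor $(k-k_0)$, which is exactly what recovers any H\"older exponent less than $1$ but no better, matching the statement of the lemma.
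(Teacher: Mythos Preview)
Your argument is correct and follows exactly the same route as the paper: split off the constant $\log|T(k_0)|^2$ from the density, exponentiate the resulting elementary integral to obtain $((k-k_0)/(k+k_0))^{\I\nu}$, and then invoke classical Cauchy-integral regularity (the paper simply cites Muskhelishvili \cite{mu}) for the H\"older continuity of the remainder. Your write-up is in fact more detailed than the paper's two-line proof---you spell out why the endpoint logarithm is tamed by the vanishing $\phi(k_0)=0$ and you verify $|\ti T(k_0,k_0)|=1$ explicitly---but the strategy is identical.
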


\begin{proof}
First of all observe that
\be
\exp\left(\frac{1}{2\pi \I}\int_{-k_0}^{k_0} \log(\left\vert T(k_0) \right\vert^2)
\frac{1}{\zeta-k}d\zeta \right) = \left(\frac{k-k_0}{k+k_0}\right)^{\I \nu}. 
\ee
H\"older continuity of any exponent less than $1$ is well-known (cf.\ \cite{mu}).
\end{proof}

If $k(\zeta)$ is defined as in \eqref{eq:zeta} and $0<\alpha<1$, then there is an $L>0$ such that
\be
\left\vert T(k(\zeta ),k_0)-\zeta^{\I\nu}\ti{T}(k_0,k_0)\E^{-\I\nu \log(2 k_0\sqrt{48 k_0})}
\right\vert\leq L\left\vert \zeta \right\vert^{\alpha},
\ee
where the branch cut of $\zeta^{\I\nu}$ is chosen along the negative real axis.

We also have 
\be
\left\vert R(k(\zeta))-R(k_0) \right\vert \leq L\left\vert \zeta \right\vert^{\alpha}
\ee
and thus the assumptions of Theorem~\ref{thm:solcross} are satisfied with
\be
r=R(k_0)\ti{T}(k_0,k_0)^{-2}\E^{2\I \nu \log(2k_0 \sqrt{48k_0})}
\ee
and $\nu=-\frac{1}{2\pi}\log(1-|R(z_0)|^2)$ since $|r|=|R(z_0)|$.
Therefore we can conclude that the solution on $\Sigma^c(k_0)$ is given by 
\be
\aligned
M_1^c(k) & =\id + \frac{1}{\zeta}\frac{\I}{t^{1/2}}\begin{pmatrix} 0 &  -\beta \\ \ol{\beta} & 0 \end{pmatrix} + O(t^{-\alpha})\\
& = \id+ \frac{1}{\sqrt{48k_0}(k-k_0)}\frac{\I}{t^{1/2}} \begin{pmatrix} 0 & -\beta \\
\ol{\beta} & 0 \end{pmatrix} + O(t^{-\alpha}),
\endaligned
\ee
where $\beta$ is given by
\be
\aligned
\beta & = \sqrt{\nu}\E^{\I (\pi/4 -\arg(r) +\arg(\Gamma (\I\nu)))}\E^{-t\Phi(k_0)}t^{-\I\nu}\\
& = \sqrt{\nu}\E^{\I(\pi/4- \arg(R(k_0)) +\arg(\Gamma(\I\nu)))}\ti{T}(k_0,k_0)^2 (192k_0^3)^{-\I\nu}\E^{-t\Phi(k_0)}t^{-\I\nu}
\endaligned
\ee
and $1/2<\alpha<1$.

We also need the solution $M_2^c(k)$ on $\Sigma^c(-k_0)$. We make the following ansatz, which
is inspired by the symmetry condition for the vector Riemann--Hilbert problem, outside the two small crosses:
\begin{align}\nn
M_2^c(k) &= \sigI M_1^c(-k) \sigI\\
 &= \id - \frac{1}{\sqrt{48k_0}(k+k_0)}\frac{\I}{t^{1/2}}\begin{pmatrix} 0 & \ol{\beta} \\ 
-\beta & 0 \end{pmatrix} +O(t^{-\alpha}).
\end{align}
Applying Theorem~\ref{thm:decoup} yields the following result:

\begin{theorem}\label{thm:asym2}
Assume 
\be
\int_{\R} (1+|x|)^6 |q(x,0)| dx < \infty,
\ee
then the asymptotics in the similarity region, $x/t \leq -C$ for some $C>0$, are given by 
\begin{align}\nn
\int_x^{\infty} q(y,t)dy= & -4 \sum_{\kappa_j\in (\kappa_0,\infty)} \kappa_j - \frac{1}{\pi} \int_{-k_0}^{k_0} \log(\left\vert T(\zeta) \right\vert^2)d\zeta\\\label{eq:simasymp2}
& -\sqrt{\frac{\nu(k_0)}{3 k_0 t}}
\cos (16tk_0^3 -\nu(k_0) \log(192 t k_0^3)+\delta(k_0) )+O(t^{-\alpha})
\end{align}
respectively
\be\label{eq:simasymp}
\aligned
q(x,t)=\sqrt{\frac{4\nu(k_0) k_0}{3t}}\sin(16tk_0^3-\nu(k_0)\log(192 t k_0^3)+\delta(k_0))+O(t^{-\alpha})
\endaligned
\ee
for any $1/2<\alpha <1$.
Here $k_0= \sqrt{-\frac{x}{12t}}$ and
\begin{align}
\nu(k_0)= & -\frac{1}{\pi} \log(\left\vert T(k_0) \right\vert),\\ \nn
\delta(k_0)= & \frac{\pi}{4}- \arg(R(k_0))+\arg(\Gamma(\I\nu(k_0)))+4 \sum_{j=1}^N \arctan\big(\frac{\kappa_j}{k_0}\big)\\
&  -\frac{1}{\pi}\int_{-k_0}^{k_0}\log\left(\frac{\left\vert T(\zeta) \right\vert^2}{\left\vert T(k_0) \right\vert^2}\right)\frac{1}{\zeta-k_0}d\zeta.
\end{align}
\end{theorem}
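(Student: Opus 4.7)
The plan is to apply the Decoupling Theorem~\ref{thm:decoup} to the Riemann--Hilbert problem for $\wha{m}(k)$ constructed at the end of Section~\ref{sec:condef}, using as local parametrices the explicit $M_1^c(k)$ on $\Sigma^c(k_0)$ produced just above and its symmetric counterpart $M_2^c(k) = \sigI M_1^c(-k) \sigI$ on $\Sigma^c(-k_0)$. In the notation of Theorem~\ref{thm:decoup} we take $n=2$, $k_1 = k_0$, $k_2 = -k_0$, $\rho(t) = t$, with the model-order exponent equal to $1/2$ and the error exponent equal to $\alpha$ for any $1/2 < \alpha < 1$. The hypotheses are satisfied: outside the two small crosses the jump $\wha{v}$ is exponentially close to the identity in both $L^2$ and $L^\infty$, while on each cross the rescaling $\zeta = \sqrt{48 k_0}(k - k_0)$ reduces the problem to Theorem~\ref{thm:solcross}, yielding $M_1^c$ of the required shape with error $O(t^{-\alpha})$. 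The symmetric ansatz for $M_2^c$ guarantees that the combined model still respects $\wha{m}(-k) = \wha{m}(k) \sigI$.

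Reading off the residues $M_1 = \frac{\I}{\sqrt{48 k_0}} \bigl( \begin{smallmatrix} 0 & -\beta \\ \ol{\beta} & 0 \end{smallmatrix} \bigr)$ and $M_2 = \frac{\I}{\sqrt{48 k_0}} \bigl( \begin{smallmatrix} 0 & -\ol{\beta} \\ \beta & 0 \end{smallmatrix} \bigr)$ from these parametrices, Theorem~\ref{thm:decoup} delivers
\[
\wha{m}(k) = \begin{pmatrix} 1 & 1 \end{pmatrix} + \frac{1}{t^{1/2}} \begin{pmatrix} 1 & 1 \end{pmatrix} \left( \frac{M_1}{k - k_0} + \frac{M_2}{k + k_0} \right) + O(t^{-\alpha}).
\]
Sending $k = \I \kappa \to \infty$ and matching the $1/k$-coefficient against the exact expansion~\eqref{eq:asyhm} of $\wha{m}$ produces
\[
\int_x^\infty q(y,t)\, dy = -2 T_1(k_0) - \frac{4 \re(\beta)}{\sqrt{48 k_0}\, t^{1/2}} + O(t^{-\alpha}),
\]
which, after inserting $T_1(k_0)$ from~\eqref{def:Tkk0}, reproduces~\eqref{eq:simasymp2} up to the identification of $\re(\beta)$. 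For $q(x,t)$ itself I use~\eqref{eq:asygf}: forming the product $\wha{m}_1(k) \wha{m}_2(k) = T(k) \psi_-(k,x,t) \psi_+(k,x,t)$ and extracting its $1/k^2$-coefficient, the algebraic identity $\tfrac{1}{k - k_0} - \tfrac{1}{k + k_0} = \tfrac{2 k_0}{k^2 - k_0^2}$ concentrates the leading oscillatory term at that order and converts $\re(\beta)$ into $\im(\beta)$.

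It remains to identify $\beta$ explicitly. Since $|\ti{T}(k_0, k_0)| = 1$ we have $\ti{T}(k_0, k_0)^2 = \E^{2\I \arg \ti{T}(k_0, k_0)}$, and reading off from~\eqref{tiT} gives
\[
\arg \ti{T}(k_0, k_0) = 2 \sum_{j=1}^N \arctan\bigl( \kappa_j / k_0 \bigr) - \frac{1}{2\pi} \int_{-k_0}^{k_0} \log\Bigl( \frac{|T(\zeta)|^2}{|T(k_0)|^2} \Bigr) \frac{d\zeta}{\zeta - k_0},
\]
where the Cauchy integral is purely imaginary because $\tfrac{1}{2\pi\I}$ multiplies a real integral. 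Together with $\E^{-t\Phi(k_0)} = \E^{16 \I t k_0^3}$ and $(192 t k_0^3)^{-\I\nu} = \E^{-\I\nu \log(192 t k_0^3)}$, this yields $\arg \beta = \delta(k_0) + 16 t k_0^3 - \nu \log(192 t k_0^3)$ and $|\beta| = \sqrt{\nu}$, so substituting $\re \beta = \sqrt{\nu} \cos(\cdot)$ and $\im \beta = \sqrt{\nu} \sin(\cdot)$ finishes~\eqref{eq:simasymp2} and~\eqref{eq:simasymp} under the hypothesis that $R(k)$ has an analytic extension. The decay assumption $\int (1 + |x|)^6 |q(x,0)|\, dx < \infty$ is exactly what Section~\ref{sec:analapprox} needs to remove that hypothesis while still producing the rate $O(t^{-\alpha})$.

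The main technical obstacle is arguing that the pointwise remainder $O(t^{-\alpha})$ delivered by Theorem~\ref{thm:decoup} transfers to each individual Laurent coefficient at infinity used in the matchings above. This is handled by the singular integral representation underlying the decoupling theorem: the difference between $\wha{m}$ and the rational model $m_{\mathrm{model}} = \begin{pmatrix} 1 & 1 \end{pmatrix} + t^{-1/2} \begin{pmatrix} 1 & 1 \end{pmatrix} \sum_j M_j/(k - k_j)$ solves a small-norm RHP, so its Cauchy representation shows it decays like $O(t^{-\alpha}/(1 + |k|))$ uniformly off $\Sigma$, whence its Laurent coefficients at infinity inherit the decay rate $O(t^{-\alpha})$.
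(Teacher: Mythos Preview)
Your argument is correct and follows essentially the same route as the paper's proof: apply Theorem~\ref{thm:decoup} with the two local parametrices $M_1^c$, $M_2^c$, match the $1/k$-coefficient against \eqref{eq:asyhm} to recover $Q(x,t)$, and then pass to the product $\wha m_1\wha m_2$ via \eqref{eq:asygf} to extract $q(x,t)$, deferring removal of the analyticity assumption to Section~\ref{sec:analapprox}. Your explicit identification of $\arg\beta$ and your remark on why the $O(t^{-\alpha})$ error survives passage to Laurent coefficients at infinity go slightly beyond what the paper spells out, but are entirely in line with its approach.
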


\begin{proof}
By Theorem~\ref{thm:decoup} we have
\begin{align*}
\wha{m}(k)
= & \begin{pmatrix} 1 & 1 \end{pmatrix}+\frac{1}{\sqrt{48k_0}}\frac{\I}{t^{1/2}}
\left(\frac{1}{k-k_0}\begin{pmatrix} \ol{\beta} & -\beta \end{pmatrix}-\frac{1}{k+k_0}
\begin{pmatrix} -\beta & \ol{\beta} \end{pmatrix}\right) +O(t^{-\alpha})\\
 = & \begin{pmatrix} 1 & 1 \end{pmatrix} +\frac{1}{\sqrt{48k_0}}\frac{\I}{t^{1/2}}\frac{1}{k}
\left( \sum_{l=0}^{\infty}\left(\frac{k_0}{k}\right)^l \begin{pmatrix} \ol{\beta} &
-\beta \end{pmatrix}-\sum_{l=0}^{\infty}\left(-\frac{k_0}{k}\right)^l \begin{pmatrix} -\beta & \ol{\beta} \end{pmatrix}\right) \\
& +O(t^{-\alpha}),
\end{align*}
which leads to 
\[
Q(x,t) = 2T_1(k_0)+\frac{4}{\sqrt{48k_0}}\frac{1}{t^{1/2}}(\re(\beta))+ O(t^{-\alpha})
\]
upon comparison with \eqref{eq:asyhm}.
Using the fact that $\left\vert \beta/\sqrt{\nu} \right\vert =1$ proves the first claim.

To see the second part, as in the proof of Theorem~\ref{thm:asym}, just use \eqref{eq:asygf} in place of \eqref{eq:asym},
which shows
\[
q(x,t)=\sqrt{\frac{4k_0}{3t}}\im(\beta) + O(t^{-\alpha}).
\]
This finishes the proof in the case where $R(k)$ has an analytic extensions. We will remove this assumption
in Section~\ref{sec:analapprox} thereby completing the proof.
\end{proof}

Equivalence of the formula for $\delta(k_0)$ given in the previous theorem with the one given in the
introduction follows after a simple integration by parts.

\begin{remark}
Formally the equation \eqref{eq:simasymp} for $q$ can be obtained by differentiating the equation 
\eqref{eq:simasymp2} for $Q$ with respect to $x$. That this step is admissible could be shown as in 
Deift and Zhou \cite{dz2}, however our approach avoids this step.
\end{remark}

\begin{remark}
Note that Theorem~\ref{thm:decoup} does not require uniform boundedness of the associated 
integral operator in contradistinction to Theorem~\ref{thm:remcontour}. We only need the 
knowledge of the solution in some small neighborhoods. However it cannot be used in the 
soliton region, because our solution is not of the form $\id+o(1)$.
\end{remark}

\section{Analytic Approximation}
\label{sec:analapprox}

In this section we want to present the necessary changes in the case where the
reflection coefficient does not have an analytic extension. The idea is to
use an analytic approximation and to split the reflection in an analytic part plus
a small rest. The analytic part will be moved to the complex plane while the rest
remains on the real axis. This needs to be done in such a way that the rest
is of $O(t^{-1})$ and the growth of the analytic part can be controlled by the
decay of the phase.

In the soliton region a straightforward splitting based on the Fourier transform
\be
R(k) =\int_{\R}\E^{\I kx}\hat{R}(x)dx
\ee
will be sufficient. It is well-known that our decay assumption \eqref{decayl} implies
$\hat{R} \in L^1(\R)$ and the estimate (cf.\ \cite[Sect.~3.2]{mar})
\be
|\hat{R}(-2x)| \le const \int_x^\infty q(r) dr, \qquad x \ge 0,
\ee
implies $x^l \hat{R}(-x) \in L^1(0,\infty)$.

\begin{lemma}\label{lem:analapprox}
Suppose $\hat{R} \in L^1(\R)$, $x^l \hat{R}(-x) \in L^1(0,\infty)$ and let $\varepsilon, \beta>0$ be given.
Then we can split the reflection coefficient according to
$R(k)= R_{a,t}(k) + R_{r,t}(k)$ such that $R_{a,t}(k)$ is analytic in $0 <\im(k)< \varepsilon$  and 
\be
|R_{a,t}(k) \E^{-\beta t} | = O(t^{-l}), \quad 0< \im(k) < \varepsilon, \qquad
|R_{r,t}(k)| = O(t^{-l}), \quad k\in\R.
\ee
\end{lemma}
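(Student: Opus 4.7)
The plan is to split the Fourier inversion formula $R(k) = \int_{\R}\E^{\I kx}\hat{R}(x)\,dx$ at a $t$-dependent cutoff $x=-K(t)$, declaring the tail $[-K(t),\infty)$ the analytic part and leaving the tail $(-\infty,-K(t)]$ on the real axis as the remainder. Since for $0<\im(k)<\varepsilon$ one has $|\E^{\I kx}| = \E^{-\im(k)\,x}$, the integrand is bounded by $1$ for $x\ge 0$ and by $\E^{\varepsilon K(t)}$ on $[-K(t),0]$. Hence
\[
R_{a,t}(k) := \int_{-K(t)}^{\infty} \E^{\I kx}\hat{R}(x)\,dx
\]
is analytic in the upper half-plane (by Morera's theorem, or differentiation under the integral sign, applied to the uniformly $L^1$-dominated integrand) and satisfies the crude bound $|R_{a,t}(k)| \le \|\hat R\|_1\,\E^{\varepsilon K(t)}$ throughout the strip.

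For the remainder $R_{r,t}(k) := \int_{-\infty}^{-K(t)} \E^{\I kx}\hat{R}(x)\,dx$ on the real axis, a straightforward majorization gives
\[
|R_{r,t}(k)| \le \int_{K(t)}^{\infty} |\hat{R}(-x)|\,dx \le \frac{1}{K(t)^{l}} \int_{K(t)}^{\infty} x^{l} |\hat{R}(-x)|\,dx,
\]
and the last integral is finite by the hypothesis $x^{l}\hat R(-x)\in L^{1}(0,\infty)$ and tends to $0$ as $K(t)\to\infty$.

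It remains to balance the exponential growth of $R_{a,t}$ against the polynomial decay of $R_{r,t}$ by a judicious choice of $K(t)$. Taking $K(t) = \beta t/(2\varepsilon)$ (say) yields
\[
|R_{a,t}(k)\,\E^{-\beta t}| \le \|\hat R\|_1\,\E^{-\beta t/2} = O(t^{-l}), \qquad |R_{r,t}(k)| = O(K(t)^{-l}) = O(t^{-l}),
\]
as required. There is no real conceptual obstacle here: the only point needing care is the linear-in-$t$ growth of $K(t)$, and the assumption $x^{l}\hat R(-x)\in L^{1}(0,\infty)$ is precisely the regularity one needs to make this balancing act succeed for arbitrary $\varepsilon,\beta>0$.
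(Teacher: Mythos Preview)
Your proof is correct and essentially identical to the paper's: the same Fourier splitting at $-K(t)$, the same crude bound $|R_{a,t}(k)|\le\|\hat R\|_1\,\E^{\varepsilon K(t)}$ in the strip, the same $K(t)^{-l}$ estimate for the remainder, and the same linear-in-$t$ choice of cutoff (the paper writes $K(t)=\beta_0 t/\varepsilon$ for some $\beta_0<\beta$, of which your $\beta_0=\beta/2$ is a particular instance).
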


\begin{proof}
We choose $R_{a,t}(k) = \int_{-K(t)}^\infty \E^{\I kx}\hat{R}(x)dx $ with $K(t) =
\frac{\beta_0}{\varepsilon} t$ for some positive $\beta_0<\beta$. Then, for 
$0< \im(k) <\varepsilon$,
\[
\left\vert R_{a,t}(k)\E^{-\beta t} \right\vert 
\leq \E^{-\beta t} \int_{-K(t)}^\infty \vert \hat{R}(x) \vert \E^{-\im(k) x}dx
\leq \E^{-\beta t}\E^{K(t)\varepsilon}\Vert \hat{R} \Vert_1
= \Vert \hat{R} \Vert_1 \E^{-(\beta-\beta_0)t},
\]
which proves the first claim.
Similarly, for $\im(k)=0$,
\[
\vert R_{r,t}(k) \vert 
=\int_{K(t)}^{\infty} \frac{x^l \vert \hat{R}(-x)\vert}{x^l} dx
\leq \frac{\|x^l \hat{R}(-x)\|_{L^1(0,\infty)}}{K(t)^l} 
\leq \frac{const }{t^{l}}  
\]
\end{proof}

To apply this lemma in the soliton region $k_0\in\I\R^+$ we choose
\be
\beta= \min_{\im(k)= \varepsilon} -\re(\Phi(k))>0.
\ee
and split $R(k) =  R_{a,t}(k) + R_{r,t}(k)$ according to Lemma~\ref{lem:analapprox} to obtain
\be
\ti{b}_\pm(k) = \ti{b}_{a,t,\pm}(k) \ti{b}_{r,t,\pm}(k) = \ti{b}_{r,t,\pm}(k) \ti{b}_{a,t,\pm}(k).
\ee
Here $\ti{b}_{a,t,\pm}(k)$, $\ti{b}_{r,t,\pm}(k)$ denote the matrices obtained from $\ti{b}_\pm(k)$
as defined in \eqref{eq:deftib} by replacing $R(k)$ with $R_{a,t}(k)$, $R_{r,t}(k)$, respectively.
Now we can move the analytic parts into the complex plane as in Section~\ref{sec:condef}
while leaving the rest on the real axis. Hence, rather then \eqref{eq:jumpsolreg}, the jump now reads
\be
\hat{v}(k) = \begin{cases}
\ti{b}_{a,t,+}(k), & k\in\Sigma_+, \\
\ti{b}_{a,t,-}(k)^{-1}, & k\in\Sigma_-,\\
\ti{b}_{r,t,-}(k)^{-1} \ti{b}_{r,t,+}(k), & k\in\R.
\end{cases}
\ee
By construction we have $\hat{v}(k)= \id + O(t^{-l})$ on the whole contour and the rest follows as
in Section~\ref{sec:condef}.

In the similarity region not only $\ti{b}_\pm$ occur as jump matrices but also $\ti{B}_\pm$.
These matrices $\ti{B}_\pm$ have at first sight more complicated off diagonal entries, but a closer look shows
that they have indeed the same form.  To remedy
this we will rewrite them in terms of left rather then right scattering data. For this purpose
let us use the notation $R_r(k) \equiv R_+(k)$ for the right and $R_l(k) \equiv R_-(k)$ for the
left reflection coefficient. Moreover, let $T_r(k,k_0) \equiv T(k,k_0)$ be the right and
$T_l(k,k_0) \equiv T(k)/T(k,k_0)$ be the left partial transmission coefficient.

With this notation we have
\be
\ti{v}(k) = \begin{cases}
\ti{b}_-(k)^{-1} \ti{b}_+(k), \qquad \re(k_0)< \left\vert k \right\vert,\\
\ti{B}_-(k)^{-1} \ti{B}_+(k), \qquad \re(k_0) > \left\vert k \right\vert,\\
\end{cases}
\ee
where
\[
\ti{b}_-(k) = \begin{pmatrix} 1 & \frac{R_r(-k) \E^{-t\Phi(k)}}{T_r(-k,k_0)^2} \\ 0 & 1 \end{pmatrix}, \quad
\ti{b}_+(k) = \begin{pmatrix} 1 & 0 \\ \frac{R_r(k) \E^{t\Phi(k)}}{T_r(k,k_0)^2}& 1 \end{pmatrix},
\]
and
\begin{align*}
\ti{B}_-(k) &= \begin{pmatrix} 1 & 0 \\ - \frac{T_{r,-}(k,k_0)^{-2}}{|T(k)|^2} R_r(k) \E^{t\Phi(k)} & 1 \end{pmatrix}, \\
\ti{B}_+(k) &= \begin{pmatrix} 1 & - \frac{T_{r,+}(k,k_0)^2}{|T(k)|^2} R_r(-k) \E^{-t\Phi(k)} \\ 0 & 1 \end{pmatrix}.
\end{align*}
Using \eqref{reltrpm} we can further write
\be
\ti{B}_-(k)= \begin{pmatrix} 1 & 0 \\ \frac{R_l(-k) \E^{t\Phi(k)}}{T_l(-k,k_0)^2} & 1 \end{pmatrix}, \quad
\ti{B}_+(k)= \begin{pmatrix} 1 & \frac{R_l(k) \E^{-t\Phi(k)}}{T_l(k,k_0)^2} \\ 0 & 1 \end{pmatrix}.
\ee
Now we can proceed as before with $\ti{B}_\pm(k)$ as with $\ti{b}_\pm(k)$ by splitting $R_l(k)$ rather than $R_r(k)$.

In the similarity region we need to take the small vicinities of the stationary phase points into account. Since
the phase is cubic near these points, we cannot use it to dominate the exponential growth of the analytic
part away from the unit circle. Hence we will take the phase as a new variable and use the Fourier transform
with respect to this new variable. Since this change of coordinates is singular near the stationary phase points,
there is a price we have to pay, namely, requiring additional smoothness for $R(k)$. In this respect note that
\eqref{decayl} implies $R(k)\in C^l(\R)$ (cf.\ \cite{kl}). We begin with

\begin{lemma}
Suppose $R(k)\in C^5(\R)$. Then we can split $R(k)$ according to
\be
R(k) = R_0(k) + (k-k_0)(k+k_0) H(k), \qquad k \in \Sigma(k_0),
\ee
where $R_0(k)$ is a real rational function in $k$ such that $H(k)$ vanishes  
at $k_0$, $-k_0$ of order three and
has a Fourier transform
\be
H(k)=\int_{\R}\hat{H}(x) \E^{x\Phi(k)}dx,
\ee
with $x\hat{H}(x)$ integrable.
\end{lemma}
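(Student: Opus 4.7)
The approach is to construct $R_0$ by Hermite interpolation at $\pm k_0$ and realize $H$ as an inverse Fourier transform in the phase coordinate $u := -\I\Phi(k) = 8k(k^2-3k_0^2)$, which restricted to $\Sigma(k_0) = [-k_0,k_0]$ is a decreasing bijection onto $[-16k_0^3, 16k_0^3]$ with $u'(\pm k_0)=0$ and $u''(\pm k_0) = \pm 48 k_0$.

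\medskip

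First I would choose a rational function $R_0(k) = P(k)/(1+k^2)^N$ with $N$ sufficiently large and $\deg P$ picked so that the eight Hermite conditions $R_0^{(j)}(\pm k_0) = R^{(j)}(\pm k_0)$ for $j = 0,1,2,3$ are fulfilled; the factor $(1+k^2)^{-N}$ keeps $R_0$ smooth on $\R$ with decay at infinity, and the coefficients of $P$ can be adjusted so that $R_0$ inherits the symmetry $R_0(-k) = \overline{R_0(k)}$ on $\R$. Setting
\[
H(k) := \frac{R(k)-R_0(k)}{(k-k_0)(k+k_0)},
\]
the hypothesis $R\in C^5$ together with the interpolation conditions implies $R-R_0$ vanishes to order four at $\pm k_0$, so $H$ extends to a $C^3$ function on a neighborhood of $\Sigma(k_0)$ which vanishes to order three at each stationary phase point.

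\medskip

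Next, since $u''(\pm k_0) \ne 0$, the inverse $k(u)$ is smooth on the open interval $(-16k_0^3, 16k_0^3)$ with square-root endpoint singularities. Consequently $G(u) := H(k(u))$ acquires one-sided $3/2$-power behavior at the endpoints; explicitly,
\[
G(u) = c_1 (u+16k_0^3)^{3/2} + O\bigl((u+16k_0^3)^{5/2}\bigr)
\]
as $u \downarrow -16k_0^3$, and analogously $G(u) = c_2 (16k_0^3-u)^{3/2} + O((16k_0^3-u)^{5/2})$ as $u \uparrow 16k_0^3$, with $G$ of class $C^3$ in the interior. Extend $G$ to an element of $C_c^1(\R)$ by multiplying by a smooth cutoff. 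Define
\[
\hat H(x) := \frac{1}{2\pi}\int_{\R} G(u)\, e^{-\I xu}\, du.
\]
Using either two integrations by parts with careful tracking of the endpoint contributions, or the explicit computation of the Fourier transform of the model function $(u-a)^{3/2}_{+}\psi(u)$ with $\psi \in C_c^{\infty}$, one obtains the decay
\[
\hat H(x) = O(|x|^{-5/2}) \quad \text{as } |x|\to\infty,
\]
whence $\hat H \in L^1(\R)$ and $|x\hat H(x)| = O(|x|^{-3/2}) \in L^1(\R)$. Fourier inversion then gives $G(u) = \int_{\R} \hat H(x)\, e^{\I xu}\,dx$ pointwise, and specialization to $u = u(k)$ for $k\in\Sigma(k_0)$ together with the identity $e^{\I x u(k)} = e^{x\Phi(k)}$ yields the required representation
\[
H(k) = \int_{\R} \hat H(x)\, e^{x\Phi(k)}\, dx.
\]

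\medskip

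The main obstacle is the quantitative verification of the $|x|^{-5/2}$ decay, which depends on $G$ having exactly the $3/2$-power endpoint behavior described above. The cubic vanishing of $H$ at $\pm k_0$ must interlock with the square-root nature of the change of variables $k(u)$ at the turning points to produce this precise singularity, uncontaminated by lower-order terms. This is precisely where the $C^5$ hypothesis on $R$ enters: four derivatives are spent on the Hermite matching at $\pm k_0$ and the fifth is what ensures enough regularity of $H$ for the $3/2$-type behavior of $G$ to be the leading singularity rather than a smoother correction to something worse.
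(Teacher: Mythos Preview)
Your approach is the same as the paper's in outline---Hermite interpolation at $\pm k_0$, then Fourier analysis in the phase variable $u=-\I\Phi(k)$---but there is a quantitative slip in your endpoint expansion that undermines the decay claim, and the paper sidesteps exactly this issue by matching one more derivative than you do.

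The paper interpolates $R$ and its first \emph{four} derivatives at $\pm k_0$ (five conditions), so $R-R_0$ vanishes to order five and $H$ vanishes to order four. With that, $G(u)=H(k(u))$ is genuinely $C^2$ across the endpoints, and $G''\in L^2$ immediately yields $x^2\hat H\in L^2$, hence $x\hat H\in L^1$. You match only $R,R',R'',R'''$, so $H$ vanishes to order three and $G$ picks up a $(u-u_0)^{3/2}$ endpoint singularity, as you say. However, your expansion $G(u)=c_1(u-u_0)^{3/2}+O\big((u-u_0)^{5/2}\big)$ is not correct: writing $k-k_0=b_1v+b_2v^2+\cdots$ with $v=(u-u_0)^{1/2}$, one has $(k-k_0)^3=b_1^3v^3+3b_1^2b_2v^4+\cdots$ and $(k-k_0)^4=b_1^4v^4+\cdots$, so the term following $(u-u_0)^{3/2}$ is $(u-u_0)^2$, not $(u-u_0)^{5/2}$. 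Subtracting only the $3/2$-power model therefore leaves a remainder that is merely $C^2$, whose Fourier transform is $o(|x|^{-2})$ and not $O(|x|^{-5/2})$; your two-integrations-by-parts route to $O(|x|^{-5/2})$ for all of $\hat H$ does not go through as written.

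The conclusion $x\hat H\in L^1$ is still salvageable along your lines---the $3/2$-power piece contributes $O(|x|^{-5/2})$, and the $C^2$ remainder has $x^2\hat E\in L^2$, so $x\hat E\in L^1$---but once you are invoking $C^2$ for the remainder you are effectively back to the paper's mechanism. Matching the extra derivative (as the paper does) removes the half-integer term entirely and makes the argument one line. Incidentally, your identification of the inverse map $k(u)$ as a square root at the endpoints is right; the paper's phrase ``cube root'' is a slip.
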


\begin{proof}
We can construct a rational function, which satisfies $f_n(-k)=\ol{f_n(k)}$ for $k\in\R$,
by making the ansatz 
$f_n(k)=\frac{k_0^{2n+4}+1}{k^{2n+4}+1}\sum_{j=0}^n\frac{1}{j!(2 k_0)^j}(\alpha_j+\I\beta_j \frac{k}{k_0})(k-k_0)^j(k+k_0)^j$.
Furthermore we can choose $\alpha_j$, $\beta_j\in\R$ for $j=1,\dots,n$, such that we can match the values of $R$
and its first four derivatives at $k_0$, $-k_0$ at $f_n(k)$. Thus we will set $R_0(k)=f_4(k)$ with $\alpha_0=\re(R(k_0))$,
$\beta_0=\im(R(k_0))$, and so on. Since $R_0(k)$ is integrable we infer that $H(k)\in C^4(\R)$ and it vanishes
together with its first three derivatives at $k_0$, $-k_0$.

Note that $\Phi(k)/\I=8 (k^3-3 k_0^2 k)$ is a polynomial of order  
three which has a maximum at $-k_0$
and a minimum at $k_0$. Thus the phase $\Phi(k)/\I$ restricted to $ 
\Sigma(k_0)$ gives a one to one coordinate transform
$\Sigma(k_0) \to [\Phi(k_0)/\I, \Phi(-k_0)/\I]=[-16k_0^3,16k_0^3]$   
and we can hence express $H(k)$ in this new coordinate
(setting it equal to zero outside this interval). The coordinate  
transform locally looks like a cube root near $k_0$ and $-k_0$,
however, due to our assumption that $H$ vanishes there, $H$ is still  
$C^2$ in this new coordinate and the Fourier transform
with respect to this new coordinates exists and has the required  
properties.
\end{proof}

Moreover, as in Lemma~\ref{lem:analapprox} we obtain:

\begin{lemma}
Let $H(k)$ be as in the previous lemma. Then we can split $H(k)$ according to
$H(k)= H_{a,t}(k) + H_{r,t}(k)$ such that $H_{a,t}(k)$ is analytic in the region $\re(\Phi(k))<0$
and 
\be
|H_{a,t}(k) \E^{\Phi(k) t/2} | = O(1), \: \re(\Phi(k))<0, \im(k) \le 0, \quad
|H_{r,t}(k)| = O(t^{-1}), \: k\in\R.
\ee
\end{lemma}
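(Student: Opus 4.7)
The plan is to mimic Lemma~\ref{lem:analapprox}, but using the Fourier representation in the phase variable. Given the representation $H(k) = \int_\R \hat{H}(x) \E^{x\Phi(k)} dx$ with $x\hat{H}(x) \in L^1(\R)$ from the preceding lemma, set $K(t) = t/2$ and define
\[
H_{a,t}(k) = \int_{-K(t)}^{\infty} \hat{H}(x) \E^{x\Phi(k)} dx, \qquad H_{r,t}(k) = \int_{-\infty}^{-K(t)} \hat{H}(x) \E^{x\Phi(k)} dx,
\]
so that $H(k) = H_{a,t}(k) + H_{r,t}(k)$ by construction. The cutoff is chosen so that the weight $\E^{\Phi(k)t/2}$ can be absorbed into $H_{a,t}$ with a nonnegative remaining weight on $x$, while the complementary piece enjoys the full $1/K(t)$ decay.

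First I would verify analyticity of $H_{a,t}$ in the region $\{\re(\Phi(k)) < 0\}$: on compact subsets one has $|\E^{x\Phi(k)}| \le 1$ for $x \ge 0$ and $|\E^{x\Phi(k)}| \le \E^{K(t) |\re(\Phi(k))|}$ for $-K(t) \le x \le 0$, so the integrand is uniformly dominated by an integrable function of $x$, and differentiation under the integral sign is justified.

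For the bound on the analytic part, absorb the prefactor to write
\[
H_{a,t}(k) \E^{\Phi(k)t/2} = \int_{-t/2}^{\infty} \hat{H}(x) \E^{(x + t/2)\Phi(k)} dx.
\]
Since $x + t/2 \ge 0$ on the range of integration and $\re(\Phi(k)) \le 0$ by assumption, the exponential factor is bounded by $1$, yielding $|H_{a,t}(k) \E^{\Phi(k)t/2}| \le \|\hat{H}\|_{L^1(\R)} = O(1)$, with no need to invoke the condition $\im(k) \le 0$. For the rest term on the real axis, $\Phi(k) \in \I\R$ so $|\E^{x\Phi(k)}| = 1$; using $|x| \ge t/2$ on the integration range gives
\[
|H_{r,t}(k)| \le \int_{-\infty}^{-t/2} |\hat{H}(x)| dx \le \frac{2}{t}\int_{-\infty}^{-t/2} |x|\,|\hat{H}(x)| dx \le \frac{2}{t}\,\|x\hat{H}(x)\|_{L^1(\R)} = O(t^{-1}).
\]

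The only subtlety, which I expect to be the main thing worth checking carefully, is the tacit use of $\hat{H} \in L^1(\R)$ rather than merely $x\hat{H}(x) \in L^1$. Global integrability away from the origin is immediate from $x\hat{H}(x) \in L^1$, and local integrability near $x=0$ follows from the construction of $\hat{H}$ in the previous lemma, where $H$ was shown to be $C^2$ as a function of the phase coordinate (so its Fourier transform $\hat{H}$ decays at infinity in $x$; the absolute convergence of the representation formula on the real axis, where the exponential is unimodular, then forces $\hat{H} \in L^1(\R)$).
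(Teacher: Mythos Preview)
Your proof is correct and follows essentially the same approach as the paper: the same splitting $H_{a,t}(k)=\int_{-t/2}^\infty \hat H(x)\E^{x\Phi(k)}dx$ with $K(t)=t/2$, and the same bound $\|\hat H\|_1$ for the analytic part. The only difference is in the tail estimate: you use $x\hat H\in L^1$ directly to get $|H_{r,t}(k)|\le \tfrac{2}{t}\|x\hat H\|_1$, whereas the paper applies Cauchy--Schwarz with $x^2\hat H\in L^2$ (coming from $H\in C^2$ in the phase variable) to obtain $\int_{-\infty}^{-t/2}|\hat H|\le \|x^2\hat H\|_2\,(\int_{-\infty}^{-t/2}x^{-4}dx)^{1/2}=O(t^{-3/2})$; your route is slightly more direct and uses exactly the hypothesis stated in the preceding lemma. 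Your remark on $\hat H\in L^1$ is well taken---the paper uses this without comment, and it does follow (as you indicate) from $H$ being $C^2$ and compactly supported in the phase coordinate.
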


\begin{proof}
We choose $H_{a,t}(k) = \int_{-K(t)}^{\infty}\hat{H}(x)\E^{x \Phi(k)}dx$ with $K(t) = t/2$.
Then we can conclude as in Lemma~\ref{lem:analapprox}:
\begin{align*}
\vert H_{a,t}(k) \E^{\Phi(k) t/2} \vert
\leq \Vert \hat{H}(x) \Vert_1 \vert \E^{-K(t) \Phi(k)+\Phi(k) t/2} \vert
\leq\Vert \hat{H}(x) \Vert_1
\leq const 
\end{align*} 
and 
\begin{align*}
\vert H_{r,t}(k)\vert
& \leq \int_{-\infty}^{-K(t)} \vert \hat{H}(x) \vert dx
\leq const \sqrt{\int_{-\infty}^{-K(t)}  \frac{1}{x^4}  dx}
\leq const  \frac{1}{K(t)^{3/2}} \leq const \frac{1}{t}.
\end{align*}
\end{proof}

By construction $R_{a,t}(k) = R_0(k) + (k-k_0)(k+k_0) H_{a,t}(k)$ will satisfy the required
Lipschitz estimate in a vicinity of the stationary phase points (uniformly in $t$) and all
jumps will be $\id+O(t^{-1})$. Hence we can proceed as in Section~\ref{sec:simrhp}.

\appendix

\section{Singular integral equations}
\label{sec:sieq}

In this section we show how to transform a meromorphic vector Riemann--Hilbert problem
with simple poles at $\I\kappa$, $-\I\kappa$,
\begin{align}\nn
& m_+(k) = m_-(k) v(k), \qquad k\in \Sigma,\\ \label{eq:rhp5m}
& \res_{\I\kappa} m(k) = \lim_{k\to\I\kappa} m(k)
\begin{pmatrix} 0 & 0\\ \I\gamma^2  & 0 \end{pmatrix}, \quad
 \res_{-\I\kappa} m(k) = \lim_{k\to -\I\kappa} m(k)
\begin{pmatrix} 0 & -\I\gamma^2  \\ 0 & 0 \end{pmatrix},\\ \nn
& m(-k) = m(k) \sigI,\\ \nn
& \lim_{k\to\infty} m(\I\ k) = \begin{pmatrix} 1 & 1\end{pmatrix}
\end{align}
into a singular integral equation.
Since we require the symmetry condition for our Riemann--Hilbert
problem we need to adapt the usual Cauchy kernel to preserve this symmetry.
Moreover, we keep the single soliton as an inhomogeneous term which will play
the role of the leading asymptotics in our applications.

The classical Cauchy-transform
of a function $f:\Sigma\to \C$ which is square integrable is the
analytic function $C f: \C\backslash\Sigma\to\C$ given by
\be
C f(k) = \frac{1}{2\pi\I} \int_{\Sigma} \frac{f(s)}{s - k} ds,\qquad k\in\C\backslash\Sigma.
\ee
Denote the tangential boundary values from both sides (taken possibly
in the $L^2$-sense --- see e.g.\ \cite[eq.\ (7.2)]{deiftbook}) by $C_+ f$ respectively $C_- f$.
Then it is well-known that $C_+$ and $C_-$ are bounded operators $L^2(\Sigma)\to L^2(\Sigma)$, which satisfy $C_+ - C_- = \id$ (see e.g. \cite{deiftbook}). Moreover, one has
the Plemelj--Sokhotsky formula (\cite{mu})
\be
C_\pm = \frac{1}{2} (\I H \pm \id),
\ee
where
\be
H f(k) = \frac{1}{\pi} \dashint_\Sigma \frac{f(s)}{k-s} ds,\qquad k\in\Sigma,
\ee
is the Hilbert transform and $\dashint$ denotes the principal value integral.

In order to respect the symmetry condition we will restrict our attention to
the set $L^2_{s}(\Sigma)$ of square integrable functions $f:\Sigma\to\C^{2}$ such that
\be\label{eq:sym}
f(-k) = f(k) \sigI.
\ee
Clearly this will only be possible if we require our jump data to be symmetric as well:

\begin{hypothesis}\label{hyp:sym2}
Suppose the jump data $(\Sigma,v)$ satisfy the following assumptions:
\begin{enumerate}
\item
$\Sigma$ consist of a finite number of smooth oriented finite curves in $\C$
which intersect at most finitely many times with all intersections being transversal.
\item
The distance between $\Sigma$ and $\{ \I y | y\ge y_0\}$ is positive for some $y_0>0$ and
$\pm\I\kappa\not\in\Sigma$.
\item
$\Sigma$ is invariant under $k\mapsto -k$ and is oriented such that under the
mapping $k\mapsto -k$ sequences converging from the positive sided to $\Sigma$
are mapped to sequences converging to the negative side.
\item
The jump matrix $v$ is invertible and can be factorized according to $v = b_-^{-1} b_+ =
(\id-w_-)^{-1}(\id+w_+)$, where $w_\pm = \pm(b_\pm-\id)$ satisfy
\be\label{eq:wpmsym}
w_\pm(-k) = -\sigI w_\mp(k) \sigI,\quad k\in\Sigma.
\ee
\item
The jump matrix satisfies
\begin{align}\nn
\|w\|_\infty &= \|w_+\|_{L^\infty(\Sigma)} + \|w_-\|_{L^\infty(\Sigma)}<\infty,\\ \label{hyp:normw}
\| w \|_2 &= \| w_+ \|_{L^2(\Sigma)} + \| w_- \|_{L^2(\Sigma)}<\infty.
\end{align}
\end{enumerate}
\end{hypothesis}

Next we introduce the Cauchy operator
\be
(C f)(k) = \frac{1}{2\pi\I} \int_\Sigma f(s) \Omega_\kappa(s,k)
\ee
acting on vector-valued functions $f:\Sigma\to\C^{2}$.
Here the Cauchy kernel is given by
\be
\Omega_\kappa(s,k) =
\begin{pmatrix} \frac{k+\I\kappa}{s+\I\kappa} \frac{1}{s-k} & 0 \\
0 & \frac{k-\I\kappa}{s-\I\kappa} \frac{1}{s-k} \end{pmatrix} ds =
\begin{pmatrix} \frac{1}{s-k} - \frac{1}{s+\I\kappa} & 0 \\
0 & \frac{1}{s-k} - \frac{1}{s-\I\kappa} \end{pmatrix} ds,
\ee
for some fixed $\I\kappa\notin\Sigma$. In the case $\kappa=\infty$ we set
\be
\Omega_\infty(s,k) =
\begin{pmatrix} \frac{1}{s-k} & 0 \\
0 & \frac{1}{s-k} \end{pmatrix} ds.
\ee
and one easily checks the symmetry property:
\be\label{eq:symC}
\Omega_\kappa(-s,-k) = \sigI \Omega_\kappa(s,k) \sigI.
\ee
The properties of $C$ are summarized in the next lemma.

\begin{lemma}
Assume Hypothesis~\ref{hyp:sym2}.
The Cauchy operator $C$ has the properties, that the boundary values
$C_\pm$ are bounded operators $L^2_s(\Sigma) \to L^2_s(\Sigma)$
which satisfy
\be\label{eq:cpcm}
C_+ - C_- = \id
\ee
and
\be\label{eq:Cnorm}
(Cf)(-\I\kappa) = (0\quad\ast), \qquad (Cf)(\I\kappa) = (\ast\quad 0).
\ee
Furthermore, $C$ restricts to $L^2_{s}(\Sigma)$, that is
\be
(C f) (-k) = (Cf)(k) \sigI,\quad k\in\C\backslash\Sigma
\ee
for $f\in L^2_{s}(\Sigma) \text{ or } L^{\infty}_{s}(\Sigma)$ and by \eqref{eq:wpmsym} we also have
\be \label{eq:symcpm}
C_\pm(f w_\mp)(-k) = C_\mp(f w_\pm)(k) \sigI,\quad k\in\Sigma.
\ee
\end{lemma}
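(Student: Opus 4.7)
The plan is to reduce each part of the statement to either a classical fact about the standard Cauchy transform or a direct algebraic manipulation exploiting the explicit form of the modified kernel $\Omega_\kappa$. The key observation is that the only modification from the classical kernel $\Omega_\infty$ is the subtraction of terms of the form $\frac{1}{s\pm\I\kappa}$ on the diagonal, which depend only on $s$ (not on $k$) and are uniformly bounded on $\Sigma$ by the separation assumption in Hypothesis~\ref{hyp:sym2}.

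For the $L^2$-boundedness of $C_\pm$, I would split $C = C^\infty + K$, where $C^\infty$ is the classical Cauchy operator with kernel $\Omega_\infty$ and $K$ is the operator with the bounded kernel coming from the $s$-dependent subtractions. Classical theory gives $L^2$-boundedness of $C^\infty_\pm$ on a finite union of smooth curves, and $K$ is trivially bounded since its kernel lies in $L^\infty(\Sigma)$ under the hypothesis. The jump relation $C_+-C_-=\id$ then follows straight from the classical Plemelj--Sokhotsky formula, because the additional subtractions are analytic in $k$ across $\Sigma$ and hence contribute no jump. The normalization \eqref{eq:Cnorm} is immediate from inspection of $\Omega_\kappa$: at $k=\I\kappa$ the $(2,2)$-entry of $\Omega_\kappa(s,\I\kappa)$ collapses to $\frac{1}{s-\I\kappa}-\frac{1}{s-\I\kappa}=0$, annihilating the second component of $(Cf)(\I\kappa)$, and symmetrically at $k=-\I\kappa$.

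The symmetry of $C$ on $L^2_s$ will be obtained in two steps. First I would verify \eqref{eq:symC} by direct computation, interpreting $\Omega_\kappa$ as a matrix-valued $1$-form in $s$ so that the pullback by $s\mapsto -s$ contributes the transformation $d(-s)=-ds$; together with the sign flips of the scalar entries, this produces exactly the stated conjugation by $\sigI$. Then, in the integral defining $(Cf)(-k)$, I would substitute $s=-u$. Hypothesis~\ref{hyp:sym2}(3) fixes how $\Sigma$ behaves under this map, and combining the change of variables with \eqref{eq:symC}, the $L^2_s$-identity $f(-u)=f(u)\sigI$, and $\sigI^2=\id$ recovers $(Cf)(k)\sigI$. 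The same manipulation carries over pointwise for $f\in L^\infty_s$ at any $k\notin\Sigma$.

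The remaining boundary-value relation \eqref{eq:symcpm} is obtained by running the same change of variables on the function $g(s)=f(s)w_\mp(s)$ in place of $f$; here one uses the $w$-symmetry \eqref{eq:wpmsym} in addition to the kernel symmetry and the $L^2_s$-condition on $f$. The interchange $C_\pm\leftrightarrow C_\mp$ between the two sides of \eqref{eq:symcpm} is forced by Hypothesis~\ref{hyp:sym2}(3): the map $k\mapsto -k$ sends the positive side of $\Sigma$ to its negative side, so a nontangential limit from the positive side at $-k$ coincides, after the substitution, with a nontangential limit from the negative side at $k$. I expect the main obstacle to be the careful accounting of signs, since the proof combines three independent sign sources (the $1$-form transformation of $ds$, the orientation convention on $\Sigma$, and the minus sign in \eqref{eq:wpmsym}); the consistent interpretation of $\Omega_\kappa$ as a differential form throughout is precisely what makes these signs cancel in pairs so that the stated identities come out clean.
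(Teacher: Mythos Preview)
Your proposal is correct and follows essentially the same approach as the paper, which simply notes that everything follows from the kernel symmetry \eqref{eq:symC} together with the fact that $C$ inherits all properties from the classical Cauchy operator. You have merely spelled out in detail what the paper compresses into a single sentence.
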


\begin{proof}
Everything follows from \eqref{eq:symC} and the fact that $C$ inherits all properties from
the classical Cauchy operator.
\end{proof}

We have thus obtained a Cauchy transform with the required properties.
Following Section 7 and 8 of \cite{bc}, we can solve our Riemann--Hilbert problem using this
Cauchy operator.

Introduce the operator $C_w: L_s^2(\Sigma)\to L_s^2(\Sigma)$ by
\be
C_w f = C_+(fw_-) + C_-(fw_+),\quad f\in L^2_s(\Sigma).
\ee
By our hypothesis \eqref{hyp:normw} $C_w$ is also well-defined as operator from
$L_s^\infty(\Sigma)\to L_s^2(\Sigma)$ and we have
\be\label{estnormcw}
\|C_w\|_{L^2_s\to L^2_s} \le const \|w\|_\infty \quad\text{respectively}\quad
\|C_w\|_{L^\infty_s\to L^2_s} \le const \|w\|_2.
\ee
Furthermore recall from Lemma~\ref{lem:singlesoliton} that the unique 
solution corresponding to $v\equiv \id$ is given by
\begin{align}\nn
& m_0(k)= \begin{pmatrix} f(k) & f(-k) \end{pmatrix}, \\
& f(k) = \frac{1}{1+(2\kappa)^{-1}\gamma^2\E^{t\Phi(\I\kappa)}}
\left( 1+\frac{k+\I\kappa}{k-\I\kappa} (2\kappa)^{-1} \gamma^2 \E^{t\Phi(\I\kappa)}\right).
\end{align}
Observe that for $\gam=0$ we have $f(k)=1$ and for $\gam=\infty$ we have
$f(k)= \frac{k+\I\kappa}{k-\I\kappa}$. In particular, $f(k)$ is uniformly bounded for all $\gam\in[0,\infty]$
if $|k-\I\kappa|>\eps$.

Then we have the next result.

\begin{theorem}\label{thm:cauchyop}
Assume Hypothesis~\ref{hyp:sym2}.

Suppose $m$ solves the Riemann--Hilbert problem \eqref{eq:rhp5m}. Then
\be\label{eq:mOm}
m(k) = (1-c_0)m_0(k) + \frac{1}{2\pi\I} \int_\Sigma \mu(s) (w_+(s) + w_-(s)) \Omega_\kappa(s,k),
\ee
where
\[
\mu = m_+ b_+^{-1} = m_- b_-^{-1} \quad\mbox{and}\quad
c_0= \left( \frac{1}{2\pi\I} \int_\Sigma \mu(s) (w_+(s) + w_-(s)) \Omega_\kappa(s,\infty) \right)_{\!1}.
\]
Here $(m)_j$ denotes the $j$'th component of a vector.
Furthermore, $\mu$ solves
\be\label{eq:sing4muc}
(\id - C_w) (\mu(k)-(1-c_0)m_0(k) ) = (1-c_0) C_w m_0(k)
\ee

Conversely, suppose $\ti{\mu}$ solves 
\be\label{eq:sing4mu}
(\id - C_w) (\ti{\mu}(k)-m_0(k)) =C_w m_0(k),
\ee
and
\[
\ti{c}_0= \left( \frac{1}{2\pi\I} \int_\Sigma \ti{\mu}(s) (w_+(s) + w_-(s)) \Omega_\kappa(s,\infty) \right)_{\!1} \ne -1,
\]
then $m$ defined via \eqref{eq:mOm}, with $(1-c_0)=(1+\ti{c}_0)^{-1}$ and $\mu=(1+\ti{c}_0)^{-1}\ti{\mu}$,
solves the Riemann--Hilbert problem \eqref{eq:rhp5m} and $\mu= m_\pm b_\pm^{-1}$.
\end{theorem}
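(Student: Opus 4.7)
The plan is to prove both directions by reducing the vector Riemann--Hilbert problem to a Cauchy-transform identity and then extracting the integral equation for the boundary value $\mu$. The ingredients we need are all available: the jump factorization $v=b_-^{-1}b_+$ with $b_\pm=\id\pm w_\pm$, the modified Cauchy kernel $\Omega_\kappa$ with the key property \eqref{eq:Cnorm} that the first column of $\Omega_\kappa(s,k)$ vanishes at $k=-\I\kappa$ and the second column vanishes at $k=\I\kappa$, together with the uniqueness result from Section~\ref{sec:uni} applied to symmetric RHPs with trivial jump.

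For the forward direction, I would first define $\mu:=m_+b_+^{-1}$ and note that $\mu=m_-b_-^{-1}$ as well, since $m_+=m_-v=m_-b_-^{-1}b_+$; consequently $m_+-m_-=\mu(b_+-b_-)=\mu(w_++w_-)$. Next, set
\[
n(k) := m(k) - \frac{1}{2\pi\I}\int_\Sigma \mu(s)(w_+(s)+w_-(s))\,\Omega_\kappa(s,k).
\]
Using $C_+-C_-=\id$, the jumps of $n$ cancel, so $n$ extends holomorphically across $\Sigma$. By \eqref{eq:Cnorm} the Cauchy term is regular at $\pm\I\kappa$ and contributes only to the off-residue component there, so the pole and residue conditions of $n$ at $\pm\I\kappa$ are identical to those of $m$ (and hence of $m_0$). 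The symmetry of $C$ on $L^2_s(\Sigma)$ ensures $n$ satisfies $n(-k)=n(k)\sigI$, and a short computation (using the same symmetry to identify the two components of $(Cf)(\infty)$) gives $\lim_{\kappa\to\infty}n(\I\kappa)=(1-c_0)(1\;\;1)$. Thus $n$ solves the trivial-jump symmetric vector RHP with soliton pole data and normalization $(1-c_0)(1,1)$; by Lemma~\ref{lem:singlesoliton} together with the uniqueness statement from Section~\ref{sec:uni}, $n=(1-c_0)m_0$, which is \eqref{eq:mOm}. Finally, taking boundary values gives $m_\pm = (1-c_0)m_0 + C_\pm(\mu w_+) + C_\pm(\mu w_-)$; substituting $m_\pm=\mu\pm\mu w_\pm$ on the left and using $C_\pm = C_\mp\pm\id$ to regroup yields $\mu = (1-c_0)m_0 + C_w\mu$, which rearranges into \eqref{eq:sing4muc}.

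For the converse, suppose $\ti\mu$ satisfies \eqref{eq:sing4mu} with $\ti c_0\neq-1$. Setting $\mu=(1+\ti c_0)^{-1}\ti\mu$ turns \eqref{eq:sing4mu} into $(\id - C_w)\mu = (1+\ti c_0)^{-1}m_0$, i.e.\ $\mu = (1-c_0)m_0 + C_w\mu$ with $(1-c_0):=(1+\ti c_0)^{-1}$. Define $m$ by \eqref{eq:mOm}. Taking boundary values and undoing the previous manipulation with $C_\pm=C_\mp\pm\id$ gives $m_\pm = \mu\pm\mu w_\pm = \mu b_\pm$, so $m_+=m_-b_-^{-1}b_+=m_-v$, which is the jump condition, and $\mu=m_\pm b_\pm^{-1}$ as required. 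The pole conditions at $\pm\I\kappa$ follow because the Cauchy-transform piece of $m$ lies in $(\ast\;\;0)$ at $\I\kappa$ and in $(0\;\;\ast)$ at $-\I\kappa$ by \eqref{eq:Cnorm}, so the residues come entirely from $(1-c_0)m_0$ and match the residue formula upon evaluating $m$ (rather than $m_0$) on the right. Symmetry of $m$ follows from the symmetry of $m_0$, of $\mu$ (which sits in $L^2_s(\Sigma)$), of $w_\pm$ via \eqref{eq:wpmsym}, and of the Cauchy kernel via \eqref{eq:symC}. The normalization at $\I\infty$ is $(1-c_0)(1,1)+(c_0',c_0')$ with $c_0'=(1+\ti c_0)^{-1}\ti c_0=\ti c_0/(1+\ti c_0)$, so $(1-c_0)+c_0'=(1+\ti c_0)^{-1}+\ti c_0(1+\ti c_0)^{-1}=1$, giving $(1,1)$.

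The main obstacle is bookkeeping rather than conceptual: one has to keep straight the three scalar factors $(1-c_0)$, $(1+\ti c_0)^{-1}$, and $\ti c_0/(1+\ti c_0)$, and verify carefully that the Cauchy-transform term in \eqref{eq:mOm} does not interfere with the residue conditions at $\pm\I\kappa$. The latter is the only place where the specific choice of the modified kernel $\Omega_\kappa$ (rather than the ordinary Cauchy kernel $\Omega_\infty$) is essential, and it is precisely property \eqref{eq:Cnorm} that makes the argument go through. Uniqueness in the forward direction is likewise not routine: it relies crucially on the symmetry uniqueness theorem of Section~\ref{sec:uni}, without which the identification $n=(1-c_0)m_0$ would fail.
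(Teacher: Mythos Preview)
Your proposal is correct and follows essentially the same route as the paper's proof: both establish the additive jump $m_+-m_-=\mu(w_++w_-)$, subtract off the Cauchy integral, and identify the remainder via the one-soliton uniqueness statement of Lemma~\ref{lem:singlesoliton}; the converse in both cases unwinds the boundary-value computation \eqref{eq:singtorhp} to recover $m_\pm=\mu b_\pm$. The only cosmetic difference is that the paper packages the remainder as $m-\ti m$ with normalization $(0,0)$ and invokes the vanishing part of Lemma~\ref{lem:singlesoliton} directly, whereas you normalize $n$ to $(1-c_0)(1,1)$ and appeal to the scaled uniqueness---logically equivalent, since the former is the $\alpha=0$ case of the latter. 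Your treatment of the converse (checking pole conditions, symmetry, and the normalization arithmetic $(1-c_0)+c_0'=1$ explicitly) is in fact more detailed than the paper's, which leaves those verifications to the reader.
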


\begin{proof}
If $m$ solves \eqref{eq:rhp5m} and we set $\mu = m_\pm b_\pm^{-1}$,
then $m$ satisfies an additive jump given by
\[
m_+ - m_- = \mu (w_+ + w_-).
\]
Hence, if we denote the left hand side of \eqref{eq:mOm} by $\ti{m}$, both functions satisfy the same additive
jump. Furthermore, Hypothesis~\ref{hyp:sym} implies that $\mu$ is symmetric and hence so is $\ti{m}$. 
Using \eqref{eq:Cnorm} we also see that $\ti{m}$ satisfies the same pole conditions as $m_0$. In summary,
$m-\ti{m}$ has no jump and solves \eqref{eq:rhp5m} with $v\equiv \id$ except for the normalization which is given by 
$\lim_{k\to\infty} m(\I k)-\ti{m}(\I k)=(0 \quad 0)$. Hence Lemma~\ref{lem:singlesoliton} implies $m-\ti{m} =0$.

Moreover, if $m$ is given by \eqref{eq:mOm}, then \eqref{eq:cpcm} implies
\begin{align} \label{eq:singtorhp}
m_\pm &= (1-c_0)m_0 + C_\pm(\mu w_-) + C_\pm(\mu w_+) \\
\nn &= (1-c_0)m_0 + C_w(\mu) \pm \mu w_\pm \\
\nn &= (1-c_0)m_0 - (\id - C_w) \mu + \mu b_\pm.
\end{align}
From this we conclude that $\mu = m_\pm b_\pm^{-1}$ solves \eqref{eq:sing4muc}.

Conversely, if $\ti{\mu}$ solves \eqref{eq:sing4mu}, then set
\[
\ti{m}(k) = m_0(k) + \frac{1}{2\pi\I} \int_\Sigma \ti{\mu}(s) (w_+(s) + w_-(s)) \Omega_\zeta(s,k),
\]
and the same calculation as in \eqref{eq:singtorhp} implies
$\ti{m}_\pm= \ti{\mu} b_\pm$, which shows that $m = (1+\ti{c}_0)^{-1} \ti{m}$ solves the
Riemann--Hilbert problem \eqref{eq:rhp5m}.
\end{proof}
 
\begin{remark}
In our case $m_0(k)\in L^\infty_s(\Sigma)$, but $m_0(k)$ is not square integrable and so 
$\mu\in L^2_s(\Sigma)+L^\infty_s(\Sigma)$ in general.
\end{remark} 
 
Note also that in the special case $\gamma=0$ we have $m_0(k)= \rI$ and
we can choose $\kappa$ as we please, say $\kappa=\infty$ such that $c_0=\ti{c}_0=0$
in the above theorem.

Hence we have a formula for the solution of our Riemann--Hilbert problem $m(k)$ in terms of
$m_0+(\id - C_w)^{-1} C_w m_0$ and this clearly raises the question of bounded
invertibility of $\id - C_w$ as  a map from $L_s^2(\Sigma)\to L_s^2(\Sigma)$.
 This follows from Fredholm theory (cf.\ e.g. \cite{zh}):

\begin{lemma}
Assume Hypothesis~\ref{hyp:sym2}.
The operator $\id-C_w$ is Fredholm of index zero,
\be
\ind(\id-C_w) =0.
\ee
\end{lemma}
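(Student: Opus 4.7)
The plan is to prove Fredholmness and compute the index by the Beals--Coifman--Zhou technique adapted to the symmetric setting, proceeding in two stages: first produce a two-sided inverse of $\id - C_w$ modulo compact operators, then use a norm-continuous homotopy to locate the index at zero.

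I would begin by exploiting that $v = b_-^{-1} b_+$ implies $v^{-1} = b_+^{-1} b_-$, which can be rewritten as a factorization of the form $v^{-1} = \hat b_-^{-1} \hat b_+$ with $\hat b_\pm = \id \pm \hat w_\pm$ and $\hat w_\pm = -w_\mp$. The symmetry condition \eqref{eq:wpmsym} is preserved under the swap $w \leftrightarrow \hat w$, so $C_{\hat w}$ also restricts to a bounded operator on $L^2_s(\Sigma)$ with norm controlled by \eqref{estnormcw}. A direct computation using the Plemelj--Sokhotsky identity $C_+ - C_- = \id$ then yields
\[
(\id - C_w)(\id - C_{\hat w}) = \id + T,
\]
where $T$ is given by explicit double Cauchy integrals built from products of $w_+$ and $w_-$. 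On contours satisfying Hypothesis \ref{hyp:sym2}, $T$ is compact on $L^2_s(\Sigma)$; this is checked by approximating $w_\pm$ in $L^2 \cap L^\infty$ by smooth compactly supported functions (for which $T$ has a Hilbert--Schmidt kernel) and passing to the limit using \eqref{estnormcw}. The analogous identity with the factors in reverse order is obtained by the same argument, so by Atkinson's theorem $\id - C_w$ is Fredholm on $L^2_s(\Sigma)$.

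To pin down the index I would use the homotopy $t \mapsto \id - C_{tw}$ for $t \in [0,1]$. The scaled pair $(tw_-, tw_+)$ satisfies Hypothesis \ref{hyp:sym2} at every $t$, so the preceding paragraph shows $\id - C_{tw}$ is Fredholm for all $t$, while \eqref{estnormcw} makes $t \mapsto \id - C_{tw}$ continuous in operator norm. Since the Fredholm index is locally constant under norm-continuous perturbations,
\[
\ind(\id - C_w) = \ind(\id - C_0) = \ind(\id) = 0.
\]

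The chief obstacle in this scheme is the compactness of the remainder $T$ in the composition identity. On contours with self-intersections or endpoints the naive bilinear estimates on products $C_\pm M_{w_\pm} C_\pm M_{w_\pm}$ are not enough: one must use the particular algebraic form of $T$ (a commutator-type object in the $C_\pm$ and multiplications by $w_\pm$) together with the finite-intersection clause of Hypothesis \ref{hyp:sym2} to reduce to Hilbert--Schmidt integral operators after an approximation argument. These details are precisely what the cited work of Zhou \cite{zh} provides, which is why the lemma is stated by appeal to that reference.
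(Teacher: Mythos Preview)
Your proposal is correct and follows precisely the Beals--Coifman--Zhou argument that the paper invokes; note that the paper itself gives no proof at all for this lemma but simply cites \cite{zh}, so your sketch is in fact more detailed than what appears in the paper. The parametrix construction via $\hat w_\pm = -w_\mp$, the compactness of the remainder $T$, and the homotopy $t\mapsto \id - C_{tw}$ are exactly the ingredients of Zhou's proof, and you correctly identify that the compactness step near self-intersections is the delicate point handled in \cite{zh}.
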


By the Fredholm alternative, it follows that to show the bounded invertibility of $\id-C_w$
we only need to show that $\ker (\id-C_w) =0$.

We are interested in comparing a Riemann--Hilbert problem for which $\|w\|_\infty$ and $\|w\|_2$
is small with the one-soliton problem.
For such a situation we have the following result:

\begin{theorem}\label{thm:remcontour}
Fix a contour $\Sigma$ and choose $\kappa$, $\gam=\gam^t$, $v^t$ depending on some parameter $t\in\R$ such that
Hypothesis~\ref{hyp:sym2} holds.

Assume that $w^t$ satisfies
\be
\|w^t \|_{\infty} \leq \rho(t) \text{ and } 
\|w^t \|_2 \leq \rho(t)
\ee
for some function $\rho(t) \to 0$ as $t\to\infty$. Then $(\id-C_{w^t})^{-1}: L^2_s(\Sigma)\to L^2_s(\Sigma)$ exists
for sufficiently large $t$ and the solution $m(k)$ of the Riemann--Hilbert problem \eqref{eq:rhp5m} differs
from the one-soliton solution $m_0^t(k)$ only by $O(\rho(t))$, where the error term depends on the distance of $k$ to
$\Sigma \cup \{\pm\I\kappa\}$.
\end{theorem}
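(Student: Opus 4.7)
The plan is to invoke Theorem~\ref{thm:cauchyop} to recast the Riemann--Hilbert problem as the singular integral equation \eqref{eq:sing4muc} and then exploit the smallness of $w^t$ via a Neumann series. First I would show that for $t$ large enough, the operator $\id - C_{w^t}$ is boundedly invertible on $L^2_s(\Sigma)$ with uniformly bounded inverse. Indeed, by \eqref{estnormcw} we have $\|C_{w^t}\|_{L^2_s\to L^2_s}\le c\,\|w^t\|_\infty\le c\,\rho(t)$, which is less than $1/2$ for $t$ sufficiently large; the Neumann series then yields $\|(\id-C_{w^t})^{-1}\|\le 2$ uniformly.

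Next I would estimate $\mu^t-(1-c_0^t)m_0^t$ using \eqref{eq:sing4muc}. Since $\Sigma$ is bounded and keeps a positive distance to $\pm\I\kappa$, the one-soliton solution $m_0^t$ is uniformly bounded on $\Sigma$ (in particular $\|m_0^t\|_{L^\infty(\Sigma)}\le M$ for some $M$ independent of $t\ge t_0$, since $f(k)$ stays bounded as $\gamma^t$ ranges over $[0,\infty]$ on compact subsets avoiding $\I\kappa$). By the $L^\infty\to L^2$ estimate in \eqref{estnormcw} we get $\|C_{w^t}m_0^t\|_{L^2}\le c\,\|w^t\|_2\,\|m_0^t\|_\infty=O(\rho(t))$, and hence
\[
\|\mu^t-(1-c_0^t)m_0^t\|_{L^2}\le 2\,|1-c_0^t|\cdot c\,\rho(t)\,M .
\]

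Third, I would bootstrap to control $c_0^t$. Since $\Omega_\kappa(s,\infty)$ is bounded on $\Sigma$, Cauchy--Schwarz and the triangle inequality give
\[
|c_0^t|\le c'\|\mu^t\|_{L^2}\|w^t\|_2\le c'\rho(t)\bigl(|1-c_0^t|\,c\rho(t)M+M|\Sigma|^{1/2}\bigr),
\]
so for $t$ large we deduce $c_0^t=O(\rho(t))$, and consequently $\|\mu^t-m_0^t\|_{L^2}=O(\rho(t))$. Finally, writing
\[
m(k)-m_0^t(k)=-c_0^t\,m_0^t(k)+\frac{1}{2\pi\I}\int_\Sigma \mu^t(s)\bigl(w_+^t(s)+w_-^t(s)\bigr)\,\Omega_\kappa(s,k),
\]
and using that for $k$ at positive distance from $\Sigma\cup\{\pm\I\kappa\}$ the kernel $\Omega_\kappa(\cdot,k)$ is bounded in $L^2(\Sigma)$, Cauchy--Schwarz together with the above bounds yields $|m(k)-m_0^t(k)|=O(\rho(t))$, with implicit constant depending on the distance of $k$ to $\Sigma\cup\{\pm\I\kappa\}$.

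The main obstacle is the coupling between $\mu^t$ and $c_0^t$ in the integral equation: one has to close a self-consistency loop, since the natural estimate for $\mu^t$ carries a factor $|1-c_0^t|$ while $c_0^t$ itself is defined through $\mu^t$. Once Step 1 secures a uniform resolvent bound, however, the bootstrap above is routine. A minor technical point worth flagging is that $m_0^t$ must be bounded on $\Sigma$ uniformly in $t$; this is guaranteed by the hypothesis that $\Sigma$ is bounded away from $\I\kappa$, so that the possible poles of $f(k)$ stay off the contour regardless of the value of $\gamma^t\in[0,\infty]$.
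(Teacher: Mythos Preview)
Your argument is correct, but you have worked harder than necessary by choosing equation \eqref{eq:sing4muc} instead of \eqref{eq:sing4mu}. The paper's proof avoids the bootstrap entirely: it solves the \emph{decoupled} equation $(\id-C_{w^t})(\ti{\mu}^t-m_0^t)=C_{w^t}m_0^t$ for $\ti{\mu}^t$, reads off $\|\ti{\mu}^t-m_0^t\|_{L^2}=O(\rho(t))$ and $\ti{c}_0^t=O(\rho(t))$ directly, and only then passes to $\mu^t=(1+\ti{c}_0^t)^{-1}\ti{\mu}^t$ and $c_0^t=1-(1+\ti{c}_0^t)^{-1}=O(\rho(t))$. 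Because $\ti{c}_0^t$ is defined purely in terms of $\ti{\mu}^t$, there is no self-consistency loop to close. Your route through \eqref{eq:sing4muc} instead couples $\mu^t$ and $c_0^t$ and forces the absorption argument $|c_0^t|\le A\rho(t)(1+|c_0^t|)+B\rho(t)$; this works, but it is an avoidable detour.

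There is also a small logical advantage to the paper's order of operations: solving \eqref{eq:sing4mu} and then checking $\ti{c}_0^t\ne -1$ invokes the \emph{converse} direction of Theorem~\ref{thm:cauchyop} and thereby \emph{produces} the solution $m$, whereas your argument begins from \eqref{eq:sing4muc}, which presupposes that a solution $m$ already exists. Since the theorem is also meant to supply existence of $m$ for large $t$, you should either switch to the $\ti{\mu}$ formulation or add a sentence noting that once $(\id-C_{w^t})^{-1}$ exists, \eqref{eq:sing4mu} has a unique solution $\ti{\mu}^t$ with $\ti{c}_0^t=O(\rho(t))\ne -1$, so the converse part of Theorem~\ref{thm:cauchyop} furnishes $m$.
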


\begin{proof}
By \eqref{estnormcw} we conclude that
\[
\|C_{w^t}\|_{L^2_s \to L^2_s} =O(\rho(t)) \quad\text{respectively}\quad
\|C_{w^t}\|_{L^\infty_s \to L^2_s} =O(\rho(t))
\]
Thus, by the Neumann series, we infer that $(\id-C_{w^t})^{-1}$ exists for sufficiently large $t$ and
\[
\|(\id-C_{w^t})^{-1}-\id\|_{L^2_s \to L^2_s} = O(\rho(t)).
\]
Next we observe that 
\[
\ti{\mu}^t - m_0^t = (\id-C_{w^t})^{-1}C_{w^t} m_0^t
\in L^2_s(\Sigma)
\] 
implying
\[
\| \ti{\mu}^t - m_0^t \|_{L^2_s}  = O(\rho(t)) \quad\text{and}\quad \ti{c}_0^t = O(\rho(t))
\]
since $\|m_0^t\|_\infty = O(1)$ (note $\ti{\mu}_0^t = \mu_0^t = m_0^t$).
Consequently $c_0^t = O(\rho(t))$ and thus $m^t(k) - m_0^t(k) = O(\rho(t))$
uniformly in $k$ as long as it stays a positive distance away from $\Sigma \cup \{\pm\I\kappa\}$.
\end{proof}

\bigskip

\noindent{\bf Acknowledgments.} We want to thank Ira Egorova and Helge Kr\"uger for helpful discussions
and the referees for valuable hints with respect to the literature.

\end{document}